\definecolor{myred}{rgb}{0.8,0.3,0.2}
\definecolor{myredfill}{rgb}{1,0.93,0.93}
\definecolor{myred2}{rgb}{0.6,0.5,0.3}
\definecolor{myredfill2}{rgb}{0.93,0.83,0.83}
\definecolor{myblue}{rgb}{0.05,0.25,0.7}
\definecolor{mybluefill}{rgb}{0.92,0.98,1}
\definecolor{mygreen}{rgb}{0.3,0.6,0.4}
\definecolor{mygreenfill}{rgb}{0.93,1,0.97}
\definecolor{mygrey}{rgb}{0.4,0.4,0.4}
\definecolor{mygreyfill}{rgb}{0.95,0.95,0.95}
\definecolor{mypurple}{rgb}{0.6,0.2,0.6}
\definecolor{mypurplefill}{rgb}{0.99,0.94,0.99}
\definecolor{myyellow}{rgb}{9,0.9,0.3}
\definecolor{myyellowfill}{rgb}{1,1,0.92}
\definecolor{myorange}{rgb}{9,0.7,0.3}
\definecolor{myorangefill}{rgb}{1,0.95,0.92}
\definecolor{mybrown}{rgb}{0.4,0.25,0.1}
\definecolor{mybrownfill}{rgb}{1,0.9,0.8}
\definecolor{mediumgrey}{rgb}{0.7,0.7,0.7}
\definecolor{linkred}{rgb}{0.6,0.1,0.1}
\definecolor{citeblue}{rgb}{0.2,0.35,0.75}
\definecolor{urlblue}{rgb}{0.2,0.25,0.45}
\DeclareMathOperator*{\functioncomposition}{\bigcirc}
\DeclareMathOperator*{\tensorcomposition}{\bigotimes}
\newtheorem{theorem}{Theorem}
\newtheorem{lemma}{Lemma}
\newtheorem{corollary}{Corollary}
\newtheorem{example}{Example}
\newtheorem{observation}{Observation}
\begin{document}

\title{\Large{Extracting Quantum Dynamical Resources: Consumption of Non-Markovianity for Noise Reduction}}

\author[1]{Graeme D. Berk$^*$}
\author[2]{Simon Milz}
\author[1]{Felix A. Pollock}
\author[1]{Kavan Modi}

\affil[1]{School of Physics and Astronomy, Monash University, Victoria 3800, Australia}
\affil[2]{Institute for Quantum Optics and Quantum Information\\ Austrian Academy of Sciences, Boltzmanngasse 3, A-1090 Vienna, Austria}

\date{\today}

\maketitle

\begin{abstract}
Noise is possibly the most formidable challenge for quantum technologies. As such, a great deal of effort is dedicated to developing methods for noise reduction. One remarkable achievement in this direction is dynamical decoupling; it details a clear set of instructions for counteracting the effects of quantum noise. Yet, the domain of its applicability remains limited to devices where exercising fast control is possible. In practical terms, this is highly limiting and there is a growing need for better noise reduction tools. Here we take a significant step in this direction, by identifying the crucial ingredients required for noise suppression and the development of methods that far outperform traditional dynamical decoupling techniques. Using resource theoretic methods, we show that the key resource responsible for the efficacy of dynamical decoupling, and related protocols, is non-Markovianity (or temporal correlations). Using this insight, we then propose two methods to identify optimal pulse sequences for noise reduction. With an explicit example, we show that our methods enable a more optimal exploitation of temporal correlations, and extend the timescales at which noise suppression is viable by at least two orders of magnitude. Importantly, the corresponding tools are built on operational grounds and are easily implemented in the current generation of quantum devices.
\end{abstract}

Even the most promising platforms for quantum computing~\cite{aquantumengineersguidetosuperconducting, trappedioncomputing} are inherently plagued with complex quantum noise~\cite{quantumnoise, nmrev, lili-hall-wiseman, quantumstochasticprocessesandquantumnon}. This noise must be significantly reduced to meet the threshold required for error corrected quantum computing~\cite{PhysRevLett.96.050504, quantumerrorcorrectionforbeginners, quantumerrorcorrectionforquantummemories}. Reduced noise will also significantly enhance the performance of current generation devices~\cite{arXiv:2101.08448}. These facts have lead to a flurry of techniques for noise characterisation and control~\cite{demonstrationofnonmarkovianprocess, PRXQuantum.2.030315}. While many different approaches exist, the general goal is to maximise the retained information between an input and the final output of the dynamics, for an arbitrary underlying noise process, by means of active experimental interventions.

A prototypical example of this category is dynamical decoupling (DD)~\cite{dynamicaldecouplingofopenquantumsystems, ddreview, dynamicaldecouplingofunbounded, demonstrationoffidelityimprovement,preservingelectronspincoherence, dynamicaldecouplingofasingleelectronspinatroom,suppressionofcrosstalk}, where a fixed sequence of unitary transformations is applied, having the effect of cancelling the detrimental influence of the environment. Achieving high-efficacy of such methods requires fast control, which is usually not practical. In such instances the feasibility of DD depends on the specific details of the noise process. For example, it is known that amenability to DD is linked~\cite{canquantummarkovevolutions} to the ability of the environment to carry a memory, also known as non-Markovianity~\cite{nonmarkovianquantumprocesses, 1367-2630-18-6-063032}. However, this connection is not well understood~\cite{nonmarkoviannoisethatcannotbe} and there is a dire need for a universal understanding of methods like DD, which is thus far missing~\cite{dynamicaldecouplingefficiencyversusquantumnonmarkovianity, nonmarkoviannoisethatcannotbe}.

In the search for a birdseye view of quantum control and noise mitigation techniques, quantum resource theories offer a promising lens. For example, in DD, memory (the resource) is expended by means of experimentally implementable (i.e., free) operations, to minimize the unwanted influence of the environment. More generally, a resource theory consists of a set of resource objects, and a set of transformations between those resource objects. The value and the inter-convertibility of resources can be quantified with monotones~\cite{review}. While resource theories were originally envisaged to quantify the utility of properties of \textit{static} resources, like quantum states~\cite{entanglementtheoryandthesecondlaw, operationalresourcetheoryofcoherence, theresourcetheoryofinformationalnonequilibriuminthermodynamics}, they have recently found manifold applications for dynamical objects, such as trace preserving mappings between quantum states, known as quantum channels~\cite{operationalresourcetheoryofquantumchannels, comparisonofquantumchannelsbysuperchannels, resourcetheoriesofquantumchannels}. 

Leveraging on this resource theoretic angle, our aim is to determine, quantify, and pin-point the amenability of noise processes to techniques like DD. Existing approaches, like those based solely on quantum channels, suffer from the problem that they cannot account for intermediate interventions between the input states and output states they relate. Supermaps, the most general transformations of a quantum channel~\cite{supermaps}, cannot bypass this inherent constraint of channels either. Indeed, it is exactly the \textit{complex multitime correlations}~\cite{witnessingquantummemory, genuinemultipartiteentanglementintime} -- not fully taken into consideration by previous approaches -- that enable DD in the first place. Here, we employ process tensors~\cite{quantumstochasticprocessesandquantumnon, nonmarkovianquantumprocesses} which are specifically designed to account for experimental interventions \emph{between} the input and output, as is the case in, for example, DD setups. In fact, \textit{any} quantum process can be represented as a process tensor~\cite{kolmogorov}, rendering the approach completely general. Moreover, resource theories for quantum processes (RTQP), built around the process tensor formalism, have recently been developed~\cite{resourcetheoriesofmultitime}, thus allowing for a comprehensive and systematic study of quantum control and noise mitigation techniques.

While the RTQP brings us a step closer to identifying the core dynamical assets for tasks like DD, it suffers from a problem with monotonicity; a free transformation of one process into another cannot increase its value. However, this is seemingly required in noise reduction scenarios. Overcoming this apparent paradox necessitates one more ingredient -- temporal coarse-graining. Whose irreversibility property, detailed in Obs.~\ref{obs:irreversibility}, accounts for how greater control at the short timescale translates to noise suppression at the long timescale. In turn, this identifies the second vital resource besides memory for methods like DD: the timing and speed of control operations. Combining coarse-graining with RTQP results in \emph{resource theories of temporal resolution}, which unambiguously account for the resources needed for noise suppression. Beyond DD, our results can be readily applied to identify and quantify the resources in other quantum dynamical phenomena, e.g. the quantum Zeno effect (QZE), decoherence-free subspace (DFS), and even quantum error correction (QEC). Finally, our work lays the ground work for a theory for distillation and formation of quantum channel information transmission.

Below, we introduce a prototypical resource theory of temporal resolution for information preservation tasks $\mathsf{Q}_{\hat{\mathbbm{W}}}$, containing a rich structure of operationally significant sub-theories, distinguished by different levels of experimental control, and thus, different sets of free operations. In particular, the sub-theory $\mathsf{D}_{\hat{\mathbbm{W}}} \subset \mathsf{Q}_{\hat{\mathbbm{W}}}$ allows the conversion of correlations in time (non-Markovianity) into system-level correlations, `decoupling' the system from its environment. Thus, $\mathsf{D}_{\hat{\mathbbm{W}}}$ contains DD, as well as an optimal information preservation protocol built up within the paradigm of semi-definite optimisation. We illustrate the latter by numerically showing its supremacy over DD at multiple timescales, and demonstrating the close connection between temporal resolution and decoupling success. This result adds to the growing body of work towards minimising noise by characterising and harnessing the underlying process tensor~\cite{demonstrationofnonmarkovianprocess, nonmarkovianquantumprocesstomography, diagnosingtemporalquantum}.

\section{Resource Theories of Temporal Resolution} \label{sec:resourcetheoriesoftemporalresolution}

To formalise and make sense of scenarios like DD from a resource theoretic perspective, there are two prerequisites. Firstly, noise processes must be cast as resources, and DD must be seen as transformations of those noise processes, which we present in Sec.~\ref{sec:resourcetheoriesofmultitimeprocesses}. Secondly, there must be a mechanism by which DD sequences can be perceived as resource increasing, despite being free transformations in the respective experimental setups. This is enabled by the irreversibility property of temporal coarse-graining, presented in Sec.~\ref{sec:temporalcoarsegraining}.

Any quantum noise can be modelled as an evolution operator $\mathcal{T}^{se}_{t:0}$ jointly acting on the system ($s$) of interest and its environment ($e$) for some time $t$. The noise then manifests as correlations between $s$ and $e$, and the goal of DD, and related methods, is to minimise the build-up of these $se$ correlations and maximise the input-output correlations. While we only have access to $s$, remarkably DD can do just this by `averaging' out the influence of the $se$ interaction on the system. This is achieved by breaking up the process $\mathcal{T}^{se}_{t:0}$ between a whole number $n \in \mathbbm{W}$ ($\mathbbm{W}=\mathbbm{N}\cup \{ 0\}$) of \emph{intermediate} times $\hat{n}=\{ t_1 , \dots , t_n \}$, and applying control operations on $s$ at each of those times. Fig.~\ref{fig:definingobjects} shows a single-qubit noise process, broken up into four segments, and then subjected to two different control sequences. Importantly, in both cases the noise process is the same `comb', independent of the applied controls. In other words, the noise processes contain dynamical resources and with control operations we can extract them.

The noise process, with $n$ intermediate interventions, has a concise representation as a quantum comb~\cite{quantumnetworks}, also known as a process tensor $\mathbf{T}_{\hat{n}}$~\cite{quantumstochasticprocessesandquantumnon, nonmarkovianquantumprocesses, 1367-2630-18-6-063032}, consisting of sequences of $se$ evolution maps
\begin{equation} \label{eq:processtensor}
    \mathbf{T}_{\hat{n}}:= \text{tr}_{e} \circ_e \mathcal{T}^{se}_{t:t_{n}} \circ_e \dots \circ_e \mathcal{T}^{se}_{t_{1}:0} \circ_e \rho^{e}_0,
\end{equation}
where $\rho^{e}_0$ is an initial environment state, and $\circ_e$ denotes composition only on the $e$ Hilbert spaces. This ensures that the process tensor describes the multitime noise on $s$ alone without knowing the details of $e$, and incorporates all pertinent memory effects between different points in time. A key result of the process tensor framework is a necessary and sufficient condition for quantum Markovianity and operationally meaningful measures for quantum memory~\cite{operationalmarkovcondition, 1367-2630-18-6-063032}.

Exercising control over this process, e.g. the DD sequences (on $s$ alone), amounts to contraction\footnote[1]{This contraction is written as $\llbracket \mathbf{T}_{\hat{n}} | \mathbf{A}_{\hat{n}} \rrbracket:=\text{tr}_{\text{int}}\big\{\mathbf{T}_{\hat{n}} (\mathbbm{1}^{\text{in}} \otimes \mathbf{A}_{\hat{n}}^{T} \otimes \mathbbm{1}^{\text{out}} ) \big\}$, where $T$ is the transpose, `in' is the input Hilbert space, `int' corresponds to all intermediate spaces, and `out' is the output space.} of the above tensor with an analogous control tensor $\mathbf{A}_{\hat{n}}$, at intermediate times $\hat{n}$, to yield the quantum channel
\begin{gather} \label{eq:PTcontrol}
\mathbf{T}'_{\emptyset} := \llbracket \mathbf{T}_{\hat{n}} | \mathbf{A}_{\hat{n}} \rrbracket.
\end{gather}
We use a square bra-ket notation~\cite{resourcetheoriesofmultitime} to denote this action, where $\mathbf{A}_{\hat{n}}$, in general, contains both the logical gates of a computation and the pulse sequences requisite for noise reduction, like, e.g. DD. The ultimate goal of noise reduction methods is to maximise the input-output correlations of $\mathbf{T}'_{\emptyset}$, i.e., the output of the computation should be highly correlated with the input. 

In practice, however, standard noise reduction methods lose their effectiveness when the temporal resolution of the control is too low. This raises the question (see Fig.~\ref{fig:definingobjects}): are these methods working at the fundamental limits set by quantum mechanics, or are there more dynamical resources available for extraction, by as of yet untapped by noise mitigation techniques? We will show below what explicit resources are at one's disposal, and that a more efficient conversion of non-Markovianity into system-level coherence results in a significant lengthening in the timescales for noise suppression.

\begin{figure*}[ht!]
\centering
\begin{tikzpicture}[scale=0.24]
\draw[myred,fill=myredfill, thick,solid,rounded corners=4] (-2,3) -- (-2,4) -- (28,4) -- (28,2+0.1) -- (26-0.1,2+0.1) -- (26-0.1,0+0.1) -- (24+0.1,0+0.1) -- (24+0.1,2+0.1)-- (18-0.1,2+0.1) -- (18-0.1,0+0.1) -- (16+0.1,0+0.1) -- (16+0.1,2+0.1) -- (10-0.1,2+0.1) -- (10-0.1,0+0.1) -- (8+0.1,0+0.1) -- (8+0.1,2+0.1) -- (2-0.1,2+0.1) -- (2-0.1,0+0.1) -- (0+0.1,0+0.1) -- (0+0.1,2+0.1) -- (-2,2+0.1) -- (-2,3)   ;

\draw[black, very thick,solid] (-1,3) -- (27,3);
\draw[black, very thick,solid] (26.8,2.7) -- (27.2,3.3);
\draw[black, very thick,solid] (-3,1) -- (29,1);


\draw[myred,fill=myredfill,very thick,solid,rounded corners=2] (-1.7,3.7) rectangle (-0.3,2.3);
\draw[myred,fill=myredfill,very thick,solid,rounded corners=2] (0.3,3.7) rectangle (1.7,0.3);
\draw[myred,fill=myredfill,very thick,solid,rounded corners=2] (8.3,3.7) rectangle (9.7,0.3);
\draw[myred,fill=myredfill,very thick,solid,rounded corners=2] (16.3,3.7) rectangle (17.7,0.3);
\draw[myred,fill=myredfill,very thick,solid,rounded corners=2] (24.3,3.7) rectangle (25.7,0.3);


\draw[] (-1,3) node {\footnotesize $\rho^{e}_0$};
\draw[] (1,2) node[rotate=0] {\small $\mathcal{T}$};
\draw[] (9,2) node[rotate=0] {\small $\mathcal{T}$};
\draw[] (17,2) node[rotate=0] {\small $\mathcal{T}$};
\draw[] (25,2) node[rotate=0] {\small $\mathcal{T}$};
\draw[] (-2.7,1.5) node {\small $\rho_\text{in}$};
\draw[] (29.1,1.5) node {\small $\rho_\text{out}$};

\draw[myblue,line width=3pt] (4.3,1) -- (5.7,1);

\draw[myblue,line width=3pt] (12.3,1) -- (13.7,1);

\draw[myblue,line width=3pt] (20.3,1) -- (21.7,1);

\draw[black, very thick,solid] (3+16+4,-2-1-1) -- (7+16+4,-2-1-1);
\draw[] (2.9-0.8+16+4,-2-1-1) node {\footnotesize $\rho_\text{in}$};
\draw[] (7.4+0.8+16+4,-2-1-1) node {\footnotesize $\rho_\text{out}$};

\draw[myred2,fill=myredfill2,very thick,solid,rounded corners=2] (4+16+4,2-3-1-1) rectangle (6+16+4,0-3-1-1);

\draw[] (5+0.1+16+4,0-1-0.5) node[rotate=90] {\Large $=$};

\draw[] (5+16+4,-2-1-1) node[rotate=0] { $\mathbf{T_{\emptyset}}$};

\draw[black, very thick,solid] (3+16+20-4,-2-1-1) -- (7+16+20-4,-2-1-1);
\draw[] (2.9-0.8+16+20-4,-2-1-1) node {\footnotesize $\rho_\text{in}$};
\draw[] (7.4+0.8+16+20-4,-2-1-1) node {\footnotesize $\rho_\text{out}$};

\draw[myred2,fill=myredfill2,very thick,solid,rounded corners=2] (4+16+20-4,2-3-1-1) rectangle (6+16+20-4,0-3-1-1);

\draw[] (5+0.1+16+20-4,0-1-0.5) node[rotate=90] {\Large $=$};

\draw[] (5+16+20-4,-2-1-1) node[rotate=0] { $\mathbf{T'_{\emptyset}}$};

\draw[] (13+0.1+18,-2-2) node[rotate=0] {\Huge $\mathbf{<}$};

\draw[] (13-0.1+18,-2-2-3) node[rotate=0] {(a) Process Inequality};




\draw[myred,fill=myredfill, thick,solid,rounded corners=4] (-2+36,3) -- (-2+36,4) -- (28+36,4) -- (28+36,2+0.1) -- (26-0.1+36,2+0.1) -- (26-0.1+36,0+0.1) -- (24+0.1+36,0+0.1) -- (24+0.1+36,2+0.1)-- (18-0.1+36,2+0.1) -- (18-0.1+36,0+0.1) -- (16+0.1+36,0+0.1) -- (16+0.1+36,2+0.1) -- (10-0.1+36,2+0.1) -- (10-0.1+36,0+0.1) -- (8+0.1+36,0+0.1) -- (8+0.1+36,2+0.1) -- (2-0.1+36,2+0.1) -- (2-0.1+36,0+0.1) -- (0+0.1+36,0+0.1) -- (0+0.1+36,2+0.1) -- (-2+36,2+0.1) -- (-2+36,3)   ;

\draw[black, very thick,solid] (-1+36,3) -- (27+36,3);
\draw[black, very thick,solid] (26.8+36,2.7) -- (27.2+36,3.3);
\draw[black, very thick,solid] (-3+36,1) -- (29+36,1);

\draw[myred,fill=myredfill,ultra thick,solid,rounded corners=2] (-1.7+36,3.7) rectangle (-0.3+36,2.3);
\draw[myred,fill=myredfill,ultra thick,solid,rounded corners=2] (0.3+36,3.7) rectangle (1.7+36,0.3);
\draw[myred,fill=myredfill,ultra thick,solid,rounded corners=2] (8.3+36,3.7) rectangle (9.7+36,0.3);
\draw[myred,fill=myredfill,ultra thick,solid,rounded corners=2] (16.3+36,3.7) rectangle (17.7+36,0.3);
\draw[myred,fill=myredfill,ultra thick,solid,rounded corners=2] (24.3+36,3.7) rectangle (25.7+36,0.3);

\draw[mypurple,fill=mypurplefill,very thick,solid,rounded corners=2] (-1.7+36,1.7) rectangle (-0.3+36,0.3);
\draw[mypurple,fill=mypurplefill,very thick,solid,rounded corners=2] (4.3-2+36,1.7) rectangle (5.7-2+36,0.3);
\draw[mypurple,fill=mypurplefill,very thick,solid,rounded corners=2] (4.3+2+36,1.7) rectangle (5.7+2+36,0.3);
\draw[mypurple,fill=mypurplefill,very thick,solid,rounded corners=2] (12.3-2+36,1.7) rectangle (13.7-2+36,0.3);
\draw[mypurple,fill=mypurplefill,very thick,solid,rounded corners=2] (12.3+2+36,1.7) rectangle (13.7+2+36,0.3);
\draw[mypurple,fill=mypurplefill,very thick,solid,rounded corners=2] (18.3+36,1.7) rectangle (19.7+36,0.3);
\draw[mypurple,fill=mypurplefill,very thick,solid,rounded corners=2] (20.3+2+36,1.7) rectangle (21.7+2+36,0.3);
\draw[mypurple,fill=mypurplefill,very thick,solid,rounded corners=2] (26.3+36,1.7) rectangle (27.7+36,0.3);

\draw[] (-1+36,3) node {\footnotesize $\rho^{e}_0$};
\draw[] (1+36,2) node[rotate=0] {\small $\mathcal{T}$};
\draw[] (9+36,2) node[rotate=0] {\small $\mathcal{T}$};
\draw[] (17+36,2) node[rotate=0] {\small $\mathcal{T}$};
\draw[] (25+36,2) node[rotate=0] {\small $\mathcal{T}$};
\draw[] (-1+36,1) node[rotate=0] {\small $\mathcal{I}$};
\draw[] (5-2+36,1) node[rotate=0] {\small $\mathcal{I}$};
\draw[] (5+2+36,1) node[rotate=0] {\small $\mathcal{X}$};
\draw[] (13-2+36,1) node[rotate=0] {\small $\mathcal{X}$};
\draw[] (13+2+36,1) node[rotate=0] {\small $\mathcal{Y}$};
\draw[] (21-2+36,1) node[rotate=0] {\small $\mathcal{Y}$};
\draw[] (21+2+36,1) node[rotate=0] {\small $\mathcal{Z}$};
\draw[] (27+36,1) node[rotate=0] {\small $\mathcal{Z}$};
\draw[] (-2.7+36,1.5) node {\small $\rho_\text{in}$};
\draw[] (29.1+36,1.5) node {\small $\rho_\text{out}$};



\draw[myblue,line width=3pt] (4.3+36,1) -- (5.7+36,1);

\draw[myblue,line width=3pt] (12.3+36,1) -- (13.7+36,1);

\draw[myblue,line width=3pt] (20.3+36,1) -- (21.7+36,1);





\end{tikzpicture}

\begin{minipage}{0.3\linewidth}
\vspace{-3.5cm}
        \centering
\begin{tikzpicture}[scale=0.75]


    \draw[] (0,3) node {\scriptsize $t_{16}$\ $t_0$};
    
    \draw[] (1.15,2.772) node {\scriptsize $t_1$};
    
    \draw[] (2.12,2.12) node {\scriptsize $t_2$};
    
    \draw[] (2.772,1.15) node {\scriptsize $t_3$};
    
    \draw[] (3,0) node {\scriptsize $t_4$};
    
    \draw[] (2.772,-1.15) node {\scriptsize $t_5$};
    
    \draw[] (2.12,-2.12) node {\scriptsize $t_6$};
    
    \draw[] (1.15,-2.772) node {\scriptsize $t_7$};
    
    \draw[] (0,-3) node {\scriptsize $t_8$};
    
    
    \draw[] (-1.15,2.772) node {\scriptsize $t_{15}$};
    
    \draw[] (-2.12,2.12) node {\scriptsize $t_{14}$};
    
    \draw[] (-2.772,1.15) node {\scriptsize $t_{13}$};
    
    \draw[] (-3,0) node {\scriptsize $t_{12}$};
    
    \draw[] (-2.772,-1.15) node {\scriptsize $t_{11}$};
    
    \draw[] (-2.12,-2.12) node {\scriptsize $t_{10}$};
    
    \draw[] (-1.15,-2.772) node {\scriptsize $t_9$};


    \draw [rotate=360/32] [mypurple,thick,domain=-0.8224:0.8224,smooth] plot ({(0.45*\x)}, {(\x)^2+2});
    
    \draw [rotate=360/32+360/16] [mypurple,thick,domain=-0.8224:0.8224,smooth] plot ({(0.45*\x)}, {(\x)^2+2});
    
    \draw [rotate=360/32+2*360/16] [mypurple,thick,domain=-0.8224:0.8224,smooth] plot ({(0.45*\x)}, {(\x)^2+2});
    
    \draw [rotate=360/32+3*360/16] [mypurple,thick,domain=-0.8224:0.8224,smooth] plot ({(0.45*\x)}, {(\x)^2+2});
    
    \draw [rotate=360/32+4*360/16] [mypurple,thick,domain=-0.8224:0.8224,smooth] plot ({(0.45*\x)}, {(\x)^2+2});
    
    \draw [rotate=360/32+5*360/16] [mypurple,thick,domain=-0.8224:0.8224,smooth] plot ({(0.45*\x)}, {(\x)^2+2});
    
    \draw [rotate=360/32+6*360/16] [mypurple,thick,domain=-0.8224:0.8224,smooth] plot ({(0.45*\x)}, {(\x)^2+2});
    
    \draw [rotate=360/32+7*360/16] [mypurple,thick,domain=-0.8224:0.8224,smooth] plot ({(0.45*\x)}, {(\x)^2+2});
    
    \draw [rotate=360/32+8*360/16] [mypurple,thick,domain=-0.8224:0.8224,smooth] plot ({(0.45*\x)}, {(\x)^2+2});
    
    \draw [rotate=360/32+9*360/16] [mypurple,thick,domain=-0.8224:0.8224,smooth] plot ({(0.45*\x)}, {(\x)^2+2});
    
    \draw [rotate=360/32+10*360/16] [mypurple,thick,domain=-0.8224:0.8224,smooth] plot ({(0.45*\x)}, {(\x)^2+2});
    
    \draw [rotate=360/32+11*360/16] [mypurple,thick,domain=-0.8224:0.8224,smooth] plot ({(0.45*\x)}, {(\x)^2+2});
    
    \draw [rotate=360/32+12*360/16] [mypurple,thick,domain=-0.8224:0.8224,smooth] plot ({(0.45*\x)}, {(\x)^2+2});
    
    \draw [rotate=360/32+13*360/16] [mypurple,thick,domain=-0.8224:0.8224,smooth] plot ({(0.45*\x)}, {(\x)^2+2});
    
    \draw [rotate=360/32+14*360/16] [mypurple,thick,domain=-0.8224:0.8224,smooth] plot ({(0.45*\x)}, {(\x)^2+2});
    
    \draw [rotate=360/32+15*360/16] [mypurple,thick,domain=-0.8224:0.8224,smooth] plot ({(0.45*\x)}, {(\x)^2+2});
    

    
    
    
    
    
    \draw [rotate=-0*360/16-360/32] [green,thick,domain=0.75:2,smooth] plot ({0}, {\x});
    
    \draw [rotate=-7*360/16-360/32] [green,thick,domain=0.75:2,smooth] plot ({0}, {\x});
    
    \draw [rotate=-11*360/16-360/32] [green,thick,domain=0.75:2,smooth] plot ({0}, {\x});
    
    
    
    \draw [rotate=-3*360/16] [mygreen,thick,domain=-1:1,smooth] plot ({(0.38*\x)}, {1.21*(\x)^2+0.75});
    
    \draw [rotate=-7*360/16] [mygreen,thick,domain=-1:1,smooth] plot ({(0.38*\x)}, {1.21*(\x)^2+0.75});
    
    \draw [rotate=-11*360/16] [mygreen,thick,domain=-1:1,smooth] plot ({(0.38*\x)}, {1.21*(\x)^2+0.75});
    
    \draw [rotate=-13*360/16] [mygreen,thick,domain=-1:1,smooth] plot ({(0.38*\x)}, {1.21*(\x)^2+0.75});

    
    \draw [rotate=-2*360/16-360/32] [myyellow,thick,domain=-1:1,smooth] plot ({(0.76*\x)}, {1.1*(\x)^2+0.75});
    
    \draw [rotate=-6*360/16-360/32] [myyellow,thick,domain=-1:1,smooth] plot ({(0.76*\x)}, {1.1*(\x)^2+0.75});
    
    \draw [rotate=-10*360/16-360/32] [myyellow,thick,domain=-1:1,smooth] plot ({(0.76*\x)}, {1.1*(\x)^2+0.75});
    
    \draw [rotate=-11*360/16-360/32] [myyellow,thick,domain=-1:1,smooth] plot ({(0.76*\x)}, {1.1*(\x)^2+0.75});
    
    \draw [rotate=-13*360/16-360/32] [myyellow,thick,domain=-1:1,smooth] plot ({(0.76*\x)}, {1.1*(\x)^2+0.75});
    
    \draw [rotate=-14*360/16-360/32] [myyellow,thick,domain=-1:1,smooth] plot ({(0.76*\x)}, {1.1*(\x)^2+0.75});

    
    \draw [rotate=-5*360/16] [myorange,thick,domain=-1:1,smooth] plot ({(1.1*\x)}, {0.92*(\x)^2+0.75});

     \draw [rotate=-10*360/16] [myorange,thick,domain=-1:1,smooth] plot ({(1.1*\x)}, {0.92*(\x)^2+0.75});
    
    \draw [rotate=-12*360/16] [myorange,thick,domain=-1:1,smooth] plot ({(1.1*\x)}, {0.92*(\x)^2+0.75});
    
    \draw [rotate=-13*360/16] [myorange,thick,domain=-1:1,smooth] plot ({(1.1*\x)}, {0.92*(\x)^2+0.75});
    
    \draw [rotate=-14*360/16] [myorange,thick,domain=-1:1,smooth] plot ({(1.1*\x)}, {0.92*(\x)^2+0.75});
    
    
    \draw [rotate=-8*360/16] [myred,thick,domain=-1:1,smooth] plot ({(1.67*\x)}, {0.36*(\x)^2+0.75});
    
    \draw [rotate=-10*360/16] [myred,thick,domain=-1:1,smooth] plot ({(1.67*\x)}, {0.36*(\x)^2+0.75});
    
    \draw [rotate=-11*360/16] [myred,thick,domain=-1:1,smooth] plot ({(1.67*\x)}, {0.36*(\x)^2+0.75});
    
    \draw [rotate=-13*360/16] [myred,thick,domain=-1:1,smooth] plot ({(1.67*\x)}, {0.36*(\x)^2+0.75});

    
    \draw [black,ultra thick,domain=10+90:350+90,smooth] plot ({2*cos(\x)}, {2*sin(\x)});

   \draw [black,ultra thick,domain=0:360,smooth] plot ({0.75*cos(\x)}, {0.75*sin(\x)});

   \draw[rotate=-1*360/16,blue,ultra thick,solid] (-0.183,2.68) -- (0.183,2.68);
   
   \draw[rotate=-2*360/16,blue,ultra thick,solid] (-0.183,2.68) -- (0.183,2.68);
   
   \draw[rotate=-3*360/16,blue,ultra thick,solid] (-0.183,2.68) -- (0.183,2.68);
   
   
   \draw[rotate=-5*360/16,blue,ultra thick,solid] (-0.183,2.68) -- (0.183,2.68);
   
   \draw[rotate=-6*360/16,blue,ultra thick,solid] (-0.183,2.68) -- (0.183,2.68);
   
   \draw[rotate=-7*360/16,blue,ultra thick,solid] (-0.183,2.68) -- (0.183,2.68);
   
   
   \draw[rotate=-9*360/16,blue,ultra thick,solid] (-0.183,2.68) -- (0.183,2.68);
   
   \draw[rotate=-10*360/16,blue,ultra thick,solid] (-0.183,2.68) -- (0.183,2.68);
   
   \draw[rotate=-11*360/16,blue,ultra thick,solid] (-0.183,2.68) -- (0.183,2.68);
   
   
   \draw[rotate=-13*360/16,blue,ultra thick,solid] (-0.183,2.68) -- (0.183,2.68);
   
   \draw[rotate=-14*360/16,blue,ultra thick,solid] (-0.183,2.68) -- (0.183,2.68);
   
   \draw[rotate=-15*360/16,blue,ultra thick,solid] (-0.183,2.68) -- (0.183,2.68);


   \draw[] (0,0) node {\footnotesize $e$};
   
   \draw[] (0,2) node {\footnotesize $s$};
   
   \draw[] (0,-4) node {(b) Noise Process};

\end{tikzpicture}

\end{minipage}
\begin{minipage}{0.3\linewidth}
        \centering
        \vspace{0.5cm}
\begin{tikzpicture}[scale=0.75]


    \draw[] (0,3) node {\scriptsize $t_{16}$\ $t_0$};
    
    \draw[] (1.15,2.772) node {\scriptsize $t_1$};
    
    \draw[] (2.12,2.12) node {\scriptsize $t_2$};
    
    \draw[] (2.772,1.15) node {\scriptsize $t_3$};
    
    \draw[] (3,0) node {\scriptsize $t_4$};
    
    \draw[] (2.772,-1.15) node {\scriptsize $t_5$};
    
    \draw[] (2.12,-2.12) node {\scriptsize $t_6$};
    
    \draw[] (1.15,-2.772) node {\scriptsize $t_7$};
    
    \draw[] (0,-3) node {\scriptsize $t_8$};
    
    
    \draw[] (-1.15,2.772) node {\scriptsize $t_{15}$};
    
    \draw[] (-2.12,2.12) node {\scriptsize $t_{14}$};
    
    \draw[] (-2.772,1.15) node {\scriptsize $t_{13}$};
    
    \draw[] (-3,0) node {\scriptsize $t_{12}$};
    
    \draw[] (-2.772,-1.15) node {\scriptsize $t_{11}$};
    
    \draw[] (-2.12,-2.12) node {\scriptsize $t_{10}$};
    
    \draw[] (-1.15,-2.772) node {\scriptsize $t_9$};


    \draw [rotate=360/32] [mypurple,thick,domain=-0.8224:0.8224,smooth] plot ({(0.45*\x)}, {(\x)^2+2});
    
    \draw [rotate=360/32+360/16] [mypurple,thick,domain=-0.8224:0.8224,smooth] plot ({(0.45*\x)}, {(\x)^2+2});
    
    \draw [rotate=360/32+2*360/16] [mypurple,thick,domain=-0.8224:0.8224,smooth] plot ({(0.45*\x)}, {(\x)^2+2});
    
    \draw [rotate=360/32+3*360/16] [mypurple,thick,domain=-0.8224:0.8224,smooth] plot ({(0.45*\x)}, {(\x)^2+2});
    
    \draw [rotate=360/32+4*360/16] [mypurple,thick,domain=-0.8224:0.8224,smooth] plot ({(0.45*\x)}, {(\x)^2+2});
    
    \draw [rotate=360/32+5*360/16] [mypurple,thick,domain=-0.8224:0.8224,smooth] plot ({(0.45*\x)}, {(\x)^2+2});
    
    \draw [rotate=360/32+6*360/16] [mypurple,thick,domain=-0.8224:0.8224,smooth] plot ({(0.45*\x)}, {(\x)^2+2});
    
    \draw [rotate=360/32+7*360/16] [mypurple,thick,domain=-0.8224:0.8224,smooth] plot ({(0.45*\x)}, {(\x)^2+2});
    
    \draw [rotate=360/32+8*360/16] [mypurple,thick,domain=-0.8224:0.8224,smooth] plot ({(0.45*\x)}, {(\x)^2+2});
    
    \draw [rotate=360/32+9*360/16] [mypurple,thick,domain=-0.8224:0.8224,smooth] plot ({(0.45*\x)}, {(\x)^2+2});
    
    \draw [rotate=360/32+10*360/16] [mypurple,thick,domain=-0.8224:0.8224,smooth] plot ({(0.45*\x)}, {(\x)^2+2});
    
    \draw [rotate=360/32+11*360/16] [mypurple,thick,domain=-0.8224:0.8224,smooth] plot ({(0.45*\x)}, {(\x)^2+2});
    
    \draw [rotate=360/32+12*360/16] [mypurple,thick,domain=-0.8224:0.8224,smooth] plot ({(0.45*\x)}, {(\x)^2+2});
    
    \draw [rotate=360/32+13*360/16] [mypurple,thick,domain=-0.8224:0.8224,smooth] plot ({(0.45*\x)}, {(\x)^2+2});
    
    \draw [rotate=360/32+14*360/16] [mypurple,thick,domain=-0.8224:0.8224,smooth] plot ({(0.45*\x)}, {(\x)^2+2});
    
    \draw [rotate=360/32+15*360/16] [mypurple,thick,domain=-0.8224:0.8224,smooth] plot ({(0.45*\x)}, {(\x)^2+2});
    

    
    
    
    
    
    
    \draw [rotate=-7*360/16-360/32] [green,thick,domain=0.75:2,smooth] plot ({0}, {\x});

    \draw [black,ultra thick,domain=10+90:350+90,smooth] plot ({2*cos(\x)}, {2*sin(\x)});

   \draw [black,ultra thick,domain=0:360,smooth] plot ({0.75*cos(\x)}, {0.75*sin(\x)});


   
   \draw[rotate=-1*360/16,mybrown,fill=mybrownfill,ultra thick,solid,rounded corners=2] (-0.2,2.5-0.2) rectangle (0.2,2.5+0.2);
   
   \draw[rotate=-2*360/16,mybrown,fill=mybrownfill,ultra thick,solid,rounded corners=2] (-0.2,2.5-0.2) rectangle (0.2,2.5+0.2);
   
   \draw[rotate=-3*360/16,mybrown,fill=mybrownfill,ultra thick,solid,rounded corners=2] (-0.2,2.5-0.2) rectangle (0.2,2.5+0.2);
   
   
   \draw[rotate=-5*360/16,mybrown,fill=mybrownfill,ultra thick,solid,rounded corners=2] (-0.2,2.5-0.2) rectangle (0.2,2.5+0.2);
   
   \draw[rotate=-6*360/16,mybrown,fill=mybrownfill,ultra thick,solid,rounded corners=2] (-0.2,2.5-0.2) rectangle (0.2,2.5+0.2);
   
   \draw[rotate=-7*360/16,mybrown,fill=mybrownfill,ultra thick,solid,rounded corners=2] (-0.2,2.5-0.2) rectangle (0.2,2.5+0.2);
   
   
   \draw[rotate=-9*360/16,mybrown,fill=mybrownfill,ultra thick,solid,rounded corners=2] (-0.2,2.5-0.2) rectangle (0.2,2.5+0.2);
   
   \draw[rotate=-10*360/16,mybrown,fill=mybrownfill,ultra thick,solid,rounded corners=2] (-0.2,2.5-0.2) rectangle (0.2,2.5+0.2);
   
   \draw[rotate=-11*360/16,mybrown,fill=mybrownfill,ultra thick,solid,rounded corners=2] (-0.2,2.5-0.2) rectangle (0.2,2.5+0.2);
   
   
   \draw[rotate=-13*360/16,mybrown,fill=mybrownfill,ultra thick,solid,rounded corners=2] (-0.2,2.5-0.2) rectangle (0.2,2.5+0.2);
   
   \draw[rotate=-14*360/16,mybrown,fill=mybrownfill,ultra thick,solid,rounded corners=2] (-0.2,2.5-0.2) rectangle (0.2,2.5+0.2);
   
   \draw[rotate=-15*360/16,mybrown,fill=mybrownfill,ultra thick,solid,rounded corners=2] (-0.2,2.5-0.2) rectangle (0.2,2.5+0.2);

   
   \draw[] (0,0) node {\footnotesize $e$};
   
   \draw[] (0,2) node {\footnotesize $s$};
   
   \draw[] (0,1.3) node {\color{myred}\huge ?};
   
   \draw[] (0,-4) node {(d) Optimal Dynamical Decoupling};
   
\end{tikzpicture}

    \end{minipage}
    \begin{minipage}{0.3\linewidth}
        \centering
        \vspace{-3.5cm}
        \begin{tikzpicture}[scale=0.75]


    \draw[] (0,3) node {\scriptsize $t_{16}$\ $t_0$};
    
    \draw[] (1.15,2.772) node {\scriptsize $t_1$};
    
    \draw[] (2.12,2.12) node {\scriptsize $t_2$};
    
    \draw[] (2.772,1.15) node {\scriptsize $t_3$};
    
    \draw[] (3,0) node {\scriptsize $t_4$};
    
    \draw[] (2.772,-1.15) node {\scriptsize $t_5$};
    
    \draw[] (2.12,-2.12) node {\scriptsize $t_6$};
    
    \draw[] (1.15,-2.772) node {\scriptsize $t_7$};
    
    \draw[] (0,-3) node {\scriptsize $t_8$};
    
    
    \draw[] (-1.15,2.772) node {\scriptsize $t_{15}$};
    
    \draw[] (-2.12,2.12) node {\scriptsize $t_{14}$};
    
    \draw[] (-2.772,1.15) node {\scriptsize $t_{13}$};
    
    \draw[] (-3,0) node {\scriptsize $t_{12}$};
    
    \draw[] (-2.772,-1.15) node {\scriptsize $t_{11}$};
    
    \draw[] (-2.12,-2.12) node {\scriptsize $t_{10}$};
    
    \draw[] (-1.15,-2.772) node {\scriptsize $t_9$};


    \draw [rotate=360/32] [mypurple,thick,domain=-0.8224:0.8224,smooth] plot ({(0.45*\x)}, {(\x)^2+2});
    
    \draw [rotate=360/32+360/16] [mypurple,thick,domain=-0.8224:0.8224,smooth] plot ({(0.45*\x)}, {(\x)^2+2});
    
    \draw [rotate=360/32+2*360/16] [mypurple,thick,domain=-0.8224:0.8224,smooth] plot ({(0.45*\x)}, {(\x)^2+2});
    
    \draw [rotate=360/32+3*360/16] [mypurple,thick,domain=-0.8224:0.8224,smooth] plot ({(0.45*\x)}, {(\x)^2+2});
    
    \draw [rotate=360/32+4*360/16] [mypurple,thick,domain=-0.8224:0.8224,smooth] plot ({(0.45*\x)}, {(\x)^2+2});
    
    \draw [rotate=360/32+5*360/16] [mypurple,thick,domain=-0.8224:0.8224,smooth] plot ({(0.45*\x)}, {(\x)^2+2});
    
    \draw [rotate=360/32+6*360/16] [mypurple,thick,domain=-0.8224:0.8224,smooth] plot ({(0.45*\x)}, {(\x)^2+2});
    
    \draw [rotate=360/32+7*360/16] [mypurple,thick,domain=-0.8224:0.8224,smooth] plot ({(0.45*\x)}, {(\x)^2+2});
    
    \draw [rotate=360/32+8*360/16] [mypurple,thick,domain=-0.8224:0.8224,smooth] plot ({(0.45*\x)}, {(\x)^2+2});
    
    \draw [rotate=360/32+9*360/16] [mypurple,thick,domain=-0.8224:0.8224,smooth] plot ({(0.45*\x)}, {(\x)^2+2});
    
    \draw [rotate=360/32+10*360/16] [mypurple,thick,domain=-0.8224:0.8224,smooth] plot ({(0.45*\x)}, {(\x)^2+2});
    
    \draw [rotate=360/32+11*360/16] [mypurple,thick,domain=-0.8224:0.8224,smooth] plot ({(0.45*\x)}, {(\x)^2+2});
    
    \draw [rotate=360/32+12*360/16] [mypurple,thick,domain=-0.8224:0.8224,smooth] plot ({(0.45*\x)}, {(\x)^2+2});
    
    \draw [rotate=360/32+13*360/16] [mypurple,thick,domain=-0.8224:0.8224,smooth] plot ({(0.45*\x)}, {(\x)^2+2});
    
    \draw [rotate=360/32+14*360/16] [mypurple,thick,domain=-0.8224:0.8224,smooth] plot ({(0.45*\x)}, {(\x)^2+2});
    
    \draw [rotate=360/32+15*360/16] [mypurple,thick,domain=-0.8224:0.8224,smooth] plot ({(0.45*\x)}, {(\x)^2+2});
    

    
    
    
    
    
    \draw [rotate=-0*360/16-360/32] [green,thick,domain=0.75:2,smooth] plot ({0}, {\x});
    
    \draw [rotate=-7*360/16-360/32] [green,thick,domain=0.75:2,smooth] plot ({0}, {\x});
    
    \draw [rotate=-11*360/16-360/32] [green,thick,domain=0.75:2,smooth] plot ({0}, {\x});
    
    
    
    
    
    

    
    
    
    
    
    

    
    \draw [rotate=-5*360/16] [myorange,thick,domain=-1:1,smooth] plot ({(1.1*\x)}, {0.92*(\x)^2+0.75});

    
    
    
    
    
    \draw [rotate=-8*360/16] [myred,thick,domain=-1:1,smooth] plot ({(1.67*\x)}, {0.36*(\x)^2+0.75});
    
    
    

    
    \draw [black,ultra thick,domain=10+90:350+90,smooth] plot ({2*cos(\x)}, {2*sin(\x)});

   \draw [black,ultra thick,domain=0:360,smooth] plot ({0.75*cos(\x)}, {0.75*sin(\x)});


   
   \draw[rotate=-1*360/16,myblue,fill=mybluefill,ultra thick,solid,rounded corners=2] (-0.2,2.5-0.2) rectangle (0.2,2.5+0.2);
   
   \draw[rotate=-2*360/16,myblue,fill=mybluefill,ultra thick,solid,rounded corners=2] (-0.2,2.5-0.2) rectangle (0.2,2.5+0.2);
   
   \draw[rotate=-3*360/16,myblue,fill=mybluefill,ultra thick,solid,rounded corners=2] (-0.2,2.5-0.2) rectangle (0.2,2.5+0.2);
   
   
   \draw[rotate=-5*360/16,myblue,fill=mybluefill,ultra thick,solid,rounded corners=2] (-0.2,2.5-0.2) rectangle (0.2,2.5+0.2);
   
   \draw[rotate=-6*360/16,myblue,fill=mybluefill,ultra thick,solid,rounded corners=2] (-0.2,2.5-0.2) rectangle (0.2,2.5+0.2);
   
   \draw[rotate=-7*360/16,myblue,fill=mybluefill,ultra thick,solid,rounded corners=2] (-0.2,2.5-0.2) rectangle (0.2,2.5+0.2);
   
   
   \draw[rotate=-9*360/16,myblue,fill=mybluefill,ultra thick,solid,rounded corners=2] (-0.2,2.5-0.2) rectangle (0.2,2.5+0.2);
   
   \draw[rotate=-10*360/16,myblue,fill=mybluefill,ultra thick,solid,rounded corners=2] (-0.2,2.5-0.2) rectangle (0.2,2.5+0.2);
   
   \draw[rotate=-11*360/16,myblue,fill=mybluefill,ultra thick,solid,rounded corners=2] (-0.2,2.5-0.2) rectangle (0.2,2.5+0.2);
   
   
   \draw[rotate=-13*360/16,myblue,fill=mybluefill,ultra thick,solid,rounded corners=2] (-0.2,2.5-0.2) rectangle (0.2,2.5+0.2);
   
   \draw[rotate=-14*360/16,myblue,fill=mybluefill,ultra thick,solid,rounded corners=2] (-0.2,2.5-0.2) rectangle (0.2,2.5+0.2);
   
   \draw[rotate=-15*360/16,myblue,fill=mybluefill,ultra thick,solid,rounded corners=2] (-0.2,2.5-0.2) rectangle (0.2,2.5+0.2);

   
   \draw[] (0,0) node {\footnotesize $e$};
   
   \draw[] (0,2) node {\footnotesize $s$};
   
   \draw[] (0,-4) node {(c) Dynamical Decoupling};
   
\end{tikzpicture}

    \end{minipage}

\caption{{\bf(a)} A scenario where an experimenter has a noise process $\mathbf{T}_{\hat{n}}$ (red), which they can interact with at intermediate times $\hat{n}=\{ t_1 , \dots , t_n \}$. The experimenter's goal is to use these intermediate interventions to maximise the mutual information $I$ they obtain after temporally coarse-graining their process tensor to the corresponding channel. In the case where the experimenter chooses not to act, this channel is $\mathbf{T}_{\emptyset}:=\llbracket \mathbf{T}_{\hat{n}} | \mathbf{I}_{\hat{n}} \rrbracket$. However, a DD sequence $\{ \mathcal{I},\mathcal{X},\mathcal{Z},\mathcal{X},\mathcal{Z} \}$ can be cast as a transformation $\mathbf{T}_{\hat{n}} \mapsto \llbracket \mathbf{T}_{\hat{n}} |\mathbf{Z}_{\hat{n}}$, where $\mathbf{Z}_{\hat{n}}=\{ \mathcal{I},\mathcal{I},\mathcal{X},\mathcal{X},\mathcal{Y},\mathcal{Y},\mathcal{Z},\mathcal{Z} \}$ such that $\mathbf{T}'_{\emptyset}:=\llbracket \mathbf{T}_{\hat{n}} |\mathbf{Z}_{\hat{n}} | \mathbf{I}_{\hat{n}} \rrbracket$ might have greater mutual information than $\mathbf{T}_{\emptyset}$. For any general superprocess to provide $I(\mathbf{T}_{\emptyset}) < I(\mathbf{T}'_{\emptyset})$, coarse-grainings must be irreversible in the resource theory the experimenter operates under, as described in Obs.~\ref{obs:irreversibility}. {\bf(b)} If the system $s$ is left to freely interact with the environment $e$, correlations build (visually depicted by coloured arcs between $s$ and $e$ at different times), causing decoherence to occur at an accelerating rate. {\bf(c)} Dynamical decoupling shifts multitime correlations mediated by the environment, to system-level correlations, preventing a build-up of system-environment correlations. {\bf(d)} Multitimescale optimal dynamical decoupling (see Sec.\ref{sec:optimisation}) performs this conversion more efficiently, resulting in better noise reduction over longer timescales, raising the question of how far this can be taken. Here, we identify and quantify yet untapped resources that can be employed for further improvement.}  \label{fig:definingobjects}
\end{figure*}
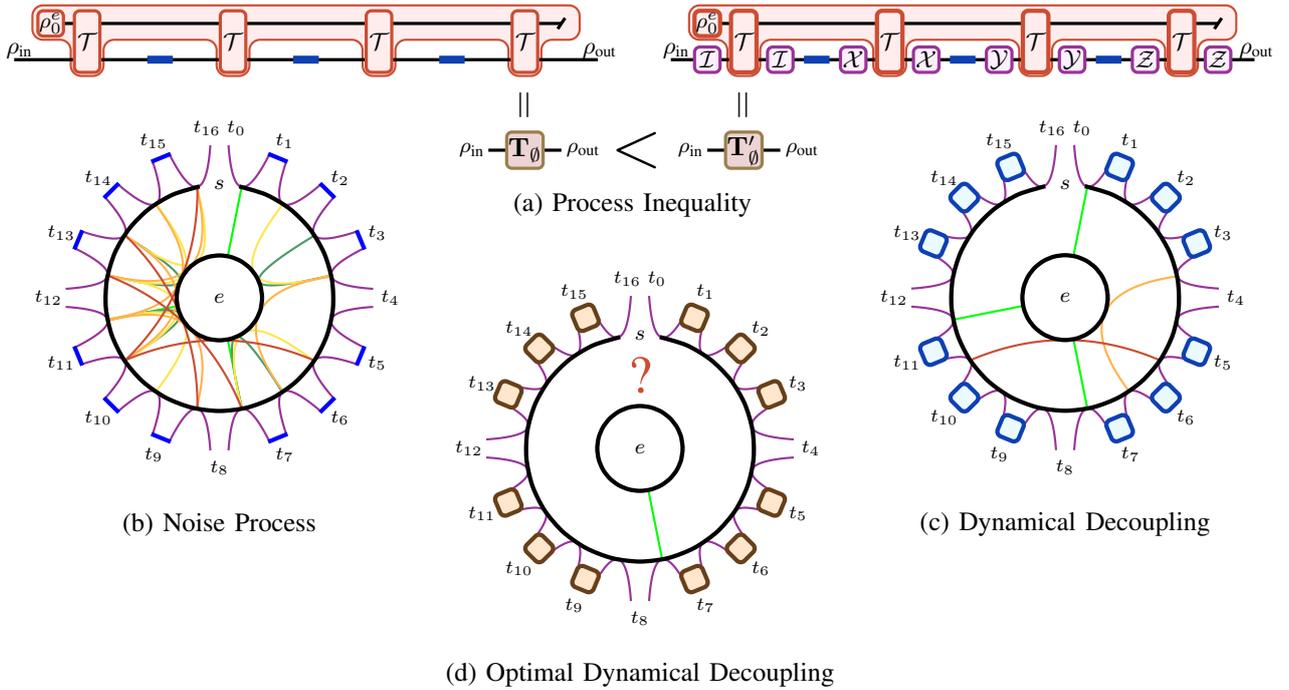

\subsection{Resource Theories for Quantum Processes} \label{sec:resourcetheoriesofmultitimeprocesses}

With this in mind, we now embark on a slightly longer path -- separating noise suppression techniques into two distinct steps -- to quantify temporal correlations in the language of RTQPs~\cite{resourcetheoriesofmultitime}. Here, noise processes, represented by process tensors $\mathbf{T}_{\hat{n}}$, are the resource objects. Experimental control can be cast as a resource-transformation $\mathbf{Z}_{\hat{n}}$, mapping $\mathbf{T}_{\hat{n}}$ to another process\footnote[2]{Observe that this picture is equivalent to Eq.~\eqref{eq:PTcontrol} by absorbing non-trivial control into the process tensor via the superprocess $\mathbf{T}'_{\emptyset} = \llbracket \mathbf{T}_{\hat{n}} | \mathbf{A}_{\hat{n}} \rrbracket  = \llbracket \mathbf{T}_{\hat{n}} | \mathbf{Z}_{\hat{n}} | \mathbf{A}'_{\hat{n}} \rrbracket = \llbracket \mathbf{T}'_{\hat{n}} | \mathbf{A}'_{\hat{n}} \rrbracket$, where $\mathbf{A}'_{\hat{n}}$ is taken to be a trivial `do-nothing' sequence.} $\mathbf{T}'_{\hat{n}}$:
\begin{gather} \label{eq:superprocess}
    \mathbf{T}'_{\hat{n}} = \llbracket \mathbf{T}_{\hat{n}} | \mathbf{Z}_{\hat{n}}.
\end{gather}

Above, the transformation $\mathbf{Z}_{\hat{n}}$ is called a superprocess~\cite{resourcetheoriesofmultitime} and consists of pre- and post- processing operations to the $s$ part of each evolution map in Eq.~\eqref{eq:processtensor}:
\begin{gather}
    \mathcal{T}^{se}_{t_j:t_{j-1}} \to \mathcal{T}'^{se}_{t_j:t_{j-1}}:= \mathcal{V}^{sa}_{\alpha} \circ_{sa}
    \mathcal{T}^{se}_{t_j:t_{j-1}} \circ_{sa}
    \mathcal{W}^{sa}_{\alpha}.
\end{gather}
The superprocess may potentially make use of additional ancillary system $a$. The form and connectivity of $\mathcal{V}^{sa}_{\alpha}$ and $\mathcal{W}^{sa}_{\alpha}$ correspond to experimental constraints. For instance, for QZE and DFS there is no $a$, for DD $a$ is restricted to be a classical clock, while for QEC $a$ would correspond to the measuring systems that detect the syndrome. For $n=0$, we have the limiting case of a channel $\mathbf{T}_\emptyset$, and supermap $\mathbf{Z}_\emptyset$~\cite{supermaps}.

Resource theories of quantum processes (RTQP) $\mathsf{S}_{\hat{n}}, \ n \in \mathbbm{W}$ account for the transformations between process tensors, $\mathbf{T}_{\hat{n}} \in \mathsf{T}_{\hat{n}}$, under a family of constrained superprocesses $\mathbf{Z}_{\hat{n}} \in \mathsf{Z}_{{\hat{n}}}$, which constitute the free transformations of the respective theory. This gives rise to a set of monotones for $\mathsf{S}_{\hat{n}}$, which are non-increasing functions $\mathbf{T}_{\hat{n}} \to \mathbbm{R}_{\geq0}$ under the action of $\mathsf{Z}_{{\hat{n}}}$. It also defines a set of free resources~\cite{review} $\mathsf{T}^{\text{F}}_{\hat{n}}$ that can be obtained starting from any process $\mathbf{T}_{\hat{n}}$ via some $\mathbf{Z}_{\hat{n}}$. Ref.~\cite{resourcetheoriesofmultitime} placed restrictions on the connectivity (but not form) of pre- and post- processing to construct a family of RTQPs. In particular, memory was found to be a resourceful quantity in a number of these theories, suggesting that they may be used to examine the role of non-Markovianity in DD. However, it turns out the DD pulse sequence is an isometric superprocess, i.e., the non-Markovianity of $\mathbf{T}'_{\hat{n}}$ and $\mathbf{T}_{\hat{n}} $ are identical. No non-Markovianity is expended when considering DD as a transformation of process tensors. But, as we will show, DD can be quantified in terms of an expenditure of the non-Markovianity under the transformation $\mathbf{T}_{\hat{n}} \to \mathbf{T}'_{\hat{n}}$ \textit{in combination} with a subsequent temporal coarse-graining procedure, which we illustrate in Fig.~\ref{fig:coarsegraining}.

\subsection{Temporal Coarse-Graining} \label{sec:temporalcoarsegraining}

In a quantum computation, ultimately we are not interested in the multitime correlations -- we only care about the aforementioned input-output correlations given in Eq.~\eqref{eq:PTcontrol}. However, the noise process $\mathbf{T}_{\hat{n}}$ in a typical NISQ device is correlated across multiple times. An experimenter will contract this process with the logical gates of the computer program, together with the noise-reduction pulses, with the aim of obtaining a high-fidelity computation. The contraction of noise-reduction pulses is an operationally meaningful notion of temporal coarse-graining, mapping an $n$-intervention process into an $m$-intervention process: $\mathbf{T}_{\hat{n}} \to \mathbf{T}_{\hat{m}}$. Ideally, the latter process should possess no multitime correlations, i.e., it should be a Markovian process $\mathbf{T}_{\hat{m}} =  \mathbf{T}_\emptyset^{\otimes m}$, with each $\mathbf{T}_\emptyset$ being as close to a unitary process as possible. This way, by contracting the remaining slots of the process with $m$ logical gates, we can perform the desired computation.

No resource theory where noiselessness is a resource allows for converting a noisy channel into a noiseless one under free transformations. Yet, noise suppression can be described as a free transformation when the multitime correlated process $\mathbf{T}_{\hat{n}}$ is converted into a single-time noiseless channel $\mathbf{T}_{\emptyset}$. The crucial point to note is that temporal coarse-graining, in general, is resource decreasing. Thus, having a priori access to only an $m$-intervention process is \textit{not} the same as having access to an $n$-intervention process (with $n>m$), which is then coarse-grained to $m$-interventions. In particular, the two are the same when a \textit{trivial} coarse-graining procedure is implemented, i.e., when intermediate times are closed off by using a collection of $n-m$ identity maps $\mathcal{I}^{s}_{i}$ on $s$
\begin{equation} \label{eq:coarsegrainingfunctor}
    \mathbf{I}_{{\hat{n}} \setminus {\hat{m}}}:=\bigotimes_{i \in {\hat{n}}\setminus {\hat{m}}} \mathcal{I}^{s}_{i} \quad \mbox{for} \quad \emptyset \subseteq \hat{m} \subseteq \hat{n},
\end{equation}
yielding the (trivially coarse-grained) process $ \mathbf{T}_{\hat{m}} = \llbracket \mathbf{T}_{\hat{n}} |\mathbf{I}_{{\hat{n}} \setminus {\hat{m}}} \rrbracket$. Thus our goal when exploiting the resources in $\mathbf{T}_{\hat{n}}$, will be to search for the optimal overall control sequence -- containing both the superprocess and coarse-graining. Given that all non-trivial allowed control can be delegated to the free superprocesses, all possible experimental control can be represented as
\begin{gather}\label{eq:moreoptimal}
 \mathbf{T}'_{\hat{m}}  = \llbracket \mathbf{T}_{\hat{n}} |\mathbf{Z}_{\hat{n}} |\mathbf{I}_{{\hat{n}} \setminus {\hat{m}}} \rrbracket,
\end{gather}
which amounts to trivial coarse-graining when $\mathbf{Z}_{\hat{n}}$ is the trivial, do-nothing superprocess. The above leads us to our first key observation.

\begin{observation}[Irreversibility of temporal coarse-graining] \label{obs:irreversibility} 
A process $\mathbf{T}'_{\hat{m}}$, that is transformed by a superprocess and then coarse-grained, has a larger range than a process $\mathbf{T}_{\hat{m}}$, that is coarse-grained and then transformed, for all non-trivial superprocesses $\mathbf{Z}_{\hat{n}}$ and $\mathbf{Z}_{\hat{m}}$, such that $\hat{n} \supseteq \hat{m}$.
\end{observation}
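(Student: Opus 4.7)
The plan is to compare, at the coarse resolution $\hat{m}$, two sets of reachable processes: the set $\mathcal{R}_{\hat{n}\to\hat{m}} := \{ \llbracket \mathbf{T}_{\hat{n}} \,|\, \mathbf{Z}_{\hat{n}} \,|\, \mathbf{I}_{\hat{n}\setminus\hat{m}} \rrbracket : \mathbf{Z}_{\hat{n}} \in \mathsf{Z}_{\hat{n}} \}$ (transform then coarse-grain), and the set $\mathcal{R}_{\hat{m}} := \{ \llbracket \llbracket \mathbf{T}_{\hat{n}} \,|\, \mathbf{I}_{\hat{n}\setminus\hat{m}} \rrbracket \,|\, \mathbf{Z}_{\hat{m}} \rrbracket : \mathbf{Z}_{\hat{m}} \in \mathsf{Z}_{\hat{m}} \}$ (coarse-grain then transform). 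The observation amounts to the inclusion $\mathcal{R}_{\hat{m}} \subseteq \mathcal{R}_{\hat{n}\to\hat{m}}$, strict on a generic class of $\mathbf{T}_{\hat{n}}$ whenever $\hat{n} \supsetneq \hat{m}$.

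First I would establish $\mathcal{R}_{\hat{m}} \subseteq \mathcal{R}_{\hat{n}\to\hat{m}}$ by a lift-and-pad construction. Given any $\mathbf{Z}_{\hat{m}} \in \mathsf{Z}_{\hat{m}}$, I build a superprocess $\tilde{\mathbf{Z}}_{\hat{n}} \in \mathsf{Z}_{\hat{n}}$ whose pre- and post-processing maps agree with $\mathbf{Z}_{\hat{m}}$ on the slots in $\hat{m}$ and act as the identity (with no ancillary traffic) on the slots in $\hat{n}\setminus\hat{m}$. Such a lift always belongs to the finer-resolution free class because the latter contains a do-nothing option slot by slot. Using associativity of the contraction $\llbracket \cdot | \cdot \rrbracket$ together with the fact that identity maps commute past identity pre/post-processing, I obtain $\llbracket \llbracket \mathbf{T}_{\hat{n}} \,|\, \mathbf{I}_{\hat{n}\setminus\hat{m}} \rrbracket \,|\, \mathbf{Z}_{\hat{m}} \rrbracket = \llbracket \mathbf{T}_{\hat{n}} \,|\, \tilde{\mathbf{Z}}_{\hat{n}} \,|\, \mathbf{I}_{\hat{n}\setminus\hat{m}} \rrbracket$, placing every element of $\mathcal{R}_{\hat{m}}$ inside $\mathcal{R}_{\hat{n}\to\hat{m}}$.

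For the strict direction, I would exhibit a process tensor $\mathbf{T}_{\hat{n}}$ together with a $\mathbf{Z}_{\hat{n}} \in \mathsf{Z}_{\hat{n}}$ whose coarse-grained image is not in $\mathcal{R}_{\hat{m}}$. The canonical witness is a non-Markovian system--environment process with non-trivial coupling during the slots $\hat{n}\setminus\hat{m}$, paired with a DD-like $\mathbf{Z}_{\hat{n}}$ that inserts refocusing pulses precisely on those slots. The resulting $\mathbf{T}'_{\hat{m}}$ inherits strictly more input--output coherence on its channel slice than is accessible starting from $\llbracket \mathbf{T}_{\hat{n}} | \mathbf{I}_{\hat{n}\setminus\hat{m}} \rrbracket$, because at the coarser resolution the intermediate times where the refocusing would occur have already been sealed by identity maps, irreversibly baking in the system--environment decoherence they mediate.

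The main obstacle is the strict-inclusion step, since it must rule out \emph{every} $\mathbf{Z}_{\hat{m}} \in \mathsf{Z}_{\hat{m}}$ at once. The cleanest route is a monotone argument: pick a monotone of the RTQP at resolution $\hat{m}$ -- for instance, the mutual information of the $\mathbf{T}_{\emptyset}$ slice, or an operational non-Markovianity measure -- that is non-increasing under $\mathsf{Z}_{\hat{m}}$, and explicitly verify that the DD-constructed $\mathbf{T}'_{\hat{m}}$ exceeds the supremum of this monotone over $\mathcal{R}_{\hat{m}}$. This separation certifies $\mathbf{T}'_{\hat{m}} \notin \mathcal{R}_{\hat{m}}$ and closes the argument.
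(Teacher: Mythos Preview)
Your inclusion argument is essentially the paper's own: the paper packages the lift-and-pad construction as Lem.~\ref{lem:representation}, observing that any coarse superprocess $\mathbf{Z}_{\hat m}$ can be written as $\llbracket \mathbf{I}_{\hat n\setminus\hat m}|\mathbf{Z}_{\hat n}|\mathbf{I}_{\hat n\setminus\hat m}\rrbracket$ for a fine-grained $\mathbf{Z}_{\hat n}$ that acts as identity on the extra slots, so that coarse-grain-then-transform is always an instance of transform-then-coarse-grain.

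Where you diverge is the strictness step. The paper's argument stops at the observation that the lifted superprocesses $\mathbf{Z}'_{\hat n}$ are constrained to act trivially on $\hat n\setminus\hat m$, and hence form a proper subset of $\mathsf{Z}_{\hat n}$. That is a statement about the \emph{set of transformations}, not directly about the \emph{set of reachable processes}: distinct superprocesses could in principle yield the same image, so a smaller set of transformations does not automatically certify a smaller range. You instead propose to exhibit an explicit $\mathbf{T}_{\hat n}$ and a DD-type $\mathbf{Z}_{\hat n}$ whose output is separated from all of $\mathcal{R}_{\hat m}$ by a monotone (e.g.\ channel-level mutual information), which genuinely closes the gap. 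Your route is more work but more rigorous; the paper's route is terser but leaves the image-level strictness implicit.
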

The proof of this is in the methods Sec.~\ref{sec:methods}, and also see Fig.~\ref{fig:irreversibility}. In other words, a fine-grained process $\mathbf{T}_{\hat{n}}$ can be transformed to a larger set of coarse-grained processes $\mathbf{T}'_{\hat{m}}$ than those reachable by transforming a trivially coarse-grained process $\mathbf{T}_{\hat{m}}$. This \emph{irreversibility} of coarse-graining implies that $\mathbf{T}'_{\hat{m}}$ will have a higher monotone value than $\mathbf{T}_{\hat{m}}$. When specifically concerned with the task of information preservation, a stronger version of this statement holds (Thm.~\ref{thm:nonmonotonicity} in Sec.~\ref{sec:irreversibilitynonmonotonicity}): a coarse-grained experimenter (i.e., one that only has access to times in $\hat m$) can perceive the free actions of a fine-grained experimenter to be resource-increasing if and only if coarse-graining a process \emph{strictly reduces} its mutual information. Due to the scaling of mutual information with the dimension of its argument (which coarse-graining reduces), Thm.~\ref{thm:nonmonotonicity} implies that temporal coarse-graining will almost always be able to produce non-monotonicity -- hence enabling information preservation. In other words, temporal resolution is almost always a resource for information preservation. 

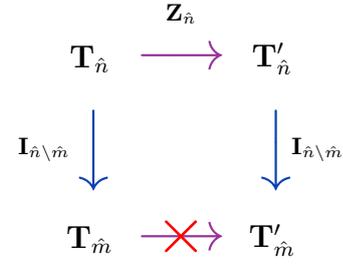
\begin{figure}[ht!]
\centering
\begin{tikzpicture}[scale=1.2]

\draw[] (0,2) node {\large $\mathbf{T}_{\hat{n}}$};
\draw[] (2,2) node {\large $\mathbf{T}'_{\hat{n}}  $};
\draw[] (0,0) node {\large $\mathbf{T}_{\hat{m}} $};
\draw[] (2,0) node {\large $\mathbf{T}'_{\hat{m}} $};

\draw[color=mypurple] (1,2) node {\huge $\longrightarrow$};
\draw[color=myblue] (0,1) node[rotate=270] {\huge $\longrightarrow$};
\draw[color=myblue] (2,1) node[rotate=270] {\huge $\longrightarrow$};
\draw[color=mypurple] (1,0) node {\huge $\longrightarrow$};
\draw[color=red] (1,0.04) node {\Huge $\times$};

\draw[] (1,2.5) node {\small $\mathbf{Z}_{\hat{n}}$};
\draw[] (-0.5,1) node {\small $\mathbf{I}_{\hat{n} \setminus \hat{m}}$};
\draw[] (2.5,1) node {\small $\mathbf{I}_{\hat{n} \setminus \hat{m}}$};

\end{tikzpicture}
\caption{Irreversibility as in Obs.~\ref{obs:irreversibility} states that for a process tensor $\mathbf{T}_{\hat{n}}$, applying a superprocess $\mathbf{Z}_{\hat{n}}$ followed by coarse-graining $\mathbf{I}_{\hat{n} \setminus \hat{m}}$, will have an image larger than that of coarse-graining followed by a superprocess. Any corresponding monotone in the former case can be made at least as large as in the latter, see Cor.~\ref{cor:irreversibilitymonotone}. This is necessary for non-monotonicity to occur from the perspective of the coarse-grained experimenter, as required by DD (see Sec.~\ref{sec:irreversibilitynonmonotonicity}). Throughout this work, an `increase' in a quantity of interest is measured as seen from a coarse perspective by comparing $\mathbf{T}'_{\hat{m}}$ to $\mathbf{T}_{\hat{m}}$. If one was to instead attempt comparing $\mathbf{T}'_{\hat{m}}$ or $\mathbf{T}'_{\hat{n}}$ to the original $\mathbf{T}_{\hat{n}}$, monotonicity of free transformations would ensure that no increases will ever be obtained, meaning that noise reduction would not be observed.}\label{fig:irreversibility}
\end{figure}

\begin{figure*}[ht!] 
        \centering
        \begin{tikzpicture}[scale=0.275]
\draw[ mediumgrey, thin,dashed] (0+0.0,-3) -- (0+0.0,-7);
\draw[ mediumgrey, thin,dashed] (2+0.0,-3) -- (2+0.0,-7);
\draw[ mediumgrey, thin,dashed] (0+8.0,-3) -- (0+8.0,-7);
\draw[ mediumgrey, thin,dashed] (2+8.0,-3) -- (2+8.0,-7);
\draw[ mediumgrey, thin,dashed] (0+16.0,-3) -- (0+16.0,-7);
\draw[ mediumgrey, thin,dashed] (2+16.0,-3) -- (2+16.0,-7);
\draw[ mediumgrey, thin,dashed] (0+24.0,-3) -- (0+24.0,-7);
\draw[ mediumgrey, thin,dashed] (2+24.0,-3) -- (2+24.0,-7);

\draw[myred,fill=myredfill, thick,solid,rounded corners=4] (-2,3-4) -- (-2,4-4) -- (28,4-4) -- (28,2+0.1-4) -- (26-0.1,2+0.1-4) -- (26-0.1,0+0.1-4) -- (24+0.1,0+0.1-4) -- (24+0.1,2+0.1-4)-- (18-0.1,2+0.1-4) -- (18-0.1,0+0.1-4) -- (16+0.1,0+0.1-4) -- (16+0.1,2+0.1-4) -- (10-0.1,2+0.1-4) -- (10-0.1,0+0.1-4) -- (8+0.1,0+0.1-4) -- (8+0.1,2+0.1-4) -- (2-0.1,2+0.1-4) -- (2-0.1,0+0.1-4) -- (0+0.1,0+0.1-4) -- (0+0.1,2+0.1-4) -- (-2,2+0.1-4) -- (-2,3-4)   ;

\draw[black, very thick,solid] (-1,3-4) -- (27,3-4);
\draw[black, very thick,solid] (26.8,2.7-4) -- (27.2,3.3-4);
\draw[black, very thick,solid] (-1.5,1-4) -- (27.5,1-4);


\draw[myred,fill=myredfill,very thick,solid,rounded corners=2] (-1.7,3.7-4) rectangle (-0.3,2.3-4);
\draw[myred,fill=myredfill,very thick,solid,rounded corners=2] (0.3,3.7-4) rectangle (1.7,0.3-4);
\draw[myred,fill=myredfill,very thick,solid,rounded corners=2] (8.3,3.7-4) rectangle (9.7,0.3-4);
\draw[myred,fill=myredfill,very thick,solid,rounded corners=2] (16.3,3.7-4) rectangle (17.7,0.3-4);
\draw[myred,fill=myredfill,very thick,solid,rounded corners=2] (24.3,3.7-4) rectangle (25.7,0.3-4);

\draw[white,fill=white,ultra thick,solid,rounded corners=2] (-1.7,1.7-4) rectangle (-0.3,-1.7-4);
\draw[white,fill=white,ultra thick,solid,rounded corners=2] (4.3-2,1.7-4) rectangle (5.7+2,-1.7-4);
\draw[white,fill=white,ultra thick,solid,rounded corners=2] (12.3-2,1.7-4) rectangle (13.7+2,-1.7-4);
\draw[white,fill=white,ultra thick,solid,rounded corners=2] (20.3-2,1.7-4) rectangle (21.7+2,-1.7-4);
\draw[white,fill=white,ultra thick,solid,rounded corners=2] (26.3,1.7-4) rectangle (27.7,-1.7-4);

\draw[] (-1,3-4) node {\small $\rho^{e}_0$};
\draw[] (1,2-4) node[rotate=0] {\small $\mathcal{T}$};
\draw[] (9,2-4) node[rotate=0] {\small $\mathcal{T}$};
\draw[] (17,2-4) node[rotate=0] {\small $\mathcal{T}$};
\draw[] (25,2-4) node[rotate=0] {\small $\mathcal{T}$};

\draw[] (-1,1-4) node { $0$};
\draw[] (5,1-4) node { $t_1$};
\draw[] (13,1-4) node { $t_2$};
\draw[] (21,1-4) node { $t_3$};
\draw[] (27,1-4) node { $t$};



\draw[] (13,5-4-0.25) node {\small Noise Process $\mathbf{T}_{\hat{n}}$};


\draw[mypurple,fill=mypurplefill,thick,solid,rounded corners=4] (-2,1-0.1-8) -- (-2,-2-8)

-- (4-0.1,-2-8) -- (4-0.1,0+0.1-8) -- (6+0.1,0+0.1-8) -- (6+0.1,-2-8) -- (12-0.1,-2-8) -- (12-0.1,0+0.1-8) -- (14+0.1,0+0.1-8) -- (14+0.1,-2-8) 

-- (20-0.1,-2-8)  -- (20-0.1,0-0.1-8)  -- (22+0.1,0-0.1-8) -- (22+0.1,-2-8)  -- (28,-2-8) -- (28,2-0.1-8)
-- (26+0.1,2-0.1-8) -- (26+0.1,0-0.1-8) -- (24-0.1,0-0.1-8) -- (24-0.1,2-0.1-8) 

-- (18+0.1,2-0.1-8) -- (18+0.1,0-0.1-8) -- (16-0.1,0-0.1-8) -- (16-0.1,2-0.1-8)

-- (10+0.1,2-0.1-8) -- (10+0.1,0-0.1-8) -- (8-0.1,0-0.1-8) -- (8-0.1,2-0.1-8)

-- (2+0.1,2-0.1-8) -- (2+0.1,0-0.1-8) -- (0-0.1,0-0.1-8) -- (0-0.1,2-0.1-8) -- (-2,2-0.1-8) -- (-2,1-0.1-8)  ;

\draw[black, very thick,solid] (-2.5,-1-8) -- (28.5,-1-8);
\draw[black, very thick,solid] (-1,1-8) -- (27,1-8);

\draw[white,fill=white,very thick,solid,rounded corners=2] (0.3,3.7-5-2-3) rectangle (1.7,0.3-5-3);
\draw[white,fill=white,very thick,solid,rounded corners=2] (8.3,3.7-5-2-3) rectangle (9.7,0.3-5-3);
\draw[white,fill=white,very thick,solid,rounded corners=2] (16.3,3.7-5-2-3) rectangle (17.7,0.3-5-3);
\draw[white,fill=white,very thick,solid,rounded corners=2] (24.3,3.7-5-2-3) rectangle (25.7,0.3-5-3);


\draw[mypurple,fill=mypurplefill,very thick,solid,rounded corners=2] (-1.7,1.7-8) rectangle (-0.3,-1.7-8);
\draw[mypurple,fill=mypurplefill,very thick,solid,rounded corners=2] (4.3-2,1.7-8) rectangle (5.7-2,-1.7-8);
\draw[mypurple,fill=mypurplefill,very thick,solid,rounded corners=2] (4.3+2,1.7-8) rectangle (5.7+2,-1.7-8);
\draw[mypurple,fill=mypurplefill,very thick,solid,rounded corners=2] (12.3-2,1.7-8) rectangle (13.7-2,-1.7-8);
\draw[mypurple,fill=mypurplefill,very thick,solid,rounded corners=2] (12.3+2,1.7-8) rectangle (13.7+2,-1.7-8);
\draw[mypurple,fill=mypurplefill,very thick,solid,rounded corners=2] (18.3,1.7-8) rectangle (19.7,-1.7-8);
\draw[mypurple,fill=mypurplefill,very thick,solid,rounded corners=2] (20.3+2,1.7-8) rectangle (21.7+2,-1.7-8);
\draw[mypurple,fill=mypurplefill,very thick,solid,rounded corners=2] (26.3,1.7-8) rectangle (27.7,-1.7-8);

\draw[white,fill=white,ultra thick,solid,rounded corners=2] (4.3,0.7-8-1) rectangle (5.7,-0.7-8-1);
\draw[white,fill=white,ultra thick,solid,rounded corners=2] (12.3,0.7-8-1) rectangle (13.7,-0.7-8-1);
\draw[white,fill=white,ultra thick,solid,rounded corners=2] (20.3,0.7-8-1) rectangle (21.7,-0.7-8-1);

\draw[] (-1,0-8) node[rotate=0] {\small $\mathcal{V}$};
\draw[] (5-2,0-8) node[rotate=0] {\small $\mathcal{W}$};
\draw[] (5+2,0-8) node[rotate=0] {\small $\mathcal{V}$};
\draw[] (13-2,0-8) node[rotate=0] {\small $\mathcal{W}$};
\draw[] (13+2,0-8) node[rotate=0] {\small $\mathcal{V}$};
\draw[] (21-2,0-8) node[rotate=0] {\small $\mathcal{W}$};
\draw[] (21+2,0-8) node[rotate=0] {\small $\mathcal{V}$};
\draw[] (27,0-8) node[rotate=0] {\small $\mathcal{W}$};







\draw[] (13,-5-0.25) node {\small Superprocess $\mathbf{Z}_{\hat{n}}$ acts on $\mathbf{T}_{\hat{n}}$};


\draw[black, very thick,solid] (-1,3-18+2) -- (27,3-18+2);
\draw[black, very thick,solid] (26.8,2.7-18+2) -- (27.2,3.3-18+2);
\draw[black, very thick,solid] (-2,1-18+2) -- (28,1-18+2);

\draw[myred,fill=myredfill, thick,solid,rounded corners=4] (-2,3-18+2) -- (-2,4-18+2) -- (28,4-18+2) -- (28,2+0.1-18+2) -- (26-0.1,2+0.1-18+2) -- (26-0.1,0+0.1-18+2) -- (24+0.1,0+0.1-18+2) -- (24+0.1,2+0.1-18+2)-- (18-0.1,2+0.1-18+2) -- (18-0.1,0+0.1-18+2) -- (16+0.1,0+0.1-18+2) -- (16+0.1,2+0.1-18+2) -- (10-0.1,2+0.1-18+2) -- (10-0.1,0+0.1-18+2) -- (8+0.1,0+0.1-18+2) -- (8+0.1,2+0.1-18+2) -- (2-0.1,2+0.1-18+2) -- (2-0.1,0+0.1-18+2) -- (0+0.1,0+0.1-18+2) -- (0+0.1,2+0.1-18+2) -- (-2,2+0.1-18+2) -- (-2,3-18+2)   ;



\draw[white,fill=white,ultra thick,solid,rounded corners=2] (4.3,1.7-18+2) rectangle (5.7,-1.7-18+2);
\draw[white,fill=white,ultra thick,solid,rounded corners=2] (12.3,1.7-18+2) rectangle (13.7,-1.7-18+2);
\draw[white,fill=white,ultra thick,solid,rounded corners=2] (20.3,1.7-18+2) rectangle (21.7,-1.7-18+2);





\draw[] (13,3-18+2) node {\small Transformed Process $\mathbf{T}'_{\hat{n}}$};



\draw[ mediumgrey, thin,dashed] (4+0.3,-15) -- (4+0.3,-17);
\draw[ mediumgrey, thin,dashed] (6-0.3,-15) -- (6-0.3,-17);
\draw[ mediumgrey, thin,dashed] (4+0.3+8,-15) -- (4+0.3+8,-17);
\draw[ mediumgrey, thin,dashed] (6-0.3+8,-15) -- (6-0.3+8,-17);
\draw[ mediumgrey, thin,dashed] (4+0.3+16,-15) -- (4+0.3+16,-17);
\draw[ mediumgrey, thin,dashed] (6-0.3+16,-15) -- (6-0.3+16,-17);

\draw[myblue,line width=3pt] (4.3,1-26+8) -- (5.7,1-26+8);

\draw[myblue,line width=3pt] (12.3,1-26+8) -- (13.7,1-26+8);

\draw[myblue,line width=3pt] (20.3,1-26+8) -- (21.7,1-26+8);

\draw[] (5,2-26+8) node {\small $\mathcal{I}$};
\draw[] (13,2-26+8) node {\small $\mathcal{I}$};
\draw[] (21,2-26+8) node {\small $\mathcal{I}$};


\draw[] (13,-18.5) node {\small Coarse-Graining $\mathbf{I}_{\hat{n} \setminus \hat{m}}$ acts on $\mathbf{T}'_{\hat{n}}$};


\draw[black, very thick,solid] (-1,3-26+2) -- (27,3-26+2);
\draw[black, very thick,solid] (26.8,2.7-26+2) -- (27.2,3.3-26+2);
\draw[black, very thick,solid] (-2,1-26+2) -- (28,1-26+2);

\draw[myred,fill=myredfill, thick,solid,rounded corners=4] (-2,3-26+2) -- (-2,4-26+2) -- (28,4-26+2) -- (28,2+0.1-26+2) -- (26-0.1,2+0.1-26+2) -- (26-0.1,0+0.1-26+2) -- (24+0.1,0+0.1-26+2) -- (24+0.1,2+0.1-26+2)-- (18-0.1,2+0.1-26+2) -- (18-0.1,0+0.1-26+2) -- (16+0.1,0+0.1-26+2) -- (16+0.1,2+0.1-26+2) -- (10-0.1,2+0.1-26+2) -- (10-0.1,0+0.1-26+2) -- (8+0.1,0+0.1-26+2) -- (8+0.1,2+0.1-26+2) -- (2-0.1,2+0.1-26+2) -- (2-0.1,0+0.1-26+2) -- (0+0.1,0+0.1-26+2) -- (0+0.1,2+0.1-26+2) -- (-2,2+0.1-26+2) -- (-2,3-26+2)   ;


\draw[myblue,line width=3pt] (4.3,1-26+2) -- (5.7,1-26+2);

\draw[myblue,line width=3pt] (12.3,1-26+2) -- (13.7,1-26+2);

\draw[myblue,line width=3pt] (20.3,1-26+2) -- (21.7,1-26+2);

\draw[] (13,3-26+2) node {\small Resultant Channel $\mathbf{T}'_{\emptyset}$};


\draw[] (29.5,2-4) node {\Large $\Leftrightarrow$};

\draw[] (29.5,2-5-5) node {\Large $\Leftrightarrow$};

\draw[] (29.5,-15) node {\Large $\Leftrightarrow$};

\draw[] (29.5,-22) node {\Large $\Leftrightarrow$};


\draw[] (45,5-4-0.25) node {\small Properties of Noise};

\draw[mygreyfill,fill=mygreyfill,ultra thick,solid,rounded corners=2] (31,4-4) rectangle (59,0-4);


\draw[myorange,fill=myorangefill,rounded corners=10, fill opacity = 1][rotate=0] (41-4-1,2-1.75-4) rectangle (41+4,2+1.75-4);
\draw[mygreen,fill=mygreenfill,rounded corners=10, fill opacity = 0.5][rotate=0] (49-6,2-1.75-4) rectangle (49+6,2+1.75-4);

\draw[myorange,rounded corners=10][rotate=0] (41-4-1,2-1.75-4) rectangle (41+4,2+1.75-4);
\draw[mygreen,rounded corners=10][rotate=0] (49-6,2-1.75-4) rectangle (49+6,2+1.75-4);

\draw[] (33,2-4) node {\small All};
\draw[] (39.5,2-4) node {\small Symmetries};
\draw[] (50,2-4) node {\small Non-Markovianity};
\draw[] (57,2-4) node {\small All};


\draw[mygreyfill,fill=mygreyfill,ultra thick,solid,rounded corners=2] (31,-22+2) rectangle (59,-26+2);

\draw[myorange,fill=myorangefill,rounded corners=10][rotate=0] (38-7,-24-1.75+2) rectangle (38+7,-24+1.75+2) ;
\draw[myred,fill=myredfill,rounded corners=10][rotate=0] (34-2.75,-24-1.5+2) rectangle (34+2.75,-24+1.5+2) ;

\draw[] (34,-24+2) node {\small Classical };
\draw[] (41,-24+2) node {\small\begin{tabular}{c} Quantum \\ Subspace \end{tabular}};
\draw[] (49,-24+2) node {\small Quantum};
\draw[] (56,-24+2) node {\small Quantum};

\draw[] (45,-21-0.25+2) node {\small Information Preserved in Resultant Channel};



\draw[] (45,-5-0.25) node {\small $+$ Superprocess  };

\draw[mypurple,fill=mypurplefill,very thick,solid,rounded corners=2] (31,-4-2) rectangle (59,-8-2);
\draw[mygrey,fill=mygreyfill,rounded corners=10][rotate=0] (45-7,-6-1.75-2) rectangle (45+7,-6+1.75-2);
\draw[myred,fill=myredfill,rounded corners=10][rotate=0] (33-1.5,-6-1.2-2) rectangle (33+1.5,-6+1.2-2);
\draw[myorange,fill=myorangefill,rounded corners=10][rotate=0] (41-2,-6-1.2-2) rectangle (41+2,-6+1.2-2);

\draw[] (33,-6-2) node {$\mathsf{C}_{\hat{\mathbbm{W}}}$};
\draw[] (41,-6-2) node {$\mathsf{P}_{\hat{\mathbbm{W}}}$};
\draw[] (49,-6-2) node {$\mathsf{D}_{\hat{\mathbbm{W}}}$};
\draw[] (57,-6-2) node {$\mathsf{Q}_{\hat{\mathbbm{W}}}$};


\draw[myblue,fill=mybluefill,very thick,solid,rounded corners=2] (31,-13) rectangle (59,-17);

\draw[] (45,-11-0.25-1) node {{\small $+$ Coarse-Graining}};

\draw[] (33,-15) node {{\small\begin{tabular}{c} Zeno \\ Effect \end{tabular}}};
\draw[] (41,-15) node {{\small\begin{tabular}{c} DFS \\ Creation \end{tabular}}};
\draw[] (49,-15) node {{\small\begin{tabular}{c} Dynamical \\ Decoupling \end{tabular}}};
\draw[] (56,-15) node {{\small\begin{tabular}{c} \ \ \ \ Error \\ Correction \end{tabular}}};




\draw[] (13,-25.5) node {(a)};
\draw[] (45,-25.5) node {(b)};

\end{tikzpicture}

\caption{The relationship between resource theories of temporal resolution and noise reduction methods. {\bf(a)} Resource objects are noise processes $\mathbf{T}_{\hat{n}}$, and free transformations can be represented as a superprocess $\mathbf{Z}_{\hat{n}}$, followed by temporal coarse-graining $\mathbf{I}_{{\hat{n}} \setminus {\hat{m}}}$, i.e., $\mathbf{T}_{\hat{n}} \mapsto \mathbf{T}'_{\hat{m}} := \llbracket \mathbf{T}_{\hat{n}} | \mathbf{Z}_{\hat{n}} | \mathbf{I}_{{\hat{n}} \setminus {\hat{m}}} \rrbracket $. The properties of the noise process are encoded in the $\mathcal{T}$ channels within $\mathbf{T}_{\hat{n}}$, while the capabilities of the experimenter are specified by the form and connectivity of the $\mathcal{V}$ (pre) and $\mathcal{W}$ (post) channels of $\mathbf{Z}_{\hat{n}}$. The diagram shows three intermediate interventions $\hat{n}=\{t_1, t_2, t_3 \}$ where the superprocess can be `plugged in', followed by complete coarse-graining, i.e., $\hat{m}=\emptyset$. {\bf(b)} The relationship between the properties of a noise process, the allowed superprocesses which specify the resource theory of temporal resolution, and the type of information a technique achievable within that resource theory can preserve. All techniques listed here can be performed by an experimenter possessing the full abilities of $\mathsf{Q}_{\hat{\mathbbm{W}}}$, but depending on the underlying noise process, a more constrained experimenter (corresponding to a sub-theory of $\mathsf{Q}_{\hat{\mathbbm{W}}}$) may still be capable of preserving some information, as detailed in Sec.~\ref{sec:subtheories}. Operating within the sub-theory $\mathsf{D}_{\hat{\mathbbm{W}}}$, DD harnesses non-Markovian noise to preserve a full quantum state. On the other hand, the creation of a DFS (performed by an experimenter operating in $\mathsf{P}_{\hat{\mathbbm{W}}} \subset \mathsf{D}_{\hat{\mathbbm{W}}}$, see Sec.~\ref{sec:subtheories}) harnesses symmetries in the interaction to preserve a subspace, without requiring non-Markovianity. However, there is nothing to suggest these two effects are mutually exclusive, and might be jointly harnessed. Both QEC and the QZE can in principle be performed for any kind of noise, but the extra ability of $\mathsf{Q}_{\hat{\mathbbm{W}}}$ over  $\mathsf{C}_{\hat{\mathbbm{W}}}$ (see Sec.~\ref{sec:subtheories}) means that QEC can preserve a full quantum state, rather than just classical information.} \label{fig:coarsegraining}
\end{figure*}
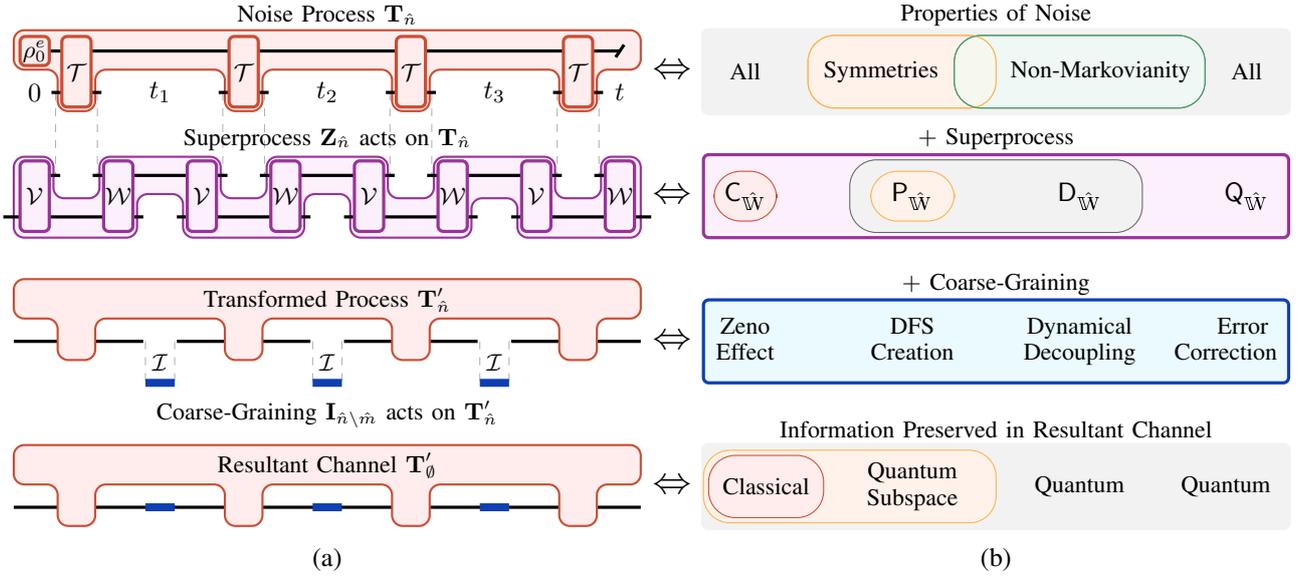

\subsection{Resource Theories of Temporal Resolution} \label{sec:subsecresourcetheoriesoftemporalresolution}

Combining the existing machinery of RTQPs with temporal coarse-graining yields \emph{resource theories of temporal resolution} (RTTR), enabling a unified view of temporal resources, illustrated in Fig.~\ref{fig:coarsegraining}. In a RTTR $\mathsf{S}_{\hat{\mathbbm{W}}}$, the resource objects $\mathbf{T}_{\hat{n}} \in \mathsf{T}_{\hat{\mathbbm{W}}}$ are still process tensors but each process tensor can have any whole number $n \in \mathbbm{W}$ of intermediate times for interventions, rather than one fixed set of times for the whole theory as in a RTQP. The generalised Kolmogorov extension theorem~\cite{kolmogorov} guarantees that such a fine grained description will always exist (although it need not be unique).

The set of free transformations are of the form $\mathbf{Z}_{\hat{n}} |  \mathbf{I}_{\hat{n} \setminus \hat{m}} \rrbracket$ for any $n,m \in \mathbbm{W}$ (see Lem.~\ref{lem:representation}), where
$\mathbf{Z}_{\hat{n}} \in \mathsf{Z}_{\hat{\mathbbm{W}}}$ are the free superprocesses derived from the corresponding RTQP for fixed ${\hat{n}}$. As discussed in Sec.~\ref{sec:resourcetheoriesofmultitimeprocesses}, the form and connectivity of allowed pre- and post- operations $\mathcal{V}^{sa}_{\alpha}$ and $\mathcal{W}^{sa}_{\alpha}$ within these superprocess specify the experimental constraints that define the resource theory. Due to the inclusions of temporal coarse-graining as a free transformation, the only free processes $\mathbf{T}^{\text{F}} \in \mathsf{T}^{\text{F}}$ (defined as those which can be reached from any other~\cite{review}) are zero capacity \emph{channels} $\mathbf{T}^{\text{F}} = \mathbbm{1} \otimes \beta$, where $\mathbbm{1}$ is the identity matrix, and $\beta$ is an arbitrary state on $s$. 

RTTRs have a well-defined tensor product structure for parallel and sequential composition of process tensors, which reduces to the channel notions after sufficient coarse-graining. However, making use of intermediate interventions, allows for resource transformations that cannot be consistently described by channels and their transformations alone -- opening the possibility of extending useful channel results (e.g.~\cite{fundementallimitationsondistillation}) beyond the limits of where they are currently applicable -- as illustrated by the ability to preserve information through protocols like DD. 

\section{Information Preservation} \label{sec:informationpreservation}

Resource theories of temporal resolution are useful for any scenario where one aims to control a quantum process -- including for the purpose of information preservation -- subject to constraints on the form and/or the timing of their actions. The scope of corresponding experimental scenarios is far broader than what can be presented here, so we shall focus on the goal of information preservation, under one particular umbrella of constraints.

\subsection{$\mathsf{Q}_{\hat{\mathbbm{W}}}$ Resource Theory} \label{sec:QNtheory}

The broadest member of the information preservation sub-theory structure -- denoted by $\mathsf{Q}_{\hat{\mathbbm{W}}}$ -- is the scenario where the experimenter can perform any pre-determined, memoryless sequence of quantum operations at times $\hat{m} \ , \ \emptyset \subseteq \hat{m} \subseteq \hat{n}$, where $\hat{n}$ is a `maximum resolution', treated as inherent to the process resource $\mathbf{T}_{\hat{n}} \in \mathsf{T}_{\hat{\mathbbm{W}}}$. We place no restriction on the types of processes we may consider, so $\mathsf{T}_{\hat{\mathbbm{W}}}$ is the full set of process tensors. 

The superprocesses in $\mathsf{Q}_{\hat{\mathbbm{W}}}$, on the other hand, are allowed to be any arbitrary quantum operation at each time, but constrained to have no memory correlating them. This, in turn, makes this type of resource theory difficult to work with, since the set of free resources does not form a convex set. For fixed numbers of times, superprocesses following this structure have been recently explored within the RTQP $(\emptyset,\mathscr{Q})$, where $\emptyset$ denotes the absence of memory and $\mathscr{Q}$ comprises all possible time-local experimental interventions. Here, we simply extend this set to include any $n \in \mathbbm{W}$, obtaining the set of free superprocesses in $\mathsf{Q}_{\hat{\mathbbm{W}}}$: $\mathbf{Z}_{\hat{n}} \in \mathsf{Z}^{(\emptyset,\mathscr{Q})}_{\hat{\mathbbm{W}}}$ with the resultant free \emph{transformations} (between different numbers of times) of the form $\mathbf{Z}_{\hat{n}} | \mathbf{I}_{\hat{n}\setminus \hat{m}}\rrbracket$. Like any other set of free transformations, these are always resource non-increasing. With this in mind, `noise reduction' corresponds to minimising the loss of information in coarse-graining via an appropriately chosen free superprocess.

\subsection{Monotones of $\mathsf{Q}_{\hat{\mathbbm{W}}}$}

Since applying free transformations of $\mathsf{Q}_{\hat{\mathbbm{W}}}$ can lead to noise reduction, it is important to pin down monotones whose changes indicate how properties of the process are affected, and what resources are expended.

We begin by noting two marginal processes of a given a process $\mathbf{T}_{\hat{n}}$:
\begin{gather} \label{eq:marginals}
     \mathbf{T}^{\text{Mkv}}_{\hat{n}} \! := \! \bigotimes_{j=1}^{n+1}  {\rm tr}_{\bar{j}} \{ \mathbf{T}_{\hat{n}}\}
     \quad \mbox{and} \quad 
     \mathbf{T}^{\text{marg}}_{\hat{n}} \! := \!\!\! \bigotimes_{k=1}^{2(n+1)} \!\! {\rm tr}_{\bar{k}} \{ \mathbf{T}_{\hat{n}}\}.
\end{gather}
The index $j$ enumerates the constituent channels $\mathcal{T}_j$ as in Eq.~\eqref{eq:processtensor}, and $k$ splits this further into each input and output Hilbert space of the process tensor. Both of these are processes in their own right. The former process, $\mathbf{T}^{\text{Mkv}}_{\hat{n}}$, has temporal correlations only between an output and its preceding input, which make it a Markov processes. The latter process, $\mathbf{T}^{\text{marg}}_{\hat{n}}$, has no temporal correlations whatsoever; it maps any input to a fixed output state.

With these two reference processes, along with quantum relative entropy, $S(x\|y) := \mbox{tr}\{x \log(x) - x \log(y)\}$,
we can readily define three relevant monotones.
\begin{theorem} \label{thm:Imonotone}
In $\mathsf{Q}_{\hat{\mathbbm{W}}}$, the total information $I$, non-Markovianity $N$, and Markov information $M$,
\begin{gather}
\begin{split}
    \label{eq:Imonotoneresolution}
    &I(\mathbf{T}_{\hat{n}}) \! := \! S\left(\mathbf{T}_{\hat{n}} \| \mathbf{T}^{\emph{marg}}_{\hat{n}} \right), \ \
    N(\mathbf{T}_{\hat{n}}) \! := \! S\left(\mathbf{T}_{\hat{n}} \| \mathbf{T}^{\emph{Mkv}}_{\hat{n}} \right),\\
    &\mbox{and} \ \ M(\mathbf{T}_{\hat{n}}):=S\left( \mathbf{T}^{\emph{Mkv}}_{\hat{n}} \| \mathbf{T}^{\emph{marg}}_{\hat{n}} \right),
\end{split}
\end{gather}
are all monotones under the free operations of $\mathsf{Q}_{\hat{\mathbbm{W}}}$.
\end{theorem}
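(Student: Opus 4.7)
The plan is to establish each monotonicity via the data processing inequality (DPI) for quantum relative entropy, $S(\Lambda(x)\|\Lambda(y)) \leq S(x\|y)$ for any CPTP map $\Lambda$, combined with a covariance property: the two reference processes $\mathbf{T}^{\text{Mkv}}_{\hat{n}}$ and $\mathbf{T}^{\text{marg}}_{\hat{n}}$ must transform ``in parallel'' with $\mathbf{T}_{\hat{n}}$ under any free operation $\mathbf{Z}_{\hat{n}}|\mathbf{I}_{\hat{n}\setminus\hat{m}}\rrbracket$. If I can show that, then each of $I,N,M$ reduces to a bound of the form $S(\Lambda(\mathbf{T})\|\Lambda(\mathbf{T}^{\text{ref}})) \leq S(\mathbf{T}\|\mathbf{T}^{\text{ref}})$ by direct application of DPI on the corresponding Choi states.

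First I would cast every object as a Choi operator, so that contraction with a free superprocess and coarse-graining becomes the action of a single CPTP map on the Choi state of $\mathbf{T}_{\hat{n}}$. Then it suffices to verify two covariance identities:
\begin{gather*}
\bigl(\llbracket \mathbf{T}_{\hat{n}} | \mathbf{Z}_{\hat{n}} | \mathbf{I}_{\hat{n}\setminus\hat{m}}\rrbracket\bigr)^{\text{marg}} = \llbracket \mathbf{T}_{\hat{n}}^{\text{marg}} | \mathbf{Z}_{\hat{n}} | \mathbf{I}_{\hat{n}\setminus\hat{m}}\rrbracket, \\
\bigl(\llbracket \mathbf{T}_{\hat{n}} | \mathbf{Z}_{\hat{n}} | \mathbf{I}_{\hat{n}\setminus\hat{m}}\rrbracket\bigr)^{\text{Mkv}} = \llbracket \mathbf{T}_{\hat{n}}^{\text{Mkv}} | \mathbf{Z}_{\hat{n}} | \mathbf{I}_{\hat{n}\setminus\hat{m}}\rrbracket.
\end{gather*}
Both hinge on the memoryless structure of $\mathbf{Z}_{\hat{n}} \in \mathsf{Z}^{(\emptyset,\mathscr{Q})}_{\hat{\mathbbm{W}}}$: because its pre- and post-processing operations $\mathcal{V}^{sa}_\alpha,\mathcal{W}^{sa}_\alpha$ are uncorrelated across times, tracing over subsets of slots commutes with the action of the superprocess, so taking the marginal before or after is the same. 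The identity coarse-graining $\mathbf{I}_{\hat{n}\setminus\hat{m}}$ trivially factorises, so it also respects both marginal structures. With covariance in hand, monotonicity of $I$, $N$, and $M$ all follow by a single line of DPI, since the free transformation is a CPTP map on the Choi states of numerator and denominator simultaneously.

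For $M$, I would use an intermediate step: $M(\mathbf{T}_{\hat{n}}) = S(\mathbf{T}^{\text{Mkv}}_{\hat{n}} \| \mathbf{T}^{\text{marg}}_{\hat{n}})$ is the relative entropy between two \emph{derived} processes, so I need to verify that the restriction of the free map to the Markovian submanifold is itself a well-defined CPTP map whose output marginal is the marginal of the transformed Markov process. This is automatic from the two covariance identities above, since the marginal of $\mathbf{T}^{\text{Mkv}}_{\hat{n}}$ coincides with $\mathbf{T}^{\text{marg}}_{\hat{n}}$ by definition.

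The main obstacle will be the covariance step for $\mathbf{T}^{\text{Mkv}}_{\hat{n}}$: one must be careful that the partial-trace structure defining the Markov marginal (tracing out all slots except a single channel) genuinely commutes with the action of a memoryless $\mathbf{Z}_{\hat{n}}$ \emph{and} with the identity insertions. A subtle point is that $\mathbf{Z}_{\hat{n}}$ may connect $s$ with an ancilla $a$ locally at each time, so one has to check that the ancilla structure does not secretly induce correlations between slots after tracing out $a$ — memorylessness of $\mathsf{Z}^{(\emptyset,\mathscr{Q})}_{\hat{\mathbbm{W}}}$ is precisely what rules this out, but this deserves an explicit diagrammatic check. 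Once these commutation identities are laid down, the three monotone statements fall out uniformly from DPI and non-negativity of relative entropy.
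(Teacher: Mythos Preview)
Your strategy matches the paper's for the \emph{superprocess} half of a free transformation: the paper (Thm.~3 in the Methods) proves exactly your covariance identities $(\llbracket\mathbf{T}|\mathbf{Z})^{\text{Mkv}}=\llbracket\mathbf{T}^{\text{Mkv}}|\mathbf{Z}$ and $(\llbracket\mathbf{T}|\mathbf{Z})^{\text{marg}}=\llbracket\mathbf{T}^{\text{marg}}|\mathbf{Z}$ using the temporal locality of $\mathbf{Z}\in\mathsf{Z}^{(\emptyset,\mathscr{Q})}$, and then applies DPI. So that part is fine and essentially identical to the paper.

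The gap is in the coarse-graining step. Your claimed covariance
\[
\bigl(\llbracket \mathbf{T}_{\hat{n}} | \mathbf{I}_{\hat{n}\setminus\hat{m}}\rrbracket\bigr)^{\text{marg}} = \llbracket \mathbf{T}_{\hat{n}}^{\text{marg}} | \mathbf{I}_{\hat{n}\setminus\hat{m}}\rrbracket
\]
is \emph{false} in general. Concretely, take $\hat{n}=\{t_1\}$, $\hat{m}=\emptyset$. The final-output marginal of the coarse-grained channel is $\operatorname{tr}_e\!\big[\mathcal{T}_2^{se}\circ\mathcal{T}_1^{se}\big((\mathbbm{1}/d)\otimes\rho^e_0\big)\big]$, whereas the final-output marginal of $\mathbf{T}_{\hat{1}}$ is computed by tracing out the intermediate output and feeding $\mathbbm{1}/d$ back in, i.e.\ $\operatorname{tr}_e\!\big[\mathcal{T}_2^{se}\big((\mathbbm{1}/d)\otimes\operatorname{tr}_s\mathcal{T}_1^{se}((\mathbbm{1}/d)\otimes\rho^e_0)\big)\big]$. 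These differ whenever the environment carries memory. The same failure occurs for the Markov reference. So ``$\mathbf{I}_{\hat{n}\setminus\hat{m}}$ trivially factorises'' does not give you the equality you wrote, and the single-line DPI argument breaks.

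The paper handles coarse-graining differently: it shows (Thm.~5 in the Supplement) that $\mathbf{I}_{\hat{n}\setminus\hat{m}}$ is positive, trace-preserving and linear on Choi states, hence $S(\cdot\|\cdot)$ is contractive under it, without any covariance claim. To close the argument you then need one more observation, implicit in the paper: although $\llbracket\mathbf{T}^{\text{marg}}_{\hat{n}}|\mathbf{I}\rrbracket$ is not \emph{the} marginal of $\mathbf{T}_{\hat{m}}$, it \emph{is} still a product state over the $2(m{+}1)$ remaining slots (and likewise $\llbracket\mathbf{T}^{\text{Mkv}}_{\hat{n}}|\mathbf{I}\rrbracket$ is still Markov on $\hat{m}$). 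Combined with the variational characterisation $I(\mathbf{T})=\min_{\sigma\,\text{product}}S(\mathbf{T}\|\sigma)$ (and analogously for $N$), this gives
\[
I(\mathbf{T}_{\hat{m}})\leq S\big(\mathbf{T}_{\hat{m}}\,\big\|\,\llbracket\mathbf{T}^{\text{marg}}_{\hat{n}}|\mathbf{I}\rrbracket\big)\leq S\big(\mathbf{T}_{\hat{n}}\,\big\|\,\mathbf{T}^{\text{marg}}_{\hat{n}}\big)=I(\mathbf{T}_{\hat{n}}).
\]
So your plan is salvageable: replace the equality by the weaker set-preservation statement (coarse-graining sends product to product and Markov to Markov), then insert the variational step before DPI.
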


This result follows from the contractivity of relative entropy under the free superprocesses of $(\emptyset,\mathscr{Q})$, as shown in Ref.~\cite{resourcetheoriesofmultitime}, combined with contractivity of relative entropy under coarse-grainings, outlined in Sec.~\ref{sec:contractivityoutline}. Ref.~\cite{operationalmarkovcondition} showed that $N$ has a clear operational interpretation as a measure of how well a Markov model could describe $\mathbf{T}_{\hat{n}})$. While $N$ vanishes when a process has no multitime (non-Markovian) correlations, $I$ only vanishes when the process has no correlations whatsoever.

Importantly, the total information $I$ can be straightforwardly be related to $N$ and $M$ (see Sec.~\ref{sec:splitI}).
\begin{equation} \label{eq:mutualinformationequality}
    I(\mathbf{T}_{\hat{n}}) =M(\mathbf{T}_{\hat{n}})+N(\mathbf{T}_{\hat{n}}). 
\end{equation}
In other words, any correlations present in a process tensor $\mathbf{T}_{\hat{n}}$ must be attributable to either $M(\mathbf{T}_{\hat{n}})$ -- corresponding to memory due to interactions with the environment -- or $N(\mathbf{T}_{\hat{n}})$  -- corresponding to the capability to transmit information between adjacent times -- with no overlap. Combining this fact with irreversibility opens the possibility that an appropriately chosen superprocess $\mathbf{Z}_{\hat{n}}$, in conjunction with coarse-graining $\mathbf{I}_{\hat{n} \setminus \hat{m}}$ might consume $N(\mathbf{T}_{\hat{n}})$ to obtain a high value of $M(\mathbf{T}'_{\hat{m}})$, where $\mathbf{T}'_{\hat{m}}:= \llbracket \mathbf{T}_{\hat{n}} | \mathbf{Z}_{\hat{n}} | \mathbf{I}_{\hat{n} \setminus \hat{m}} \rrbracket$, corresponding to `decoupling' from the environment. Intuitively, the degree to which $N(\mathbf{T}_{\hat{n}})$ can be transformed into $M(\mathbf{T}'_{\hat{m}})$ sets a limit on how well dynamical decoupling can be performed.

\subsection{Decoupling Mechanisms} \label{sec:dynamicaldecoupling}

From a resource theoretic perspective, it is exactly this expenditure of resources that underlies dynamical decoupling. One period of a (traditional) DD sequence can be represented by a superprocess from the resource theory $\mathsf{D}_{\hat{\mathbbm{W}}} \subset \mathsf{Q}_{\hat{\mathbbm{W}}}$ only containing memoryless sequences of unitaries rather than general quantum operations. These pulse sequences are detailed in Fig.~\ref{fig:definingobjects}. Our consideration is more general than the usual view of how DD leads to `decoupling'~\cite{dynamicaldecouplingofopenquantumsystems}; that is, the pulse sequence averages out the influence of the environment as long as the pulses are sufficiently rapid. Here, we provide a more detailed explanation, which can account for non-rapid sequences and views DD as a symbiosis of two distinct effects. Firstly, the consumption of non-Markovianity in coarse-graining reduces the loss of system-level information; and secondly, a Zeno-like slowdown is induced by the first effect when the interventions are sufficiently rapid.

To understand the first effect, consider starting with a process $\mathbf{T}_{\hat{n}}$, that has multitime correlations at all scales. If DD is successful, it will map this process to some other process $\mathbf{T}'_{\hat{n}}=\llbracket \mathbf{T}_{\hat{n}} | \mathbf{Z}^{\text{DD}}_{\hat{n}}$, such that $\mathbf{T}'_{\hat{n}}$ has limited temporal correlation range. If DD is effective, then coarse-graining $\mathbf{T}'_{\hat{n}}$ to $\mathbf{T}'_{\hat{m}}= \llbracket \mathbf{T}'_{\hat{n}} | \mathbf{I}_{\hat{n} \setminus \hat{m}} \rrbracket$ reduces the total information of the coarse-grained process to a lesser extent than it would have been without the the DD superprocess, i.e., $I(\mathbf{T}'_{\hat{m}}) >  I\big(\llbracket \mathbf{T}_{\hat{n}} | \mathbf{I}_{\hat{n} \setminus \hat{m}} \rrbracket \big)$. Achieving this requires that the non-Markovianity of the coarse-grained process is small, and the correlations to be predominantly in the form of $M$, corresponding to a large throughput of information between adjacent times, and an effective decoupling between the system and the environment. The crucial observation here is that, $N(\mathbf{T}'_{\hat{n}})$ is consumed to enhance $I(\mathbf{T}'_{\hat{m}})$. We emphasize that this explanation of DD is not explicitly dependent on the speed at which the decoupling sequences can be applied, and solely leverages on the correlations present in the underlying process that can used to maximize the input-output correlations of the resulting channel.

This first effect can be compounded by a Zeno-like effect~\cite{unificationofdynamicaldecoupling} induced by fast DD pulses. Since the instantaneous rate of formation of system-environment correlations is tied to the quantity of existing correlations~\cite{lazystates}, the conversion of non-Markovianity into system-level correlations slows the rate of the flow that needs to be corrected. However, when working with slow pulses, DD cannot benefit from this effect, and its efficacy reduces.

Once DD, or some other kind of noise suppression method, has been applied, the success of our original goal -- to preserve information between the input and output of a quantum process -- can be quantified by coarse-graining the resultant process tensor, and then measuring $I$
\begin{equation} \label{eq:I_Z}
I_{\hat{m}|\mathbf{Z}_{\hat{n}}}(\mathbf{T}_{\hat{n}}) := I(\mathbf{T}'_{\hat{m}}) = I\big(\llbracket \mathbf{T}_{\hat{n}} | \mathbf{Z}_{\hat{n}}| \mathbf{I}_{\hat{n} \setminus \hat{m}} \rrbracket \big). 
\end{equation}
Observe that setting $\hat{m}=\emptyset$ recovers the mutual information of the channel defined in Eq.~\eqref{eq:PTcontrol}). Naturally, the important figure of merit to gauge the success of a respective decoupling scheme is the comparison between the standard decoupling scheme $I_{\hat{m}|\mathbf{Z}^{\text{DD}}_{\hat{n}}}$, and the case where no decoupling scheme is applied: $I_{\hat{m}}(\mathbf{T}_{\hat{n}}):=I\big(\llbracket \mathbf{T}_{\hat{n}}| \mathbf{I}_{\hat{n} \setminus \hat{m}} \rrbracket \big)$.

\subsection{Multitimescale Optimal Dynamical Decoupling} 
\label{sec:optimisation}

Using this understanding of DD, the question of finding the best noise suppression method amounts to finding a control sequence $\mathbf{Z}_{\hat{n}}$, such that $I_{\emptyset|\mathbf{Z}_{\hat{n}}}$ is maximised. A pulse sequence that outperforms DD when it comes to conversion of correlations from $N$ into $M$, will fare better at information preservation, and might lead to satisfactory decoupling even in cases where the respective controls are significantly spaced out in time. To demonstrate this performance enhancement, we search for such a pulse sequence by means of a semidefinite program (SDP)~\cite{Watrous11} and call the optimal procedure \textit{optimal dynamical decoupling (ODD)}. Specifically, while the maximization of the input-output mutual information is not directly amenable to SDP techniques, we find the sequence of operations that maximizes the maximal eigenvalue of the resulting channel. Intuitively, this is a proxy for optimal mutual information, and we use the corresponding control sequences to compare the figure of merit for the three cases $I_{\emptyset}$,
$I_{\emptyset|\mathbf{Z}^{\text{DD}}_{\hat{n}}}$, and $I_{\emptyset|\mathbf{Z}^{\text{ODD}}_{\hat{n}}}$.

\begin{figure*}[ht!] 
\centering
\begin{tikzpicture}[scale=0.45]

\draw[black, very thick,solid] (0.5,1.5) -- (30.5,1.5);
\draw[black, very thick,solid] (-0.5,0.5) -- (32.5,0.5);

\draw[myred,fill=myredfill, thick,solid,rounded corners=2] (0+0.1,1+0.1) -- (0+0.1,0) -- (1-0.1,0) -- (1-0.1,1+0.1) -- (2-0.1,1+0.1) --
(0+0.1+2,1+0.1) -- (0+0.1+2,0) -- (1-0.1+2,0) -- (1-0.1+2,1+0.1) -- (2-0.1+2,1+0.1) --
(0+0.1+4,1+0.1) -- (0+0.1+4,0) -- (1-0.1+4,0) -- (1-0.1+4,1+0.1) -- (2-0.1+4,1+0.1) --
(0+0.1+6,1+0.1) -- (0+0.1+6,0) -- (1-0.1+6,0) -- (1-0.1+6,1+0.1) 
-- (1-0.1+6,2) -- (0+0.1,2) -- (0+0.1,1+0.1);

\draw[myred,fill=myredfill, thick,solid,rounded corners=2]
(0+0.1+8,1+0.1) -- (0+0.1+8,0) -- (1-0.1+8,0) -- (1-0.1+8,1+0.1) -- (2-0.1+8,1+0.1) --
(0+0.1+10,1+0.1) -- (0+0.1+10,0) -- (1-0.1+10,0) -- (1-0.1+10,1+0.1) -- (2-0.1+10,1+0.1) --
(0+0.1+12,1+0.1) -- (0+0.1+12,0) -- (1-0.1+12,0) -- (1-0.1+12,1+0.1) -- (2-0.1+12,1+0.1) --
(0+0.1+14,1+0.1) -- (0+0.1+14,0) -- (1-0.1+14,0) -- (1-0.1+14,1+0.1) 
-- (1-0.1+6+8,2) -- (0+0.1+8,2) -- (0+0.1+8,1+0.1);

\draw[myred,fill=myredfill, thick,solid,rounded corners=2]
(0+0.1+16,1+0.1) -- (0+0.1+16,0) -- (1-0.1+16,0) -- (1-0.1+16,1+0.1) -- (2-0.1+16,1+0.1) --
(0+0.1+18,1+0.1) -- (0+0.1+18,0) -- (1-0.1+18,0) -- (1-0.1+18,1+0.1) -- (2-0.1+18,1+0.1) --
(0+0.1+20,1+0.1) -- (0+0.1+20,0) -- (1-0.1+20,0) -- (1-0.1+20,1+0.1) -- (2-0.1+20,1+0.1) --
(0+0.1+22,1+0.1) -- (0+0.1+22,0) -- (1-0.1+22,0) -- (1-0.1+22,1+0.1) 
-- (1-0.1+6+16,2) -- (0+0.1+16,2) -- (0+0.1+16,1+0.1);

\draw[myred,fill=myredfill, thick,solid,rounded corners=2]
(0+0.1+24,1+0.1) -- (0+0.1+24,0) -- (1-0.1+24,0) -- (1-0.1+24,1+0.1) -- (2-0.1+24,1+0.1) --
(0+0.1+26,1+0.1) -- (0+0.1+26,0) -- (1-0.1+26,0) -- (1-0.1+26,1+0.1) -- (2-0.1+26,1+0.1) --
(0+0.1+28,1+0.1) -- (0+0.1+28,0) -- (1-0.1+28,0) -- (1-0.1+28,1+0.1) -- (2-0.1+28,1+0.1) --
(0+0.1+30,1+0.1) -- (0+0.1+30,0) -- (1-0.1+30,0) -- (1-0.1+30,1+0.1)  
-- (1-0.1+6+24,2) -- (0+0.1+24,2) -- (0+0.1+24,1+0.1);

\draw[myblue,fill=mybluefill,thick,solid,rounded corners=2] (1+0.1,0+0.1) rectangle (2-0.1,1-0.1);

\draw[myblue,fill=mybluefill,thick,solid,rounded corners=2] (1+0.1+2,0+0.1) rectangle (2-0.1+2,1-0.1);

\draw[myblue,fill=mybluefill,thick,solid,rounded corners=2] (1+0.1+4,0+0.1) rectangle (2-0.1+4,1-0.1);

\draw[mygreen,fill=mygreenfill,thick,solid,rounded corners=2] (1+0.1+6,0+0.1) rectangle (2-0.1+6,1-0.1);

\draw[myblue,fill=mybluefill,thick,solid,rounded corners=2] (1+0.1+8,0+0.1) rectangle (2-0.1+8,1-0.1);

\draw[myblue,fill=mybluefill,thick,solid,rounded corners=2] (1+0.1+10,0+0.1) rectangle (2-0.1+10,1-0.1);

\draw[myblue,fill=mybluefill,thick,solid,rounded corners=2] (1+0.1+12,0+0.1) rectangle (2-0.1+12,1-0.1);

\draw[mygreen,fill=mygreenfill,thick,solid,rounded corners=2] (1+0.1+14,0+0.1) rectangle (2-0.1+14,1-0.1);

\draw[myblue,fill=mybluefill,thick,solid,rounded corners=2] (1+0.1+16,0+0.1) rectangle (2-0.1+16,1-0.1);

\draw[myblue,fill=mybluefill,thick,solid,rounded corners=2] (1+0.1+18,0+0.1) rectangle (2-0.1+18,1-0.1);

\draw[myblue,fill=mybluefill,thick,solid,rounded corners=2] (1+0.1+20,0+0.1) rectangle (2-0.1+20,1-0.1);

\draw[mygreen,fill=mygreenfill,thick,solid,rounded corners=2] (1+0.1+22,0+0.1) rectangle (2-0.1+22,1-0.1);

\draw[myblue,fill=mybluefill,thick,solid,rounded corners=2] (1+0.1+24,0+0.1) rectangle (2-0.1+24,1-0.1);

\draw[myblue,fill=mybluefill,thick,solid,rounded corners=2] (1+0.1+26,0+0.1) rectangle (2-0.1+26,1-0.1);

\draw[myblue,fill=mybluefill,thick,solid,rounded corners=2] (1+0.1+28,0+0.1) rectangle (2-0.1+28,1-0.1);

\draw[mygreen,fill=mygreenfill,thick,solid,rounded corners=2] (1+0.1+30,0+0.1) rectangle (2-0.1+30,1-0.1);

\draw[] (-1.2,0.5) node {\small $\rho_\text{in}$};
\draw[] (33.4,0.5) node {\small $\rho_\text{out}$};

\draw[] (0,-0.5) node {\small \textbf{Ref.}};

\draw[] (1.5,-0.5) node {\footnotesize ${\color{myblue}\mathcal{I}}$};
\draw[] (1.5+2,-0.5) node {\footnotesize ${\color{myblue}\mathcal{I}}$};
\draw[] (1.5+4,-0.5) node {\footnotesize ${\color{myblue}\mathcal{I}}$};
\draw[] (1.5+6,-0.5) node {\footnotesize ${\color{myblue}\mathcal{I}}$};

\draw[] (1.5+8,-0.5) node {\footnotesize ${\color{myblue}\mathcal{I}}$};
\draw[] (1.5+2+8,-0.5) node {\footnotesize ${\color{myblue}\mathcal{I}}$};
\draw[] (1.5+4+8,-0.5) node {\footnotesize ${\color{myblue}\mathcal{I}}$};
\draw[] (1.5+6+8,-0.5) node {\footnotesize ${\color{myblue}\mathcal{I}}$};

\draw[] (1.5+16,-0.5) node {\footnotesize ${\color{myblue}\mathcal{I}}$};
\draw[] (1.5+2+16,-0.5) node {\footnotesize ${\color{myblue}\mathcal{I}}$};
\draw[] (1.5+4+16,-0.5) node {\footnotesize ${\color{myblue}\mathcal{I}}$};
\draw[] (1.5+6+16,-0.5) node {\footnotesize ${\color{myblue}\mathcal{I}}$};

\draw[] (1.5+24,-0.5) node {\footnotesize ${\color{myblue}\mathcal{I}}$};
\draw[] (1.5+2+24,-0.5) node {\footnotesize ${\color{myblue}\mathcal{I}}$};
\draw[] (1.5+4+24,-0.5) node {\footnotesize ${\color{myblue}\mathcal{I}}$};
\draw[] (1.5+6+24,-0.5) node {\footnotesize ${\color{myblue}\mathcal{I}}$};

\draw[mygrey,thin,solid] (-2,-1) -- (33,-1);

\draw[] (-0.5,-0.5-1) node {\small \textbf{(C)DD}};

\draw[] (1.5,-0.5-1) node {\footnotesize ${\color{myblue}\mathcal{X}}$};
\draw[] (1.5+2,-0.5-1) node {\footnotesize ${\color{myblue}\mathcal{Z}}$};
\draw[] (1.5+4,-0.5-1) node {\footnotesize ${\color{myblue}\mathcal{X}}$};
\draw[] (1.5+6,-0.5-1) node {\footnotesize ${\color{myblue}\mathcal{Z}}$({\color{mygreen}$\mathcal{X}$})};

\draw[] (1.5+8,-0.5-1) node {\footnotesize ${\color{myblue}\mathcal{X}}$};
\draw[] (1.5+2+8,-0.5-1) node {\footnotesize ${\color{myblue}\mathcal{Z}}$};
\draw[] (1.5+4+8,-0.5-1) node {\footnotesize ${\color{myblue}\mathcal{X}}$};
\draw[] (1.5+6+8,-0.5-1) node {\footnotesize ${\color{myblue}\mathcal{Z}}$({\color{mygreen}$\mathcal{Z}$})};

\draw[] (1.5+16,-0.5-1) node {\footnotesize ${\color{myblue}\mathcal{X}}$};
\draw[] (1.5+2+16,-0.5-1) node {\footnotesize ${\color{myblue}\mathcal{Z}}$};
\draw[] (1.5+4+16,-0.5-1) node {\footnotesize ${\color{myblue}\mathcal{X}}$};
\draw[] (1.5+6+16,-0.5-1) node {\footnotesize ${\color{myblue}\mathcal{Z}}$({\color{mygreen}$\mathcal{X}$})};

\draw[] (1.5+24,-0.5-1) node {\footnotesize ${\color{myblue}\mathcal{X}}$};
\draw[] (1.5+2+24,-0.5-1) node {\footnotesize ${\color{myblue}\mathcal{Z}}$};
\draw[] (1.5+4+24,-0.5-1) node {\footnotesize ${\color{myblue}\mathcal{X}}$};
\draw[] (1.5+6+24,-0.5-1) node {\footnotesize ${\color{myblue}\mathcal{Z}}$({\color{mygreen}$\mathcal{Z}$})};

\draw[mygrey,thin,solid] (-2,-1-1) -- (33,-1-1);

\draw[] (-0.9,-0.5-1-1) node {\small \textbf{(M)ODD}};

\draw[] (1.5,-0.5-1-1) node {\footnotesize ${\color{mybrown}\mathcal{A}_{1}}$};
\draw[] (1.5+2,-0.5-1-1) node {\footnotesize ${\color{mybrown}\mathcal{A}_{2}}$};
\draw[] (1.5+4,-0.5-1-1) node {\footnotesize ${\color{mybrown}\mathcal{A}_{3}}$};
\draw[] (1.5+6,-0.5-1-1) node {\footnotesize ${\color{mybrown}\mathcal{A}_{4}}$({\color{mygreen}$\mathcal{B}_{1}$})};

\draw[] (1.5+8,-0.5-1-1) node {\footnotesize ${\color{mybrown}\mathcal{A}_{5}}$};
\draw[] (1.5+2+8,-0.5-1-1) node {\footnotesize ${\color{mybrown}\mathcal{A}_{6}}$};
\draw[] (1.5+4+8,-0.5-1-1) node {\footnotesize ${\color{mybrown}\mathcal{A}_{7}}$};
\draw[] (1.5+6+8,-0.5-1-1) node {\footnotesize ${\color{mybrown}\mathcal{A}_{8}}$({\color{mygreen}$\mathcal{B}_{2}$})};

\draw[] (1.5+16,-0.5-1-1) node {\footnotesize ${\color{mybrown}\mathcal{A}_{9}}$};
\draw[] (1.5+2+16,-0.5-1-1) node {\footnotesize ${\color{mybrown}\mathcal{A}_{10}}$};
\draw[] (1.5+4+16,-0.5-1-1) node {\footnotesize ${\color{mybrown}\mathcal{A}_{11}}$};
\draw[] (1.5+6+16,-0.5-1-1) node {\footnotesize ${\color{mybrown}\mathcal{A}_{12}}$({\color{mygreen}$\mathcal{B}_{3}$})};

\draw[] (1.5+24,-0.5-1-1) node {\footnotesize ${\color{mybrown}\mathcal{A}_{13}}$};
\draw[] (1.5+2+24,-0.5-1-1) node {\footnotesize ${\color{mybrown}\mathcal{A}_{14}}$};
\draw[] (1.5+4+24,-0.5-1-1) node {\footnotesize ${\color{mybrown}\mathcal{A}_{15}}$};
\draw[] (1.5+6+24,-0.5-1-1) node {\footnotesize ${\color{mybrown}\mathcal{A}_{16}}$({\color{mygreen}$\mathcal{B}_{4}$})};

\draw[mygrey,thin,solid] (-2,-1-1-1) -- (33,-1-1-1);

\draw[mygrey,thin,solid] (1,0) -- (1,-1-1-1);

\draw[] (1.5+6+8,-0.5-1-1-1.5) node {(a)};

\end{tikzpicture}

\begin{minipage}{0.32\linewidth}
        \centering
        \includegraphics[width=\linewidth]{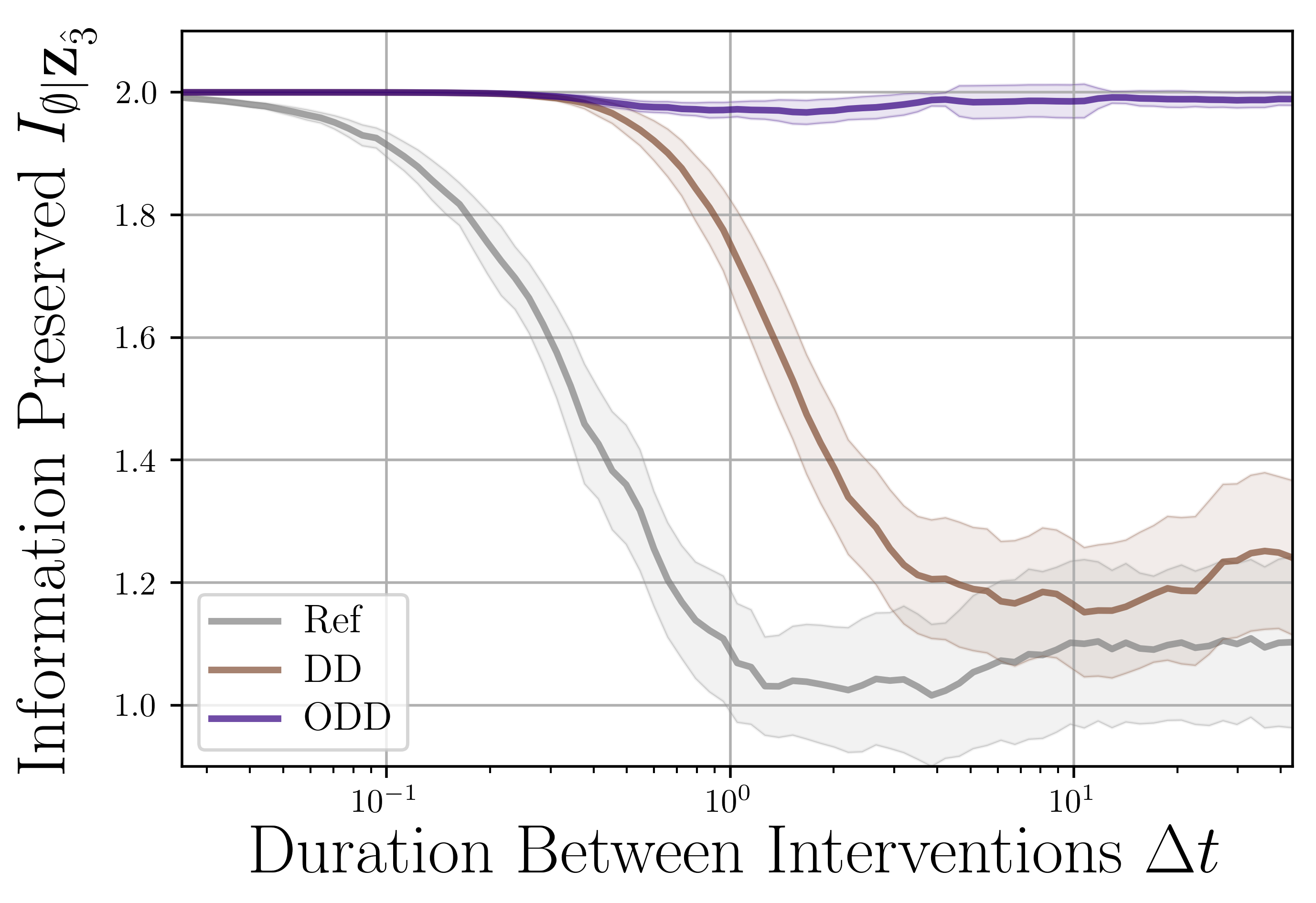}
        
        (b)
    \end{minipage}
\begin{minipage}{0.32\linewidth}
        \centering
        \includegraphics[width=\linewidth]{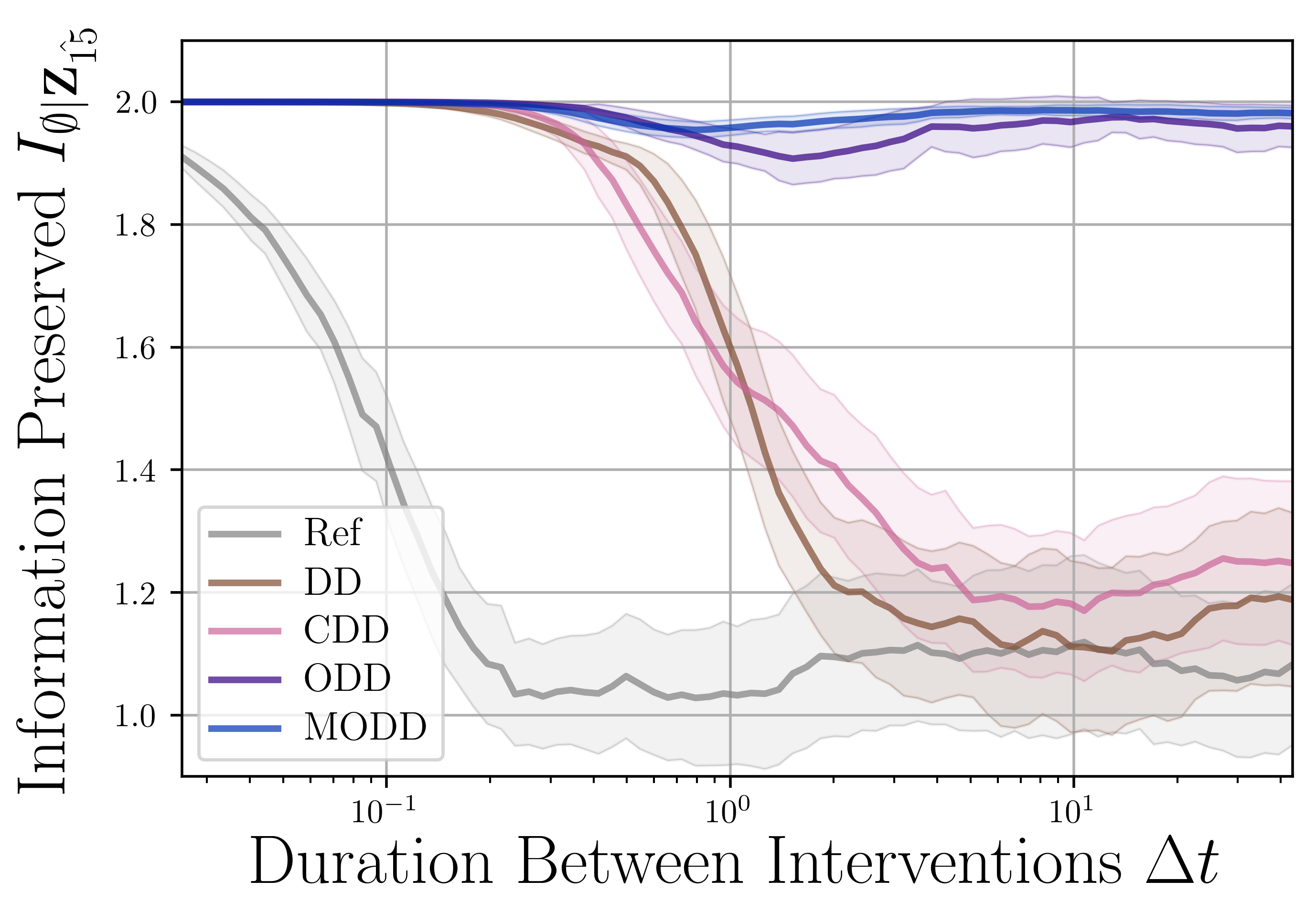}
        
        (c)
    \end{minipage}
    \begin{minipage}{0.32\linewidth}
        \centering
        \includegraphics[width=\linewidth]{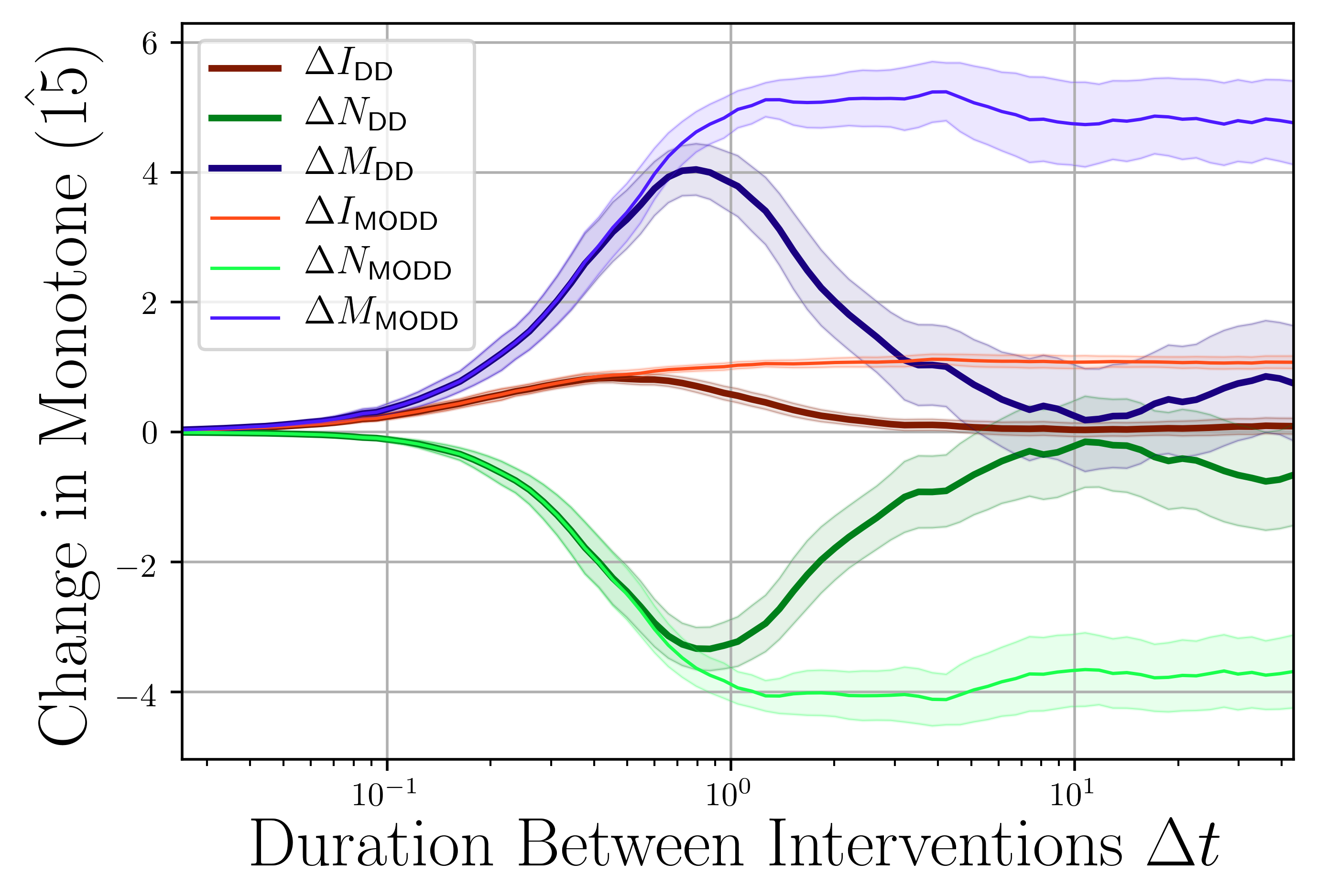}
        
        (d)
    \end{minipage}

\caption{Numerical study based on the model described in Sec.~\ref{sec:methods}. All free evolutions last a duration $\Delta t$, in units of Planck's constant. The data is smoothed in the time domain to average over oscillatory behaviour at long durations, and a $2\sigma$ confidence interval of the mean is shaded. {\bf(a)} A process with 15 intermediate times has either no control applied (Ref), DD, CDD, ODD, or MODD applied.  DD, CDD, or multitimescale optimisation applied. For CDD and MODD, the additional operations applied at the longer timescale are in green. {\bf(b)} Effectiveness at preserving channel-level information $I_{\emptyset|\mathbf{Z}_{\hat{n}}}$ and $I_{\emptyset}$ for the first 3 intermediate interventions of (a). A single period of DD can correct noise for roughly one order of magnitude in $\Delta t$, while ODD is effective for the whole range of $\Delta t$ values tested. {\bf(c)} Effectiveness at preserving channel-level information $I_{\emptyset|\mathbf{Z}_{\hat{n}}}$ and $I_{\emptyset}$ for all 15 intermediate interventions of (a). CDD shows a marginal improvement over DD. Both are significantly outperformed by ODD, which comes close to saturating mutual information for the whole set of $\Delta t$ values. Finally, MODD, obtains a small improvement over the already near-ideal result of ODD. {\bf(d)} Changes in monotones, e.g. $\Delta I= I_{\hat{m}|\mathbf{Z}_{n}}(\mathbf{T}_{\hat{n}})-I_{\hat{m}}(\mathbf{T}_{\hat{n}})$, after coarse-graining from $n=15$ intermediate interventions to $m=3$. For both traditional DD and MODD, $N$ is always lower with the protocol than without, indicating a conversion of multitime correlations to system-level correlations $M$. This effect, compounded by a Zeno-like slowdown in the formation of new system-environment correlations causes $I$ to increase as well. The sharp decline in the effectiveness of DD in (c) begins at the peak in (d), suggesting that DD loses effectiveness when it can no longer convert between $N$ and $M$.} \label{fig:numerics}
\end{figure*}

We test this resource theoretic characterisation of DD using a prototypical model (see Sec.~\ref{sec:methods} for details). Our first result is to optimally decouple a 4-intervention process. The left panel of Fig.~\ref{fig:numerics} shows that ODD achieves significant noise reduction over standard DD, especially at long timescales. It is well-known that DD ceases to be effective if the pulses are too far separated. Yet, if indeed it is possible to reduce $N$ then effective decoupling from the environment should still be possible, as demonstrated by the efficacy of ODD. What these results suggest is that DD works primarily due the second decoupling mechanism discussed above, limiting it to short timescales. In contrast, ODD is primarily utilising the first decoupling mechanism and thus able be effective for longer timescales.

Searching for the ODD pulses requires characterising the noise process $\mathbf{T}_{\hat{n}}$. The complexity of a process grows with the number of interventions $n$. This is a severe limiting factor to the scalability of ODD. However, by employing a see-saw SDP, we can iteratively nest superprocesses to optimise an $n$-intervention pulse by optimising $m$-interventions at a time (with $m<n$). The total complexity, thus, grows linearly. The middle panel of Fig.~\ref{fig:numerics} shows that, indeed iterative ODD remains highly effective at long times, while DD does not.

Yet, there still more resources remaining untapped. Once we find $n$ pulses for iterative ODD, we may further optimise these pulses at a higher timescale, e.g. apply an additional pulse for every fourth intervention. This allows for reducing the non-Markovian correlations $N$ at the larger timescales. We call this layered approach \textit{multitimescale optimal dynamical decoupling} (MODD), which we detail in Sec.~\ref{sec:optimisation}. MODD is closely related (in spirit) to concatenated dynamical decoupling~\cite{faulttolerantquantumdynamicaldecoupling} (CDD). These methods operate at multiple timescales, and thus become imperative when the noise is complex, and temporal correlations exist at multiple timescales. The middle panel of Fig.~\ref{fig:numerics} shows that, while ODD is able to preserve most of information in a process, MODD still allows for further gains. 

Another distinct advantage of iterative ODD and MODD is that the consumption of non-Markovianity can be quantified at each transition to a shorter timescale. The right panel of Fig.~\ref{fig:numerics} shows how non-Markovianity $N$ is consumed in coarse-graining a 16 free evolutions into 4 free evolutions for each strategy above. This figure highlights the relationship between the non-Markovianity, temporal resolution, and information preservation, and single-timescale optimisation. This is the main message of this work; namely, there are plentiful of untapped quantum dynamical resources that have the potential to extract a great deal of quantum coherence from the current generation of NISQ devices.

\section{Discussion} \label{sec:discussion}

For a practical implementation of the methods described in the last section, one first requires characterising the multitime noise process $\mathbf{T}_{\hat{n}}$ itself. Indeed, this has recently been achieved on a commercial-grade device~\cite{demonstrationofnonmarkovianprocess}. Since then, non-Markovian characterisation has been refined~\cite{nonmarkovianquantumprocesstomography} and can even be automated~\cite{PhysRevA.102.062414}. Importantly, the non-Markovian noise characterisation yields prediction-fidelities limited only by the shot noise~\cite{demonstrationofnonmarkovianprocess}, far outperforming methods that make a Markov assumption. Moreover, Refs.~\cite{demonstrationofnonmarkovianprocess, nonmarkovianquantumprocesstomography, diagnosingtemporalquantum} used the characterisation information for noise reduction, which is a variant of optimal dynamical decoupling. 

The present results allow for a formal quantification of the resources expended in noise mitigation techniques, and extend their domain of applicability to arbitrary lengths. The real-device implementations listed above mean that our results can be readily integrated on state-of-the-art devices. Doing so will naturally enhance the quantum capabilities of these devices and, e.g., foster an increase in the quantum volume.

While the immediate advantages are clear, there remain several outstanding challenges. Perhaps the most consequential unresolved question is whether there exists a simple bound on how large the experimenter can make $I_{\emptyset|\mathbf{Z}_{\hat{n}}}$. The contractivity of relative entropy under free transformations implies that total mutual information $I(\mathbf{T}_{\hat{n}})$ will always be at least as large as that of any other process it can reach, including $\sup_{\mathbf{Z}_{\hat{n}} \in \mathsf{Z}_{\hat{\mathbbm{W}}} } I_{\hat{m}}(\llbracket \mathbf{T}_{\hat{n}} | \mathbf{Z}_{\hat{n}}| \mathbf{I}_{\hat{n} \setminus \hat{m}} \rrbracket )$, with $\hat{m}=\emptyset$ corresponding to the highest one can make the input-output mutual information under any allowed control. However, the bound is not tight; the number of terms in $I(\mathbf{T}_{\hat{n}})$ is proportional to $|\hat{n}|+1$, implying that there is a tendency for $I$ to be higher for finer grained process tensors independently of the underlying physics. However, simple numerical checks show that using the normalised value $I/(|\hat{n}|+1)$ does not properly bound $I_{\emptyset}$ in all cases. Obtaining a separate monotone that provides a tighter bound, or identifying a sub-class of process tensors where $I/(|\hat{n}|+1)$ is a valid bound would be a powerful result: one would then only need to characterise a process tensor to determine the best \textit{any} noise reduction scheme can perform. Trace distance and diamond distance~\cite{memorystrength} are promising alternatives to relative entropy to produce monotones, since they are normalised to unity. However, both suffer from a disadvantage compared to relative entropy, in that they require a non-trivial optimisation in order to find the nearest free process. How this trade-off plays out in practice is likely to have important consequences for efficient characterisation and minimisation of noise on real quantum devices.

\subsection{Other Noise Suppression Methods in $\mathsf{Q}_{\hat{\mathbbm{W}}}$} \label{sec:subtheories}

In this paper, while we have introduced and developed the core idea of resource theories of temporal resolution, we have only explored one facet of this rich structure. We have devoted most of our attention to the RTTR $\mathsf{Q}_{\hat{\mathbbm{W}}}$ and its subset $\mathsf{D}_{\hat{\mathbbm{W}}}$ to explore dynamical decoupling. Yet, there remain many other structures unexplored that are related to physically interesting phenomena such as the quantum Zeno effect, decoherence free subspaces, and quantum error correction.

In fact, quantum error correction (QEC) can be naturally framed within $\mathsf{Q}_{\hat{\mathbbm{W}}}$ by letting the system be composed of many qubits. In the case of QEC, the experimenter can employ a specified number of ancillary qubits to spread the information about the main system across many subsystems. This enables syndrome measurements and informs one about the corresponding corrections to be carried out. QEC relies on an experimenter who can, in addition to what is required for DD, perform measurements as underlined by the inclusion $\mathsf{D}_{\hat{\mathbbm{W}}} \subset \mathsf{Q}_{\hat{\mathbbm{W}}}$. What is gained with this extra ability is that Markovian processes are amenable to QEC despite not being amenable to DD in general~\cite{canquantummarkovevolutions}. A detailed exploration has the potential for discovering more untapped resources and refining the practical implementation of QEC.

Decoherence free subspaces (DFS)~\cite{ddreview, decoherencefreesubspaces} naturally fit in the RTTR $\mathsf{P}_{\hat{\mathbbm{W}}}$ which is a subset of the RTTR $ \mathsf{D}_{\hat{\mathbbm{W}}} \subset \mathsf{Q}_{\hat{\mathbbm{W}}}$. Here, even with Markovian processes and greater experimental constraints it may be possible to harness symmetries present in the system-environment interaction. The sub-theory $\mathsf{P}_{\hat{\mathbbm{W}}}$ is more restrictive than $\mathsf{D}_{\hat{\mathbbm{W}}}$ because inducing a DFS requires only the repetition of a \emph{single} unitary, rather than a complex sequence of different unitaries. The fact that $\mathsf{P}_{\hat{\mathbbm{W}}} \subset \mathsf{D}_{\hat{\mathbbm{W}}}$ also opens the possibility of new techniques to harness both effects simultaneously.

The quantum Zeno effect (QZE)~\cite{analysisofthequantumzeno} can be cast as the RTTR $\mathsf{C}_{\hat{\mathbbm{W}}} \subset \mathsf{Q}_{\hat{\mathbbm{W}}}$, where the repeated action is a measurement rather than a unitary, and one can still preserve classical information. The QZE -- as we have laid out -- shares a commonality~\cite{unificationofdynamicaldecoupling} with DD in that both types of actions remove correlations between the system and environment to slow down decay process. The difference lies in the fact that DD harnesses non-Markovianity to do so, and consequentially preserves full quantum information, rather than just classical information. The sub-theory structure of $\mathsf{Q}_{\hat{\mathbbm{W}}}$ is summarised in panel (b) of Fig.~\ref{fig:coarsegraining}.

Finally, let us bring the discussion back to DD. Uhrig dynamical decoupling (UDD)~\cite{keepingaqubitalive} and other variants of DD can be examined within the framework we have provided. It is known that UDD achieves high order decoupling using relatively few pulses by optimising the timing between pulses within a specified interval. Interestingly, from the perspective of RTTRs, this optimised pulse spacing can be interpreted as demanding a greater resource. However, since process tensors with the same number of differently spaced interventions are not comparable in the resource \emph{preorder}, different monotones used to impose \emph{total orders} onto the resource theory may disagree about the cost of this extra requirement. In simpler terms, one could define the `temporal resolution' required to perform UDD as the shortest gap between pulses used (which is the interval between $t_n$ and t), and create a process tensor with $~\lfloor t/(t-t_n) \rfloor -1$ instants for intermediate interventions. In such a scenario, the resource requirement is not the number of pulses which \textit{are} used, but the number of pulses which \emph{could be} used, resulting in the perception of UDD having a significantly poorer scaling. Clearly, whether this is experimentally the case depends on the physical constraints of the apparatus -- is the limiting factor the total number of actions, or their rapidity?

\subsection{Application of RTTRs Beyond Noise Suppression} \label{sec:suptheories}

It is conceivable that resource theories could be devised where the abilities of the experimenter (the form and connectivity of free superprocesses) change as the timescale changes. Interesting trade-offs between speed and quality might appear in such theories, meaning that they might not satisfy the notion of irreversibility under temporal coarse-grainings we use in our theories. As such, these interesting structures are beyond the scope of this work and subject to future work.

A more powerful resource theory than $\mathsf{Q}_{\hat{\mathbbm{W}}}$ would be one where the free superprocesses are those of $(\mathscr{B},\mathscr{Q})$~\cite{resourcetheoriesofmultitime} -- corresponding to an experimenter who can carry a classical memory that is fed forward between control operations. It remains unexplored how this additional power can be utilised for information preservation. However, from a technical standpoint this theory may be easier to study, since free processes and allowed controls are convex sets.

The discussion in this paper has been concerned with the conversion of single copies of process tensors, which -- one might argue -- is the most operationally relevant scenario as they usually used to depict a single experiment. However, the idea of distilling noiseless channels -- akin to magic state or entanglement distillation -- is likely to be immensely useful for quantum technologies. Clearly any result for channel distillation~\cite{reversibleframework, fundementallimitationsondistillation} applies here too, since any process tensor can be coarse-grained into a channel. However, as Obs.~\ref{obs:irreversibility} shows, more channels can be reached if one has the additional temporal resolution of a process tensor, raising the question of whether existing rates of channel distillation can be improved by accounting for this extra resource. Such a theory will set the rules for what transformations are possible between processes with different entanglement structures~\cite{witnessingquantummemory,genuinemultipartiteentanglementintime}, which can form the basis for a resource theory of spatiotemporal entanglement~\cite{entanglementtheoryandthesecondlaw}.

\section{Methods} \label{sec:methods}

Here we present supporting details behind claims made throughout the paper, as well as a brief description of our numerical model. We begin by outlining a proof of the contractivity of relative entropy under temporal coarse-graining -- as required for $I$ to be a monotone in $\mathsf{Q}_{\hat{\mathbbm{W}}}$. Next we provide a proof of Obs.~\ref{obs:irreversibility} via a supporting lemma concerning the representation of free transformations. We follow this up with a monotone notion of irreversibility, and the theorem linking it to perceived non-monotonicity by a coarse-grained experimenter. Next, we prove that $I$ can be partitioned into $M+N$, and show that both are individually monotones in $\mathsf{Q}_{\hat{\mathbbm{W}}}$. The final section provides a brief description of the numerical model used for the results of  Fig.~\ref{fig:numerics}.

Also, see the supplementary material for a detailed discussion of the parallel and sequential product structure of these theories, as well as the sub-theory structure of $\mathsf{Q}_{\hat{\mathbbm{W}}}$. The supplementary material also contains a discussion of the Markovianisation of noise via dynamical decoupling, and a summary of the notation used throughout this text.

\subsection{Contractivity of Relative Entropy Under Coarse-Grainings} \label{sec:contractivityoutline}

As a pre-requisite for $\mathsf{Q}_{\hat{\mathbbm{W}}}$ to be considered a useful resource theory for describing information preservation, we require that mutual information is respected as a monotone. It is already known that free superprocesses respect this quantity~\cite{resourcetheoriesofmultitime}. What we seek to show here is that mutual information is contractive under \emph{temporal coarse-graining}. To do this, it is sufficient to show that temporal coarse-graining does not increase the relative entropy $S(\mathbf{T}_{\hat{n}}\Vert \mathbf{R}_{\hat{n}}):=\text{tr} \{ \mathbf{T}_{\hat{n}} \log(\mathbf{T}_{\hat{n}}) \} - \text{tr} \{ \mathbf{T}_{\hat{n}} \log(\mathbf{R}_{\hat{n}}) \}$ between any two process tensors $\mathbf{T}_{\hat{n}}$ and $\mathbf{R}_{\hat{n}}$. 

The proof (Sup.~\ref{sup:contractivity}) involves separately proving that all pre-requisites of a theorem , provided in Ref.~\cite{monotonicityofthequantumrelative}, that guarantees contractivity of relative entropy under a mapping, are satisfied. These conditions are: positivity, trace preservation between the relevant domain and image sets, and linearity. Trace preservation, and positivity are shown by explicitly writing the Choi state of a process, before and after coarse-graining, in terms of the composition of channels acting on maximally entangled states. Linearity is shown directly.

\subsection{Representation of Free Transformations}

Here we prove a lemma used in the proof that coarse-grainings are irrerversible.
\begin{lemma} \label{lem:representation}
Given a resource theory of temporal resolution $ \mathsf{S}_{\hat{\mathbbm{W}}}$, and sub-theories for fixed temporal resolution $ \mathsf{S}_{\hat{m}} \subseteq \mathsf{S}_{\hat{n}} \subseteq \mathsf{S}_{\hat{\mathbbm{W}}}$, all free transformations from $\mathsf{S}_{\hat{n}}$ to $\mathsf{S}_{\hat{m}}$ can be represented as
\begin{equation}
   \mathbf{Y}_{\hat{n} \setminus \hat{m}} = \mathbf{Z}_{\hat{n}} | \mathbf{I}_{\hat{n} \setminus \hat{m}} \rrbracket.
\end{equation}
\end{lemma}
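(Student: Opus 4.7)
The plan is to show that any free transformation from $\mathsf{S}_{\hat{n}}$ to $\mathsf{S}_{\hat{m}}$ admits the canonical factorisation $\mathbf{Z}_{\hat{n}} \,|\, \mathbf{I}_{\hat{n} \setminus \hat{m}} \rrbracket$. By construction, every free transformation in $\mathsf{S}_{\hat{\mathbbm{W}}}$ is a finite composition of two kinds of generators: free superprocesses $\mathbf{Z}_{\hat{k}}$ acting at some fixed resolution $\hat{k}$, and trivial coarse-grainings $\mathbf{I}_{\hat{k}\setminus\hat{j}}$ with $\hat{j} \subseteq \hat{k}$. The goal is therefore to rewrite an arbitrary alternating composition of these generators into the required shape.

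First I would note two closure properties. (i) Two successive coarse-grainings compose into one: $\mathbf{I}_{\hat{n}\setminus\hat{k}}$ followed by $\mathbf{I}_{\hat{k}\setminus\hat{m}}$ equals $\mathbf{I}_{\hat{n}\setminus\hat{m}}$, since the underlying tensor products of identity maps on disjoint time slots just combine. (ii) Two free superprocesses at the same resolution $\hat{n}$ compose into a single free superprocess at resolution $\hat{n}$, which is the standard closure axiom of the underlying RTQP. Taken together, if only one kind of generator ever appeared the lemma would be trivial.

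The core step is a commutation identity. I would show that for any $\hat{k} \subseteq \hat{n}$ and any free $\mathbf{Z}_{\hat{k}}$ there exists a free $\mathbf{Z}'_{\hat{n}}$ such that ``coarse-grain then superprocess'' equals ``superprocess then coarse-grain'':
\begin{equation}
   \mathbf{Z}_{\hat{k}} \circ \mathbf{I}_{\hat{n}\setminus\hat{k}} \;=\; \mathbf{I}_{\hat{n}\setminus\hat{k}} \circ \mathbf{Z}'_{\hat{n}}.
\end{equation}
The lift $\mathbf{Z}'_{\hat{n}}$ is built by padding $\mathbf{Z}_{\hat{k}}$ with identity pre- and post-processing channels at every time in $\hat{n}\setminus\hat{k}$, while retaining the original $(\mathcal{V}^{sa}_{\alpha},\mathcal{W}^{sa}_{\alpha})$ of $\mathbf{Z}_{\hat{k}}$ at every time in $\hat{k}$. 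Since memoryless identity pulses are always available as a free control, and adding them introduces no new memory connections, $\mathbf{Z}'_{\hat{n}}$ lies in the same free set of superprocesses as $\mathbf{Z}_{\hat{k}}$. The equality follows because the identities of $\mathbf{Z}'_{\hat{n}}$ at the slots in $\hat{n}\setminus\hat{k}$ merge with the trivial coarse-graining $\mathbf{I}_{\hat{n}\setminus\hat{k}}$ to reproduce exactly the action of $\mathbf{Z}_{\hat{k}}$ applied after coarse-graining.

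With this commutation in hand, the overall proof is a straightforward induction on the length of the generator chain: starting from any such chain, repeatedly apply the commutation to push each coarse-graining past every superprocess to its right, until the composition takes the form of a sequence of superprocesses at resolution $\hat{n}$ followed by a sequence of coarse-grainings. Closure property (ii) collapses the former into a single $\mathbf{Z}_{\hat{n}}$ and closure property (i) collapses the latter into a single $\mathbf{I}_{\hat{n}\setminus\hat{m}}$, yielding the claimed form. The main obstacle will be verifying that the lift $\mathbf{Z}'_{\hat{n}}$ genuinely respects the form and connectivity constraints that single out the particular RTTR under consideration. For $\mathsf{Q}_{\hat{\mathbbm{W}}}$ and its sub-theories $\mathsf{D}_{\hat{\mathbbm{W}}}$, $\mathsf{P}_{\hat{\mathbbm{W}}}$, and $\mathsf{C}_{\hat{\mathbbm{W}}}$ this is immediate, since the identity is a memoryless unitary that each theory explicitly permits; for more exotic RTTRs with additional connectivity restrictions the same padding argument would need to be checked case by case.
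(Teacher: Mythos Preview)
Your proposal is correct and follows essentially the same approach as the paper: start from a general alternating chain of superprocesses and coarse-grainings, lift each coarser superprocess to the finest resolution $\hat{n}$ by padding with identity operations at the unused times, then collapse the resulting composition into a single $\mathbf{Z}_{\hat{n}}$ followed by a single $\mathbf{I}_{\hat{n}\setminus\hat{m}}$. The paper phrases the lifting step as an appeal to the generalised Kolmogorov extension theorem (guaranteeing that $\mathbf{Z}_{\hat{\beta}}$ can be written as $\llbracket \mathbf{I}_{\hat{\alpha}\setminus\hat{\beta}} | \mathbf{Z}_{\hat{\alpha}} | \mathbf{I}_{\hat{\alpha}\setminus\hat{\beta}}\rrbracket$ with identity actions on $\hat{\alpha}\setminus\hat{\beta}$), whereas you construct the lift directly and make the commutation-and-induction structure explicit; your caveat that the identity padding must be checked to lie in the free set for the specific RTTR at hand is a useful remark that the paper leaves implicit.
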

\begin{proof}
A sequence of transformations in a resource theory of temporal resolution can be written explicitly as
\begin{equation} \label{eq:functorchain}
      \mathbf{Z}_{\hat{z}} | \mathbf{I}_{\hat{z} \setminus \hat{y}} \rrbracket |  \mathbf{Z}_{\hat{y}} | \mathbf{I}_{\hat{y} \setminus \hat{x}} \rrbracket \dots |\mathbf{Z}_{\hat{b}} | \mathbf{I}_{\hat{b} \setminus \hat{a}} \rrbracket,
\end{equation}
with $\hat{z} \supseteq  \dots \supseteq \hat{a}$. A generalised Kolmogorov extension theorem~\cite{kolmogorov} implies that for any level of coarse-graining $\hat{\beta}$ there exists a fine grained view of the underlying process such that $\hat{\alpha} \supseteq \hat{\beta}$ such that any $\mathbf{Z}_{\hat{\beta}}$ can be expressed as $\mathbf{Z}_{\hat{\beta}} := \llbracket \mathbf{I}_{{\hat{\alpha}} \setminus {\hat{\beta}}} | \mathbf{Z}_{\hat{\alpha}} | \mathbf{I}_{\hat{\alpha} \setminus \hat{\beta}} \rrbracket$ for some fine grained superprocess $\mathbf{Z}_{\hat{\alpha}}$. The actions of $\mathbf{Z}_{\hat{\alpha}}$ at times ${\hat{\alpha}} \setminus {\hat{\beta}}$ are identities, ensuring that the physical situation is equivalent. Hence Eq.~\eqref{eq:functorchain} can be re-written as
\begin{equation} \label{eq:functorchain2}
      \mathbf{Z}_{\hat{z}} |\mathbf{Z}^{y}_{\hat{z}}| \dots |\mathbf{Z}^{a}_{\hat{z}} | \mathbf{I}_{\hat{z} \setminus \hat{a}} \rrbracket  := \mathbf{Z}'_{\hat{z}} | \mathbf{I}_{\hat{z} \setminus \hat{a}} \rrbracket,
\end{equation}
for a sequence of fine grained actions $\mathbf{Z}^{y}_{\hat{z}},\dots,\mathbf{Z}^{a}_{\hat{z}}$
\end{proof}

\subsection{Proof of Obs.~\ref{obs:irreversibility}}

To prove that more processes can be reached by applying a superprocess then coarse-graining, compared to coarse-graining and then applying a superprocess, we show that the latter can always be re-written in the form of Lem.~\ref{lem:representation}, but that the converse is false in general. 

The former statement follows straightforwardly from Lem.~\ref{lem:representation}. By this lemma, the transformation $ \mathbf{I}_{{\hat{n}} \setminus {\hat{m}}} \rrbracket \Big| \llbracket \mathbf{I}_{{\hat{n}} \setminus {\hat{m}}} | \mathbf{Z}_{\hat{n}} | \mathbf{I}_{{\hat{n}} \setminus {\hat{m}}} \rrbracket$ can be represented by $\mathbf{Z}'_{\hat{n}} | \mathbf{I}_{{\hat{n}} \setminus {\hat{m}}} \rrbracket$ for an appropriate choice of $\mathbf{Z}'_{\hat{n}}$. 

Secondly, since $\hat{n} \supset \hat{m}$ is a strict inclusion, $\mathbf{Z}'_{\hat{n}}$ is restricted in that no non-trivial actions may occur at times in $\hat{n} \setminus \hat{m}$, proving Obs.~\ref{obs:irreversibility}.

\subsection{Monotone Statement of Irreversibility} \label{sec:monotoneirreversibility}

It is possible to re-frame irreversibility in terms of monotones. The following inequality follows directly from Obs.~\ref{obs:irreversibility}.
\begin{corollary} \label{cor:irreversibilitymonotone}
For any valid monotone $M: \mathsf{T}_{\hat{\mathbbm{W}}} \rightarrow \mathbbm{R}_{\geq0}$ in a resource theory of temporal resolution $\mathsf{S}_{\hat{\mathbbm{W}}}$, and all process tensors $\mathbf{T}_{\hat{n}} \in \mathsf{T}_{\hat{\mathbbm{W}}}$, with $\emptyset \subseteq \hat{m} \subseteq \hat{n}$
\begin{equation} \label{eq:monotoneirreversibility}
    \sup_{\mathbf{Z}_{\hat{n}} \in \mathsf{Z}_{\hat{n}}} M\Big( \llbracket \mathbf{T}_{\hat{n}} | \mathbf{Z}_{\hat{n}} | \mathbf{I}_{{\hat{n}}-{\hat{m}}} \rrbracket  \Big) \geq \sup_{\mathbf{Z}_{\hat{m}} \in \mathsf{Z}_{\hat{n}}} M\big( \Big\llbracket \llbracket \mathbf{T}_{\hat{n}}  | \mathbf{I}_{{\hat{n}}-{\hat{m}}} \rrbracket \big|  \mathbf{Z}_{\hat{m}}  \big).
\end{equation}
\end{corollary}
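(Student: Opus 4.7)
The plan is to reduce the inequality to a set-containment statement and then conclude by the elementary fact that a supremum over a larger set is no smaller. Concretely, I would define the two reachable sets
\begin{equation*}
\mathcal{R}_1 := \{\, \llbracket \mathbf{T}_{\hat{n}} | \mathbf{Z}_{\hat{n}} | \mathbf{I}_{\hat{n}\setminus\hat{m}} \rrbracket \,:\, \mathbf{Z}_{\hat{n}} \in \mathsf{Z}_{\hat{n}} \,\}
\end{equation*}
and
\begin{equation*}
\mathcal{R}_2 := \{\, \llbracket \llbracket \mathbf{T}_{\hat{n}} | \mathbf{I}_{\hat{n}\setminus\hat{m}} \rrbracket \, | \, \mathbf{Z}_{\hat{m}} \,:\, \mathbf{Z}_{\hat{m}} \in \mathsf{Z}_{\hat{m}} \,\}.
\end{equation*}
The target inequality is then $\sup_{x\in\mathcal{R}_1} M(x) \geq \sup_{x\in\mathcal{R}_2} M(x)$, so the entire task boils down to establishing the containment $\mathcal{R}_2 \subseteq \mathcal{R}_1$.

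The containment $\mathcal{R}_2 \subseteq \mathcal{R}_1$ is exactly the content of Obs.~\ref{obs:irreversibility}, and I would reuse its proof verbatim. Specifically, Lem.~\ref{lem:representation} lets me lift any $\mathbf{Z}_{\hat{m}} \in \mathsf{Z}_{\hat{m}}$ to a fine-grained $\mathbf{Z}'_{\hat{n}} \in \mathsf{Z}_{\hat{n}}$ that acts as the identity at all times in $\hat{n}\setminus\hat{m}$ and reproduces $\mathbf{Z}_{\hat{m}}$ on the remaining times; a Kolmogorov-extension-style argument (which was already used in the proof of Obs.~\ref{obs:irreversibility}) guarantees that this constrained lift exists within $\mathsf{Z}_{\hat{n}}$. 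Substituting this $\mathbf{Z}'_{\hat{n}}$ into the definition of $\mathcal{R}_1$ reproduces the corresponding element of $\mathcal{R}_2$, so every element of $\mathcal{R}_2$ does lie in $\mathcal{R}_1$.

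Once the containment is in place, the corollary follows from the monotonicity of the supremum: for any $y \in \mathcal{R}_2 \subseteq \mathcal{R}_1$ we have $M(y) \leq \sup_{x\in\mathcal{R}_1} M(x)$, and taking the supremum over $y\in\mathcal{R}_2$ gives the claim. Notice that this proof does not use the monotone property of $M$ at all; all that is required is that $M$ be a well-defined function $\mathsf{T}_{\hat{\mathbbm{W}}} \to \mathbbm{R}_{\geq 0}$, so the inequality holds uniformly over every valid monotone of $\mathsf{S}_{\hat{\mathbbm{W}}}$. The only non-trivial step is the appeal to Obs.~\ref{obs:irreversibility}, and I do not anticipate any real obstacle — the corollary is essentially a monotone-level restatement of the range-level irreversibility that is already in hand, and the main care needed is simply to verify that the lifted superprocess $\mathbf{Z}'_{\hat{n}}$ remains free, which is immediate since the identity is free and free operations compose.
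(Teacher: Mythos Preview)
Your proposal is correct and follows essentially the same approach as the paper: both arguments invoke Obs.~\ref{obs:irreversibility} to obtain the set containment $\mathcal{R}_2 \subseteq \mathcal{R}_1$ and then conclude by the monotonicity of the supremum over inclusion. Your write-up is in fact slightly sharper than the paper's, which mentions the monotone property $a \to b \Rightarrow M(a) \geq M(b)$ as if it were needed, whereas you correctly observe that the inequality holds for \emph{any} real-valued function $M$ once the range containment is in hand.
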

This works because monotones in resource theories satisfy $a \rightarrow b \ \Rightarrow \ M(a) \geq M(b)$, and Obs.~\ref{obs:irreversibility} guarantees that the left hand side of Eq.~\eqref{eq:monotoneirreversibility} can reach at least as much as the right hand side. Given the same underlying dynamics, a finer grained process tensor description will always be preferable. 

\subsection{Irreversibility Leads to Perceived Non-Monotonicity} \label{sec:irreversibilitynonmonotonicity}

One of the crucial features of $\mathsf{Q}_{\hat{\mathbbm{W}}}$ is that, when information preservation as measured by a coarse experimenter $I_{\emptyset}$ can be improved by a fine-grained experimenter's superprocess, $I_{\emptyset}$ does not induce a valid monotone. It is this property which makes $\mathsf{Q}_{\hat{\mathbbm{W}}}$ suitable for quantifying the amenability of noise processes to noise reduction techniques. 
\begin{theorem} \label{thm:nonmonotonicity}
In the resource theory of temporal resolution $\mathsf{Q}_{\hat{\mathbbm{W}}}$, input-output mutual information 
\begin{equation} \label{eq:notmonotone}
   I_{\emptyset}:\mathsf{T}_{\hat{\mathbbm{W}}} \rightarrow \mathbbm{R}_{\geq 0} , \quad I_{\emptyset}(\mathbf{T}_{\hat{n}}) := I\big(\llbracket \mathbf{T}_{\hat{n}} | \mathbf{I}_{\hat{n}} \rrbracket \big)
\end{equation}
can be increased by free transformations (non-monotonicity) iff Cor.~\ref{cor:irreversibilitymonotone} is realised as a strict inequality for $\hat{m}=\emptyset$.
\end{theorem}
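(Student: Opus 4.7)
The plan is to unpack Cor.~\ref{cor:irreversibilitymonotone} at the special level $\hat{m}=\emptyset$ and show that each side of that inequality coincides with one of the two objects appearing in the non-monotonicity condition. Non-monotonicity of $I_\emptyset$ means: there exists a free transformation $\mathbf{F}$ of $\mathbf{T}_{\hat{n}}$ with $I_\emptyset(\mathbf{F}(\mathbf{T}_{\hat{n}})) > I_\emptyset(\mathbf{T}_{\hat{n}})$. I would establish that this is equivalent to strict inequality in the corollary at $\hat{m}=\emptyset$ by rewriting both sides in terms of $I_\emptyset$.

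First I would invoke Lem.~\ref{lem:representation} to write every free transformation of $\mathbf{T}_{\hat{n}}$ in $\mathsf{Q}_{\hat{\mathbbm{W}}}$ as $\mathbf{F} = \mathbf{Z}_{\hat{n}} | \mathbf{I}_{\hat{n}\setminus\hat{m}}\rrbracket$ for some free superprocess $\mathbf{Z}_{\hat{n}}$ and some $\hat{m}\subseteq\hat{n}$. Because the full coarse-graining implicit in $I_\emptyset$ simply closes off any remaining open slots, and identity contractions on disjoint time subsets compose into a single full identity contraction $\mathbf{I}_{\hat{n}\setminus\hat{m}}\,|\,\mathbf{I}_{\hat{m}} = \mathbf{I}_{\hat{n}}$, one has $I_\emptyset(\mathbf{F}(\mathbf{T}_{\hat{n}})) = I(\llbracket \mathbf{T}_{\hat{n}} | \mathbf{Z}_{\hat{n}} | \mathbf{I}_{\hat{n}}\rrbracket)$ independently of the intermediate $\hat{m}$. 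Thus the supremum of $I_\emptyset(\mathbf{F}(\mathbf{T}_{\hat{n}}))$ over all free $\mathbf{F}$ is precisely the LHS of Cor.~\ref{cor:irreversibilitymonotone} at $\hat{m}=\emptyset$.

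Second I would collapse the RHS of the same corollary. At $\hat{m}=\emptyset$, the inner contraction $\llbracket \mathbf{T}_{\hat{n}} | \mathbf{I}_{\hat{n}}\rrbracket$ is the fully coarse-grained channel $\mathbf{T}_\emptyset$, and the outer supremum ranges over the free supermaps $\mathbf{Z}_\emptyset$ acting on this channel. By Thm.~\ref{thm:Imonotone}, $I$ is a monotone under the free supermaps of $\mathsf{Q}_{\hat{\mathbbm{W}}}$ at $n=0$, so $I$ cannot be raised above $I(\mathbf{T}_\emptyset)$; the trivial do-nothing supermap is itself free and saturates this bound. Hence the RHS equals $I(\mathbf{T}_\emptyset) = I_\emptyset(\mathbf{T}_{\hat{n}})$. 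Combining the two rewrites, Cor.~\ref{cor:irreversibilitymonotone} at $\hat{m}=\emptyset$ becomes the bare inequality $\sup_{\mathbf{F}\ \text{free}} I_\emptyset(\mathbf{F}(\mathbf{T}_{\hat{n}})) \geq I_\emptyset(\mathbf{T}_{\hat{n}})$. Strict inequality here is, by definition, the existence of a free $\mathbf{F}$ with $I_\emptyset(\mathbf{F}(\mathbf{T}_{\hat{n}})) > I_\emptyset(\mathbf{T}_{\hat{n}})$; conversely, any such witness strictly lower-bounds the supremum. Both directions of the iff follow at once.

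The only step carrying genuine content is the collapse of the RHS, which leans entirely on the already-established monotonicity of $I$ under free supermaps from Thm.~\ref{thm:Imonotone}; the hardest conceptual issue is simply checking that the identity supermap is free (so the bound is saturated) and that no other free supermap can exceed it. The remainder is a bookkeeping check that partial coarse-grainings compose sensibly with the full contraction $\llbracket\,\cdot\,|\,\mathbf{I}_{\hat{n}}\rrbracket$ hidden inside $I_\emptyset$, and that the range of the LHS-supremum in the corollary matches the set of free transformations furnished by Lem.~\ref{lem:representation}. I do not anticipate any technical obstacle beyond these routine consistency checks.
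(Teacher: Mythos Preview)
Your proposal is correct and follows essentially the same route as the paper: both reduce Cor.~\ref{cor:irreversibilitymonotone} at $\hat m=\emptyset$ to the bare inequality $\sup_{\mathbf{Z}_{\hat n}} I_\emptyset(\llbracket \mathbf{T}_{\hat n}|\mathbf{Z}_{\hat n}) \geq I_\emptyset(\mathbf{T}_{\hat n})$ by collapsing the RHS via the fact that memoryless supermaps cannot increase $I$, and then read off the iff. You are slightly more explicit than the paper in invoking Lem.~\ref{lem:representation} for the LHS bookkeeping and in citing Thm.~\ref{thm:Imonotone} for the RHS collapse, but the argument is the same.
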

\begin{proof}
Increasing the input-output mutual information $I_{\emptyset}$ is only possible if it is not a monotone, i.e. for all $\mathbf{T}_{\hat{n}} \in \mathsf{T}_{\hat{\mathbbm{W}}}$ there exists a transformation $\mathbf{Z}_{\hat{n}}$ such that $I_{\emptyset}(\mathbf{T}_{\hat{n}}) < I_{\emptyset}\big( \llbracket \mathbf{T}_{\hat{n}} | \mathbf{Z}_{\hat{n}} \big)$. Consider Eq.~\eqref{eq:monotoneirreversibility} in the case of $M=I$ and $\hat{m}= \emptyset$. Then, Cor.~\ref{cor:irreversibilitymonotone} reduces to
\begin{equation} \label{eq:MInonmonotonicity}
\sup_{\mathbf{Z}_{\hat{n}} \in \mathsf{Z}_{\hat{n}}} I_{\emptyset}\big( \llbracket \mathbf{T}_{\hat{n}} | \mathbf{Z}_{\hat{n}} \big) \geq I_{\emptyset}(\mathbf{T}_{\hat{n}}).
\end{equation}
Observe that the supremum on the right hand side disappears because free superprocesses in $\mathsf{Q}_{\hat{\mathbbm{W}}}$ with no intermediate interventions $\llbracket \mathbf{I}_{{\hat{n}}}  | \mathbf{Z}_{\hat{n}} | \mathbf{I}_{{\hat{n}}} \rrbracket$ are just memoryless supermaps, and cannot increase mutual information, i.e.,
\begin{equation}
  \sup_{\mathbf{Z}_{\hat{n}} \in \mathsf{Z}_{\hat{n}}}I\Big(\Big\llbracket \llbracket \mathbf{T}_{\hat{n}} | \mathbf{I}_{\hat{n}} \rrbracket \Big| \llbracket \mathbf{I}_{{\hat{n}}}  | \mathbf{Z}_{\hat{n}} | \mathbf{I}_{{\hat{n}}} \rrbracket \Big) = I_{\emptyset}(\mathbf{T}_{\hat{n}}).  
\end{equation}
Thus, if Cor.~\ref{cor:irreversibilitymonotone} holds as a strict inequality, then there will exist a superprocess $\mathbf{Z}_{\hat{n}}$ satisfying the strict inequality for Eq.~\eqref{eq:MInonmonotonicity}. Conversely, if there exists any $\mathbf{Z}_{\hat{n}}$ such that $I_{\emptyset}(\mathbf{T}_{\hat{n}}) < I_{\emptyset}\big( \llbracket \mathbf{T}_{\hat{n}} | \mathbf{Z}_{\hat{n}} \big)$, Cor.~\ref{cor:irreversibilitymonotone} will be a strict inequality for the same reason that memoryless supermaps cannot increase mutual information.
\end{proof}

\subsection{Partitioning of Total Mutual Information Into $M$ and $N$} \label{sec:splitI}

To see that $I$ can be partitioned into two distinct contributions, as described in Eq.~\eqref{eq:mutualinformationequality}, consider the difference in the mutual information monotone between a process $\mathbf{T}_{\hat{n}}$ and its nearest Markovian one $\mathbf{T}^{\text{Mkv}}_{\hat{n}}$
\begin{equation}
\begin{aligned}
    I(\mathbf{T}_{\hat{n}}) -I(\mathbf{T}^{\text{Mkv}}_{\hat{n}}) =& S\left( \mathbf{T}_{\hat{n}} \Vert \mathbf{T}^{\text{marg}}_{\hat{n}} \right) -S\left( \mathbf{T}^{\text{Mkv}}_{\hat{n}} \| \mathbf{T}^{\text{marg}}_{\hat{n}} \right) \\ 
    =&  \left(S\left( \mathbf{T}^{\text{marg}}_{\hat{n}} \right) - S\left( \mathbf{T}_{\hat{n}} \right) \right)\\ & \ -\left(S\left( \mathbf{T}^{\text{marg}}_{\hat{n}} \right) - S\left( \mathbf{T}^\text{Mkv}_{\hat{n}} \right) \right)  \\ 
    = & S\left( \mathbf{T}_{\hat{n}} \| \mathbf{T}^{\text{Mkv}}_{\hat{n}} \right) = N(\mathbf{T}_{\hat{n}}).
    \end{aligned}
\end{equation}
The above holds because $\mathbf{T}^{\text{Mkv}}_{\hat{n}}$ is a product of marginals of $\mathbf{T}_{\hat{n}}$, while $\mathbf{T}^{\text{marg}}_{\hat{n}}$ is a product of marginals of $\mathbf{T}^{\text{Mkv}}_{\hat{n}}$. Hence, the mutual information monotone can be partitioned into two contributions
\begin{equation} 
    I(\mathbf{T}_{\hat{n}}) =M(\mathbf{T}_{\hat{n}})+N(\mathbf{T}_{\hat{n}}).
\end{equation}

\subsection{Monotonicity of $I$, $M$, and $N$ under Free Superprocesses of $\mathsf{Q}_{\hat{\mathbbm{W}}}$} \label{sec:I-N-monotonicity}

Here we show that $I$, $M$, and $N$ are monotonic under free superprocesses of $\mathsf{Q}_{\hat{\mathbbm{W}}}$. Moreover, Thm.~\ref{thm:contractivity} guarantees that they are also monotonic under coarse-grainings, and hence the transformations of $\mathsf{Q}_{\hat{\mathbbm{W}}}$ more broadly, since all free transformations can be represented as a combination thereof.

\begin{theorem} \label{thm:IMkvN_nonmonotonicity}
Markov information $M$ and Non-Markovianity $N$, as defined in Eq.~\eqref{eq:marginals} and Eq.~\eqref{eq:Imonotoneresolution}, respectively, are monotonic under the free superprocesses of $\mathsf{Q}_{\hat{\mathbbm{W}}}$.
\end{theorem}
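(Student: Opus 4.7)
The plan is to reduce the monotonicity of $M$ and $N$ to the data-processing inequality for the quantum relative entropy, by exploiting that a free superprocess of $\mathsf{Q}_{\hat{\mathbbm{W}}}$, drawn from the class $(\emptyset,\mathscr{Q})$, is memoryless: no ancilla is communicated between the pre- and post-processing maps at different time slots. In the Choi representation of $\mathbf{T}_{\hat{n}}$, this translates into the superprocess acting as a tensor product $\mathcal{E} = \bigotimes_{j} \mathcal{E}_j$ of CPTP maps, each $\mathcal{E}_j$ touching only the Hilbert spaces local to one slot, so that $\mathbf{T}'_{\hat{n}} = \llbracket \mathbf{T}_{\hat{n}} | \mathbf{Z}_{\hat{n}} = \mathcal{E}(\mathbf{T}_{\hat{n}})$. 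The entire argument then hinges on this factorisation.

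First I would make the factorisation precise: writing the pre/post pair at slot $j$ as $\mathcal{V}_j^{sa_j}\circ_{sa_j}(\cdot)\circ_{sa_j}\mathcal{W}_j^{sa_j}$ with an ancilla $a_j$ that is initialised and discarded within the slot (this is the $\emptyset$ of $(\emptyset,\mathscr{Q})$), one may absorb $a_j$ and represent the slot's effect on the Choi state as a single system-only CPTP map $\mathcal{E}_j$, with independent $\mathcal{E}_j$'s for different $j$. From trace preservation of each $\mathcal{E}_k$ with $k\neq j$ one immediately gets the commutation
\begin{equation*}
{\rm tr}_{\bar{j}}\{\mathcal{E}(\mathbf{T}_{\hat{n}})\} = \mathcal{E}_j\big({\rm tr}_{\bar{j}}\{\mathbf{T}_{\hat{n}}\}\big),
\end{equation*}
and the analogous identity holds when $\bar{j}$ is refined to the finer ``one input/output Hilbert space at a time'' partition used for $\mathbf{T}^{\text{marg}}_{\hat{n}}$ in Eq.~\eqref{eq:marginals}. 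Tensoring over $j$ (resp.\ $k$) yields the two key identities $(\mathbf{T}'_{\hat{n}})^{\text{Mkv}} = \mathcal{E}(\mathbf{T}^{\text{Mkv}}_{\hat{n}})$ and $(\mathbf{T}'_{\hat{n}})^{\text{marg}} = \mathcal{E}(\mathbf{T}^{\text{marg}}_{\hat{n}})$.

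With these in hand, both monotonicities drop out of the data-processing inequality applied to the same CPTP map $\mathcal{E}$:
\begin{align*}
N(\mathbf{T}'_{\hat{n}}) &= S\big(\mathcal{E}(\mathbf{T}_{\hat{n}})\,\big\|\,\mathcal{E}(\mathbf{T}^{\text{Mkv}}_{\hat{n}})\big) \leq N(\mathbf{T}_{\hat{n}}),\\
M(\mathbf{T}'_{\hat{n}}) &= S\big(\mathcal{E}(\mathbf{T}^{\text{Mkv}}_{\hat{n}})\,\big\|\,\mathcal{E}(\mathbf{T}^{\text{marg}}_{\hat{n}})\big) \leq M(\mathbf{T}_{\hat{n}}),
\end{align*}
which is exactly the claim; as a sanity check, summing these reproduces the $\mathsf{Q}_{\hat{\mathbbm{W}}}$-monotonicity of $I = M + N$ already stated in Thm.~\ref{thm:Imonotone}. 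The main obstacle will be the first step: unpacking what ``memoryless'' means in the $(\emptyset,\mathscr{Q})$ setting of Ref.~\cite{resourcetheoriesofmultitime} cleanly enough that the slot-local $\mathcal{E}_j$'s are genuinely independent CPTP maps, with the endpoint slots (which have only a pre- or only a post-operation attached) handled by taking the missing map to be an identity. The rest is trace-preservation bookkeeping plus a textbook invocation of data processing.
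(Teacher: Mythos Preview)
Your proposal is correct and follows essentially the same route as the paper's own proof. The paper argues that because the free superprocesses of $\mathsf{Q}_{\hat{\mathbbm{W}}}$ are temporally local, one has $(\llbracket \mathbf{T}_{\hat{n}} | \mathbf{Z}_{\hat{n}})^{\text{Mkv}}=\llbracket \mathbf{T}^{\text{Mkv}}_{\hat{n}} | \mathbf{Z}_{\hat{n}}$ and $(\llbracket \mathbf{T}_{\hat{n}} | \mathbf{Z}_{\hat{n}})^{\text{marg}}=\llbracket \mathbf{T}^{\text{marg}}_{\hat{n}} | \mathbf{Z}_{\hat{n}}$, and then invokes contractivity of relative entropy under free superprocesses from Ref.~\cite{resourcetheoriesofmultitime}; your factorisation $\mathcal{E}=\bigotimes_j \mathcal{E}_j$ in the Choi picture is precisely the mechanism behind those commutation identities, and your appeal to data processing is the same contractivity step, so the two arguments coincide.
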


\begin{proof}
In resource theory $\mathsf{Q}_{\hat{\mathbbm{W}}}$, all experimental interventions are temporally local, which means that they cannot increase correlations between temporally separated subsystems. To demonstrate this for the case of $M$, a free superprocess $\mathbf{Z}_{\hat{n}}$ is applied to $\mathbf{T}_{\hat{n}}$, and Markov information takes the form
\begin{equation}
\begin{aligned}
   &M\big( \llbracket \mathbf{T}_{\hat{n}} | \mathbf{Z}_{\hat{n}} \big) \\ =& S\Big( \left(\llbracket \mathbf{T}_{\hat{n}} | \mathbf{Z}_{\hat{n}} \right)^\text{Mkv} \Big\Vert  \left( \llbracket \mathbf{T}_{\hat{n}} | \mathbf{Z}_{\hat{n}} \right)^\text{marg} \Big) \\
    =& S\left( \bigotimes_{j=1}^{n+1} \text{tr}_{\bar{j}} \{ \llbracket \mathbf{T}_{\hat{n}} | \mathbf{Z}_{\hat{n}} \}  \Bigg\Vert   \bigotimes_{k=1}^{2(n+1)} \text{tr}_{\bar{k}} \{ \llbracket \mathbf{T}_{\hat{n}} | \mathbf{Z}_{\hat{n}} \}  \right) \\
    =& S\left(\left\llbracket \bigotimes_{j=1}^{n+1} \text{tr}_{\bar{j}} \{ \mathbf{T}_{\hat{n}} \} \right| \mathbf{Z}_{\hat{n}}    \Bigg\Vert \left\llbracket \bigotimes_{k=1}^{2(n+1)} \text{tr}_{\bar{k}} \{ \mathbf{T}_{\hat{n}} \} \right\vert \mathbf{Z}_{\hat{n}} \right) \\
    =&  S\Big(\llbracket \mathbf{T}_{\hat{n}}^\text{Mkv} | \mathbf{Z}_{\hat{n}} \Big\Vert \llbracket \mathbf{T}_{\hat{n}}^\text{marg} | \mathbf{Z}_{\hat{n}} \Big) \\
    \leq & S\big( \mathbf{T}_{\hat{n}}^\text{Mkv}  \big\Vert  \mathbf{T}_{\hat{n}}^\text{marg} \big) = M(\mathbf{T}_{\hat{n}}).
\end{aligned}
\end{equation}
$j$ indexes the free evolutions of the process tensor, while $h$ labels the input and output Hilbert spaces. The third line follows from the second because free superprocesses are temporally local in $\mathsf{Q}_{\hat{\mathbbm{W}}}$, implying that $\left(\llbracket \mathbf{T} | \mathbf{Z} \right)^{\text{Mkv}}=\llbracket \mathbf{T}^{\text{Mkv}} | \mathbf{Z}$, and $\left(\llbracket \mathbf{T} | \mathbf{Z} \right)^{\text{marg}}=\llbracket \mathbf{T}^{\text{marg}} | \mathbf{Z}$. The final line uses the contractivity of relative entropy under free superprocesses~\cite{resourcetheoriesofmultitime}. The same type of argument can be applied to show that $I$ and $N$ are monotones, so we will not repeat it. It should be noted that this feature does not hold for all resource theories of temporal resolution. For other theories where some degree of communication through time is allowed within a superprocess, the relevant monotone will be some other more restricted property like violation of direct cause relations.
\end{proof}

\subsection{Invariance of $I$, $M$, and $N$ under Free Superprocesses of $\mathsf{D}_{\hat{\mathbbm{W}}}$} \label{sec:I_invariance}

We present a brief proof that total mutual information, Markov information, and non-Markovianity are not only monotonic, but also \emph{invariant} under DD sequences, and the free superprocesses of $\mathsf{D}_{\hat{\mathbbm{W}}}$ more broadly. 
\begin{theorem}
$I$, $M$, and $N$ are invariant under the free superprocesses of $\mathsf{D}_{\hat{\mathbbm{W}}}$.
\end{theorem}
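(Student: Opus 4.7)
The plan is to exploit the group structure of the free superprocesses in $\mathsf{D}_{\hat{\mathbbm{W}}}$ --- namely, that memoryless sequences of unitaries are closed under inversion --- to upgrade the one-sided monotonicity inherited from $\mathsf{Q}_{\hat{\mathbbm{W}}}$ into a two-sided equality. Since $\mathsf{D}_{\hat{\mathbbm{W}}} \subset \mathsf{Q}_{\hat{\mathbbm{W}}}$, Thm.~\ref{thm:IMkvN_nonmonotonicity} immediately delivers the ``$\le$'' direction,
\begin{equation}
I(\llbracket \mathbf{T}_{\hat{n}} | \mathbf{Z}_{\hat{n}}) \le I(\mathbf{T}_{\hat{n}}),
\end{equation}
and the analogous bounds for $M$ and $N$, for every $\mathbf{Z}_{\hat{n}}\in\mathsf{D}_{\hat{\mathbbm{W}}}$. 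Only the reverse inequality requires new input.

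The key construction is an inverse free superprocess. A free $\mathbf{Z}_{\hat{n}}\in\mathsf{D}_{\hat{\mathbbm{W}}}$ acts slot-wise as $\mathcal{T}^{se}_{t_j:t_{j-1}} \mapsto \mathcal{V}^{sa}_j \circ_{sa} \mathcal{T}^{se}_{t_j:t_{j-1}} \circ_{sa} \mathcal{W}^{sa}_j$, with each $\mathcal{V}^{sa}_j, \mathcal{W}^{sa}_j$ a unitary channel. I would define $\mathbf{Z}^{-1}_{\hat{n}}$ as the superprocess whose $j$-th slot carries pre-operation $\mathcal{W}^{sa\,\dagger}_j$ and post-operation $\mathcal{V}^{sa\,\dagger}_j$. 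This is again a memoryless sequence of unitaries, so $\mathbf{Z}^{-1}_{\hat{n}}\in\mathsf{D}_{\hat{\mathbbm{W}}}$, and under composition the slot-wise effect is $\mathcal{V}^{\dagger}_j \circ_{sa} \mathcal{V}_j \circ_{sa} \mathcal{T} \circ_{sa} \mathcal{W}_j \circ_{sa} \mathcal{W}^{\dagger}_j = \mathcal{T}$, so $\llbracket\llbracket \mathbf{T}_{\hat{n}} | \mathbf{Z}_{\hat{n}} | \mathbf{Z}^{-1}_{\hat{n}} = \mathbf{T}_{\hat{n}}$ for every $\mathbf{T}_{\hat{n}}$. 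Closure of $\mathsf{D}_{\hat{\mathbbm{W}}}$ under this slot-wise adjoint is the single structural fact that distinguishes it from $\mathsf{Q}_{\hat{\mathbbm{W}}}$ in this argument.

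Applying the inherited monotonicity to the transformed process $\llbracket \mathbf{T}_{\hat{n}}|\mathbf{Z}_{\hat{n}}$ using the free transformation $\mathbf{Z}^{-1}_{\hat{n}}$ then gives
\begin{equation}
I(\mathbf{T}_{\hat{n}}) = I\big(\llbracket\llbracket \mathbf{T}_{\hat{n}} | \mathbf{Z}_{\hat{n}} | \mathbf{Z}^{-1}_{\hat{n}}\big) \le I\big(\llbracket \mathbf{T}_{\hat{n}} | \mathbf{Z}_{\hat{n}}\big),
\end{equation}
and identically for $M$ and $N$; combined with the forward inequalities, all three monotones are pinned to equality. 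The main obstacle is modest but real: one must check that the slot-wise adjoints compose in the correct order so as to actually cancel the original unitaries across the composed superprocesses, and that the memoryless tensor-product structure across times survives the inversion. Both follow since taking adjoints preserves unitarity and introduces no cross-time correlations. As a cross-check, the same conclusion can be reached more directly by invoking unitary invariance of the relative entropy inside the definitions in Eq.~\eqref{eq:Imonotoneresolution}, together with the intertwining identities $(\llbracket \mathbf{T}|\mathbf{Z})^{\text{marg}} = \llbracket \mathbf{T}^{\text{marg}}|\mathbf{Z}$ and $(\llbracket \mathbf{T}|\mathbf{Z})^{\text{Mkv}} = \llbracket \mathbf{T}^{\text{Mkv}}|\mathbf{Z}$ already used in the proof of Thm.~\ref{thm:IMkvN_nonmonotonicity}.
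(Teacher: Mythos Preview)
Your argument is correct and takes a genuinely different route from the paper. The paper argues invariance of $I$ directly: it uses the temporal locality of free superprocesses in $\mathsf{D}_{\hat{\mathbbm{W}}}$ to pull $\mathbf{Z}_{\hat n}$ through the marginalisation, obtaining $S\big(\llbracket \mathbf{T}_{\hat n}|\mathbf{Z}_{\hat n}\,\big\|\,\llbracket \mathbf{T}_{\hat n}^{\text{marg}}|\mathbf{Z}_{\hat n}\big)$, and then invokes (implicitly) that relative entropy is \emph{exactly preserved} under the unitary superprocess to conclude equality with $S(\mathbf{T}_{\hat n}\|\mathbf{T}_{\hat n}^{\text{marg}})$; the same is asserted for $M$ and $N$. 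This is precisely the ``cross-check'' you sketch in your final sentence.

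Your primary argument instead exploits the group structure of $\mathsf{D}_{\hat{\mathbbm{W}}}$: since every free $\mathbf{Z}_{\hat n}$ has a free inverse, the one-sided monotonicity inherited from $\mathsf{Q}_{\hat{\mathbbm{W}}}$ (Thm.~\ref{thm:IMkvN_nonmonotonicity}) is automatically two-sided. This is a cleaner and more structural statement: it shows at once that \emph{every} $\mathsf{Q}_{\hat{\mathbbm{W}}}$-monotone, not just $I$, $M$, $N$, becomes an invariant on $\mathsf{D}_{\hat{\mathbbm{W}}}$, without reopening the relative-entropy machinery. The paper's computation, by contrast, makes explicit \emph{where} the unitarity enters (the final equality rather than an inequality) and reuses the intertwining identities already established for Thm.~\ref{thm:IMkvN_nonmonotonicity}. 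One small cosmetic point: in $\mathsf{D}_{\hat{\mathbbm{W}}}$ the ancilla $a$ is at most a classical clock, so the pre/post maps are effectively unitaries on $s$ alone; your $sa$ superscripts are harmless but slightly overstate the generality needed.
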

\begin{proof}
We show this invariance directly for $I$, using the same argument as in our proof of Thm.~\ref{thm:IMkvN_nonmonotonicity}:
\begin{equation}
\begin{aligned}
   I\big( \llbracket \mathbf{T}_{\hat{n}} | \mathbf{Z}_{\hat{n}} \big)  =& S\Big( \mathbf{T}_{\hat{n}}   \Big\Vert  \left( \llbracket \mathbf{T}_{\hat{n}} | \mathbf{Z}_{\hat{n}} \right)^\text{marg} \Big) \\
    =& S\left( \llbracket \mathbf{T}_{\hat{n}} | \mathbf{Z}_{\hat{n}}  \Bigg\Vert   \bigotimes_{h=1}^{2(n+1)} \text{tr}_{\bar{h}} \{ \llbracket \mathbf{T}_{\hat{n}} | \mathbf{Z}_{\hat{n}} \}  \right) \\
    =& S\left( \llbracket \mathbf{T}_{\hat{n}} | \mathbf{Z}_{\hat{n}}   \Bigg\Vert \left\llbracket \bigotimes_{h=1}^{2(n+1)} \text{tr}_{\bar{h}} \{ \mathbf{T}_{\hat{n}} \} \right\vert \mathbf{Z}_{\hat{n}} \right) \\
    =&  S\Big(\llbracket \mathbf{T}_{\hat{n}} | \mathbf{Z}_{\hat{n}} \Big\Vert \llbracket \mathbf{T}_{\hat{n}}^\text{marg} | \mathbf{Z}_{\hat{n}} \Big) \\
    = &  I(\mathbf{T}_{\hat{n}}).
\end{aligned}
\end{equation}
As with Thm.~\ref{thm:IMkvN_nonmonotonicity}, the second and third line are equal because the free superprocesses of $\mathsf{D}_{\hat{\mathbbm{W}}} \subset \mathsf{Q}_{\hat{\mathbbm{W}}}$ are temporally local. The invariance of $I^{Mkv}$ and $N$ follow from the same argument, so we will not repeat it here.
\end{proof}

\subsection{Numerical Model}

The model used to generate Fig.~\ref{fig:numerics} consists of a two-dimensional system $s$ and an environment $e$ with a (Haar) randomly sampled initial pure state $\rho^{e}_0$, undergoing evolution for duration $t$ under a randomly sampled $s$-$e$ Hamiltonian $H^{se}$ whose operator norm is normalised to unity. Specifically, these Hamiltonians are sampled by producing matrices $K$ whose entries are uniformily distributed in $[0,1]$, taking $H^{se}$ as the combination $K + K^\dagger$, and subsequently normalising it by its operator norm. We produce 20 samples of $\rho^{e}_0$ and $H^{se}$, generating an ensemble of 20 sets of underlying dynamics. For a given set of underlying dynamics, we study three levels of temporal resolution: $\mathbf{T}_{\hat{15}}$, $\hat{15}=\{ t/16,\dots,15t/16 \}$, $\mathbf{T}_{\hat{3}}$, $\hat{3}=\{ t/4,2t/4,3t/4 \}$, and $\mathbf{T}_{\emptyset}$, corresponding to a fine-grained process, an intermediate process, and a channel. DD can be applied at the fine-grained level, and/or at the coarse-grained level, with doing both corresponding to CDD. Using these three levels of temporal resolution, we can compute changes in monotones, e.g. $\Delta I= I_{\hat{3}|\mathbf{Z}_{15}}(\mathbf{T}_{\hat{15}})-I_{\hat{3}}(\mathbf{T}_{\hat{15}})$, as well as the increase in channel-level mutual information $ I_{\emptyset|\mathbf{Z}_{15}}(\mathbf{T}_{\hat{15}})-I_{\emptyset}(\mathbf{T}_{\hat{15}})$.

Similarly we can apply MODD with one level of concatenation. This optimisation begins with an SDP optimisation of control pulses (by maximizing the largest eigenvalue of the corresponding resulting channel) for a small process tensor with intermediate times $\{ t/16,2t/16,3t/16 \}$. Using this sequence, a maximally mixed state is placed into the input of the dynamics at time $=0$ to generate an environment state at $t/4$ and the procedure is repeated for intermediate times $\{ 5t/16,6t/16,7t/16 \}$ etc. until all of $\hat{15}$ has optimised operations. The multitimescale aspect appears when we coarse-grain and repeat the procedure for $\hat{3}$.

\section*{Acknowledgements}

G. B. is supported by an Australian Government Research Training Program (RTP) Scholarship. S. M. acknowledges funding from the European Union’s Horizon 2020 research and innovation programme under the Marie Sk{\l}odowska Curie grant agreement No 801110, and the Austrian Federal Ministry of Education, Science and Research (BMBWF). The opinions expressed in this publication are those of the authors, the EU Agency is not responsible for any use that may be made of the information it contains. K.M. is supported through Australian Research Council Future Fellowship FT160100073 and Discovery Project DP210100597. K.M. was recipients of the International Quantum U Tech Accelerator award by the US Air Force Research Laboratory.

\bibliographystyle{apsrev4-1_custom}
\bibliography{refs}

\onecolumn
\section{Supplementary Material} 

Here we detail a few of the more technical aspects of this investigation. We begin with a proof of the contractivity of relative entropy under temporal coarse-grainings in Sup.~\ref{sup:contractivity}. Subsequently in Sup.~\ref{sup:productstructure}, we discuss the parallel and sequentially product structure of process tensors. In Sup.~\ref{sup:markovianisation} we present an argument based on that of Ref.~\cite{canquantummarkovevolutions}, showing that dynamical decoupling removes all non-Markovianity from a process under ideal conditions. Finally, in Sup.~\ref{sup:subtheories} we consider a sub-theory hierarchy of $\mathsf{Q}_{\hat{\mathbbm{W}}}$ delineating which noise suppression techniques remain available after applying specific additional restrictions. We have also included a notation summary for quick reference in Sup.~\ref{sup:notationsummary}.

\subsection{Contractivity of Relative Entropy Under Coarse-Grainings} \label{sup:contractivity}

For temporal coarse-graining to be appended to a pre-existing resource theory without disturbing its useful structure, we expect that pre-existing monotones under the free superprocesses should also be monotones under the free coarse-grainings (and by Lem.~\ref{lem:representation} any combination thereof). Hence, to ensure that our mutual information based monotones $I$, $M$, and $N$ remain valid after the inclusion of coarse-grainings, we ask that temporal coarse-graining does not increase the relative entropy $S(\mathbf{T}_{\hat{n}}\Vert \mathbf{R}_{\hat{n}}):=\text{tr} \{ \mathbf{T}_{\hat{n}} \log(\mathbf{T}_{\hat{n}}) \} - \text{tr} \{ \mathbf{T}_{\hat{n}} \log(\mathbf{R}_{\hat{n}}) \}$ between any two process tensors $\mathbf{T}_{\hat{n}}$ and $\mathbf{R}_{\hat{n}}$.
\begin{theorem} \label{thm:contractivity} 
Given any two process tensors $\mathbf{T}_{\hat{n}}, \mathbf{R}_{\hat{n}} \in \mathsf{T}_{\hat{n}}$, temporal coarse-graining $\mathbf{I}_{\hat{n} \setminus \hat{m}} :\mathsf{S}_{\hat{n}} \rightarrow \mathsf{S}_{\hat{m}}$ for all $\emptyset \subseteq \hat{m} \subseteq \hat{n}$ satisfies
\begin{equation}
    S( \mathbf{T}_{\hat{n}} \Vert \mathbf{R}_{\hat{n}}) \geq S\big( \llbracket \mathbf{T}_{\hat{n}} | \mathbf{I}_{\hat{n} \setminus \hat{m}} \rrbracket \big\Vert \llbracket \mathbf{R}_{\hat{n}} | \mathbf{I}_{\hat{n} \setminus \hat{m}} \rrbracket \big) ,
\end{equation}
for relative entropy $S(\cdot \Vert \cdot )$. 
\end{theorem}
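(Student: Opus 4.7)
The plan is to recognise the temporal coarse-graining $\mathbf{T}_{\hat{n}} \mapsto \llbracket \mathbf{T}_{\hat{n}} | \mathbf{I}_{\hat{n}\setminus\hat{m}} \rrbracket$ as a linear, positive, and trace-preserving map on the Choi operators of process tensors, and then to invoke the standard monotonicity theorem for quantum relative entropy cited as Ref.~\cite{monotonicityofthequantumrelative}. Once the coarse-graining is shown to satisfy the three hypotheses of that theorem, the desired inequality follows by a direct application to the pair $(\mathbf{T}_{\hat{n}}, \mathbf{R}_{\hat{n}})$.

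First I would write the coarse-graining in closed form. Using the bra-ket contraction rule from the footnote on page~1, inserting the identity channel $\mathcal{I}^{s}_{i}$ at each $i\in \hat{n}\setminus\hat{m}$ amounts, at the level of Choi operators, to linking the output space $\mathcal{H}^{\mathrm{out}}_{i}$ of $\mathbf{T}_{\hat{n}}$ to the subsequent input space $\mathcal{H}^{\mathrm{in}}_{i+1}$ by a maximally entangled projector $|\Phi^{+}\rangle\!\langle\Phi^{+}|$ and tracing those two spaces out. Equivalently, this is a link product of $\mathbf{T}_{\hat{n}}$ with the fixed Choi operator $\bigotimes_{i\in\hat{n}\setminus\hat{m}} \mathcal{I}^{s}_{i}$, yielding a well-defined map $\Lambda_{\hat{n}\setminus\hat{m}}: \mathsf{T}_{\hat{n}} \to \mathsf{T}_{\hat{m}}$ on Choi operators.

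Next I would verify the three hypotheses of the cited theorem in turn. Linearity is immediate from the fact that the link product is linear in each argument. Positivity follows because linking against a positive operator and then partial-tracing preserves positive semidefiniteness — indeed $\Lambda_{\hat{n}\setminus\hat{m}}$ is manifestly completely positive, hence \emph{a fortiori} positive. Trace preservation between the relevant normalised domain and image sets is a consequence of the nested hierarchy of CPTP conditions that characterise valid process tensors~\cite{quantumstochasticprocessesandquantumnon,nonmarkovianquantumprocesses}: tracing out an input-output pair linked by a maximally entangled state carries a valid $n$-intervention process tensor to a valid $m$-intervention process tensor with the same normalisation. With the three properties in place, Ref.~\cite{monotonicityofthequantumrelative} yields $S(\Lambda_{\hat{n}\setminus\hat{m}}[\mathbf{T}_{\hat{n}}] \,\Vert\, \Lambda_{\hat{n}\setminus\hat{m}}[\mathbf{R}_{\hat{n}}]) \leq S(\mathbf{T}_{\hat{n}} \,\Vert\, \mathbf{R}_{\hat{n}})$, which is precisely the claim.

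The main obstacle is the bookkeeping needed to justify the trace-preservation step: one must check carefully that the link-with-identity operation respects the tower of partial-trace constraints that distinguish genuine process tensors from arbitrary positive operators on the composite Hilbert space, so that the image is actually contained in the correctly normalised subset of the target Choi space. Once this combinatorial verification is carried out, the remaining steps reduce to one-line observations, and specialising $\mathbf{R}_{\hat{n}}$ to $\mathbf{T}^{\mathrm{marg}}_{\hat{n}}$ or $\mathbf{T}^{\mathrm{Mkv}}_{\hat{n}}$ then furnishes the contractivity of the monotones $I$, $M$, and $N$ required in Thm.~\ref{thm:Imonotone}.
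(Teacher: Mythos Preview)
Your proposal is correct and follows essentially the same route as the paper's own proof: both reduce the claim to verifying that the coarse-graining map $\mathbf{T}_{\hat{n}}\mapsto\llbracket\mathbf{T}_{\hat{n}}|\mathbf{I}_{\hat{n}\setminus\hat{m}}\rrbracket$ is linear, positive, and trace-preserving, and then invoke the monotonicity theorem of Ref.~\cite{monotonicityofthequantumrelative}. The only cosmetic difference is that the paper establishes positivity and trace preservation by writing out the Choi state explicitly as a composition of CPTP maps on maximally entangled states (your link-product description is the same thing phrased more abstractly), and checks linearity by expanding the contraction formula directly.
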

\begin{proof}
To show contractivity of relative entropy under a coarse-graining functor $\mathbf{I}_{\hat{n} \setminus \hat{m}}$, we use a previously proved theorem requiring only positivity (not complete positivity), trace preservation from the domain to the image, and linearity~\cite{monotonicityofthequantumrelative}.
\begin{lemma} \label{lem:conditionsforcontractivity}
Let $\Phi:\mathcal{B}(\mathcal{H}) \rightarrow \mathcal{B}(\mathcal{H'})$ be a positive trace-preserving linear map, where $\mathcal{H}$ and $\mathcal{H}'$ are separable Hilbert spaces. Then for any positive semidefinite operators $a,b \in \mathcal{B}_{\geq 0}(\mathcal{H})$,
\begin{equation}
    S( a \Vert b) \geq S\big( \Phi(a) \big\Vert \Phi(b) \big) .
\end{equation}
$\mathcal{B}(\mathcal{H})$ represents trace-class operators on $\mathcal{H}$, while $\mathcal{B}_{\geq 0}(\mathcal{H})$ represents the positive semi-definite ones.
\end{lemma}

The remainder of this proof is dedicated to showing that coarse-graining satisfies the conditions of Lem.~\ref{lem:conditionsforcontractivity}. First, we show positivity and trace preservation. Throughout this work, we have used only one symbol to represent each object, and whether the object is represented by a superoperator or a Choi state is implied by context. Here we explicitly write a process $\mathbf{T}_{\hat{n}}$ in the Choi representation (with unit normalisation) as
\begin{equation}  \label{eq:choistaten}
\mathbf{T}_{\hat{n}}= \text{tr}_e \Bigg\{  \functioncomposition\limits_{t_j \in \hat{n}} \Big( S^{s,{\mathtt{o}_{t_j}}} \circ \mathcal{T}_{{t_{j+1}}:{t_j}}^{s e} \circ S^{s,{\mathtt{o}_{t_j}}}  \Big)   \tensorcomposition\limits_{{t_j} \in \hat{n}}  \left( \psi^{{\mathtt{o}_{t_j}},{\mathtt{i}_{t_j}}} \right) \otimes \rho^{e}_{0}  \Bigg\},
\end{equation}
where $S^{\alpha,\beta}$ is a swap operation between subsystems $\alpha$ and $\beta$, $\psi$ is a maximally entangled bipartite state, $\mathtt{o}$ and $\mathtt{i}$ index the output and input Hilbert spaces of each leg of the process tensor respectively. With this indexing, $s={\mathtt{o}_0}$, hence $S^{(s,{\mathtt{o}_0})}=\mathcal{I}^{s}$. Each $\mathcal{T}_{{t_{j+1}}:{t_j}}^{s e}$ represents the free evolution from time $t_{j}$ to $t_{j+1}$ All operations used here are completely positive and trace preserving, hence the Choi state representation of $\mathbf{T}_{\hat{n}}$ is a valid quantum state.

Turning attention to $\mathbf{T}_{\hat{m}} = \llbracket \mathbf{T}_{\hat{n}} | \mathbf{I}_{\hat{n} \setminus \hat{m}} \rrbracket$, the Choi state representation of $\mathbf{T}_{\hat{m}}$ can be written as
\begin{equation} \label{eq:choistatem}
\mathbf{T}_{\hat{n}}= \text{tr}_e \Bigg\{  \functioncomposition\limits_{t_j \in \hat{m}} \left( S^{s,{\mathtt{o}_{t_j}}}  \functioncomposition\limits_{t_j < t_k <t_{j+1} \in \hat{n} \setminus \hat{m} } \left( \mathcal{T}_{{t_{k+1}}:{t_k}}^{se} \right) \circ \mathcal{T}_{{t_{j+1}}:{t_j}}^{s e} \functioncomposition\limits_{t_{j-1}< t_k <t_j \in \hat{n} \setminus \hat{m} } \left( \mathcal{T}_{{t_{k+1}}:{t_k}}^{se} \right) \circ S^{s,{\mathtt{o}_{t_j}}}  \right)   \tensorcomposition\limits_{{t_j} \in \hat{m}}  \left( \psi^{{\mathtt{o}_{t_j}},{\mathtt{i}_{t_j}}} \right) \otimes \rho^{e}_{0}   \Bigg\}.
\end{equation}
The difference between Eq.~\eqref{eq:choistaten} and Eq.~\eqref{eq:choistatem} is that in the former there is one $s$-$e$ evolution $\mathcal{T}_{{t_{j+1}}:{t_j}}^{s e}$ per step in $\hat{n}$, while the latter has more evolutions than steps. These correspond to the times $t_k$ in $\hat{n}$ but not $\hat{m}$, lying in the intervals $t_{j-1}< t_k <t_j$ and $t_j < t_k <t_{j+1}$. Still, all actions on the original maximally entangled states are completely positive and trace preserving, so the result is a valid Choi $m$-step state with the same trace (or appropriately re-scaled depending on normalisation convention). Hence, $\mathbf{I}_{\hat{n} \setminus \hat{m}} :\mathsf{T}_{\hat{n}} \rightarrow \mathsf{T}_{\hat{m}}$ is positive and trace preserving (or re-scaling) from its domain to its image.

Linearity can be seen using the Choi isomorphism in the opposite direction, taking two process tensors $\mathbf{T}_{\hat{n}}$ and $\mathbf{R}_{\hat{n}}$, with the same set of times for interventions
\begin{equation}
\begin{aligned}
    \llbracket  \mathbf{T}_{\hat{n}} + \mathbf{R}_{\hat{n}} | \mathbf{I}_{\hat{n} \setminus \hat{m}} \rrbracket & = \text{tr}_{i\in \hat{n} \setminus \hat{m}} \left\{ \left(\mathbf{T}_{\hat{n}} + \mathbf{R}_{\hat{n}} \right) \left( \mathbbm{1}^{\text{i}_0}  \otimes \mathbf{I}_{\hat{n} \setminus \hat{m}} \otimes \mathbbm{1}^{\text{o}_n}  \right) \right\} \\
    &= \text{tr}_{i\in \hat{n} \setminus \hat{m}} \left\{ \mathbf{T}_{\hat{n}}  \left( \mathbbm{1}^{\text{i}_0}  \otimes \mathbf{I}_{\hat{n} \setminus \hat{m}} \otimes \mathbbm{1}^{\text{o}_n}  \right) \right\} + \text{tr}_{i\in \hat{n} \setminus \hat{m}} \left\{ \mathbf{R}_{\hat{n}}  \left( \mathbbm{1}^{\text{i}_0}  \otimes \mathbf{I}_{\hat{n} \setminus \hat{m}} \otimes \mathbbm{1}^{\text{o}_n}  \right) \right\} \\
    &= \llbracket  \mathbf{T}_{\hat{n}}  | \mathbf{I}_{\hat{n} \setminus \hat{m}} \rrbracket + \llbracket   \mathbf{R}_{\hat{n}} | \mathbf{I}_{\hat{n} \setminus \hat{m}} \rrbracket.
 \end{aligned}
\end{equation}

Using Lem.~\ref{lem:conditionsforcontractivity}, positivity, trace preservation, and linearity are sufficient to conclude the contractivity of relative entropy under $\mathbf{I}_{\hat{n} \setminus \hat{m}}$.
\end{proof}

\subsection{Parallel and Sequential Product Structures} \label{sup:productstructure}

An important aspect of channel resource theories is that they have a notions of combining and discarding channels. This inclusion broadens the scope of what can be achieved in those theories, enabling tasks like catalytic conversion, asymptotic conversion, and much more. Here we investigate the consequences of including analogous notions for process tensor resources. Combining process tensors could mean to take two separate experiments and consider them concurrently, or subsequently. Similarly, discarding process tensors might correspond to ignoring the results of an experiment, or terminating an experiment early. We should expect that these notions reduce to the channel notions after process tensors are coarse-grained to have no intermediate interventions.

Aside from basic considerations of closedness on the set of resource objects, there aren't actually any restrictions on what can or cannot be defined as free transformations in a resource theory, since this will simply result in different sets of free resource objects, and different monotones. However, one must still be careful when deciding what to include in the set of free transformations, because not all mathematically valid resource theories will be useful for solving physical problems. The purpose of this section is to verify that combining and discarding process tensors are indeed sensible inclusions to resource theories of temporal resolution.

\subsubsection{Parallel Composition} 

Consider the example of two experimenters in two different laboratories performing their own experiments. In this situation, each experimenter has their own process tensor, and the global `experiment' is a tensor product of what happens in the two laboratories. This is what the operation of parallel composition of process tensors physically corresponds to. Depending on the specifics of the given resource theory, these two experimenters may be able to communicate, or exchange resources. 

The two experiments can be expressed as $\mathbf{T}^1_{\hat{n}}=\text{tr}_{e_1} \left\{ \mathcal{T}^{s_1 e_1}_{t:t_{n}} \circ_{e_1} \dots \circ_{e_1} \mathcal{T}^{s_1 e_1}_{t_{1}:0} \circ_{e_1} \tau^{e_1} \right\} $ and $\mathbf{T}^2_{\hat{m}}=\text{tr}_{e_2} \left\{ \mathcal{T}^{s_2 e_2}_{t:t'_{m}} \circ_{e_2} \dots \circ_{e_2} \mathcal{T}^{s_2 e_2}_{t'_1:0} \circ^{e_2} \tau^{e_2} \right\}$, parallel composition is defined as 
\begin{equation} \label{eq:parallelcomp}
    \mathbf{T}^1_{\hat{n}} \otimes \mathbf{T}^2_{\hat{m}}:= \text{tr}_{e_1} \left\{ \mathcal{T}^{s_1 e_1}_{t:t_{n}} \circ_{e_1} \dots \circ_{e_1} \mathcal{T}^{s_1 e_1}_{t_{1}:0} \circ_{e_1} \tau^{e_1} \right\} \otimes \text{tr}_{e_2} \left\{ \mathcal{T}^{s_2 e_2}_{t:t'_{m}} \circ_{e_2} \dots \circ_{e_2} \mathcal{T}^{s_2 e_2}_{t'_1:0} \circ^{e_2} \tau^{e_2} \right\},
\end{equation}
where $\mathcal{T}^{s_i e_i}_{t_{\alpha+1}:t_{\alpha}}$ is free evolution on the $i$th process tensor from the $\alpha$th intervention to the $\alpha+1$th intervention, $\tau^e$ and $\sigma^e$ are initial environment states, and $\circ_{e_i}$ is composition over the environment $i$ alone. An example of parallel compositon is shown in Fig.\ref{fig:parallelcompositionexample}. Observe that setting each experiment to have no intermediate interventions $\hat{n}=\hat{m}=\emptyset$ reduces Eq.~\eqref{eq:parallelcomp} to the regular notion of parallel composition of maps.
\begin{figure}[ht!]
\centering
\begin{tikzpicture}[scale=0.3]

\draw[black, very thick,solid] (-2.5,1) -- (4.25,1);
\draw[black, very thick,solid] (5.75,1) -- (20.25,1);
\draw[black, very thick,solid] (21.75,1) -- (28.5,1);

\draw[black, very thick,solid] (-2.5,-3) -- (12.25,-3);
\draw[black, very thick,solid] (13.75,-3) -- (20.25,-3);
\draw[black, very thick,solid] (21.75,-3) -- (28.5,-3);

  \draw[mypurple,fill=mypurplefill, thick,solid,rounded corners=4] (0-0.1,1) -- (0-0.1,0-0.1) -- (2+0.1,0-0.1) -- (2+0.1,2-0.1) -- (4-0.1,2-0.1) -- (4-0.1,0-0.1)  -- (6+0.1,0-0.1) -- (6+0.1,2-0.1) -- (8-0.1,2-0.1) -- (8-0.1,0-0.1) -- (18+0.1,0-0.1) -- (18+0.1,2-0.1) -- (20-0.1,2-0.1) -- (20-0.1,0-0.1) -- (22+0.1,0-0.1) -- (22+0.1,2-0.1) -- (24-0.1,2-0.1)  -- (24-0.1,0-0.1) -- (26+0.1,0-0.1) -- (26+0.1,2-0.1) -- (28-0.1,2-0.1) -- (28-0.1,-4+0.1) -- (26+0.1,-4+0.1) -- (26+0.1,-2+0.1) -- (24-0.1,-2+0.1) -- (24-0.1,-4+0.1) -- (22+0.1,-4+0.1) -- (22+0.1,-2+0.1) -- (20-0.1,-2+0.1) -- (20-0.1,-4+0.1) -- (18+0.1,-4+0.1) -- (18+0.1,-2+0.1) -- (16-0.1,-2+0.1) -- (16-0.1,-4+0.1) -- (14+0.1,-4+0.1) -- (14+0.1,-2+0.1) -- (12-0.1,-2+0.1) -- (12-0.1,-4+0.1) -- (10+0.1,-4+0.1) -- (10+0.1,-2+0.1) -- (0-0.1,-2+0.1) -- (0-0.1,-4+0.1) -- (-2+0.1,-4+0.1) -- (-2+0.1,2-0.1) -- (0-0.1,2-0.1) -- (0-0.1,1)    ; 

\draw[myred,fill=myredfill, thick,solid,rounded corners=4] (0+0.1,3) -- (0+0.1,4) -- (26-0.1,4) -- (26-0.1,2+0.1) -- (26-0.1,0+0.1) -- (24+0.1,0+0.1) -- (24+0.1,2+0.1)-- (18-0.1,2+0.1) -- (18-0.1,0+0.1) -- (8+0.1,0+0.1) -- (8+0.1,2+0.1) -- (2-0.1,2+0.1) -- (2-0.1,0+0.1) -- (0+0.1,0+0.1) -- (0+0.1,3)   ;

\draw[myred,fill=myredfill, thick,solid,rounded corners=4] (0+0.1,-2-3) -- (0+0.1,-2-4) -- (26-0.1,-2-4) -- (26-0.1,-2-2-0.1) -- (26-0.1,-2-0.1) -- (24+0.1,-2-0.1) -- (24+0.1,-2-2-0.1)-- (18-0.1,-2-2-0.1) -- (18-0.1,-2-0-0.1) -- (16+0.1,-2-0-0.1) -- (16+0.1,-2-2-0.1) -- (10-0.1,-2-2-0.1) -- (10-0.1,-2-0-0.1) --  (0+0.1,-2-0-0.1) -- (0+0.1,-2-3)   ;

\draw[] (13,2) node[rotate=0] {\large $\mathbf{T}_{\hat{n}}$};
\draw[] (13,-5) node[rotate=0] {\large $\mathbf{T}_{\hat{m}}$};
\draw[] (13,-1) node[rotate=0] {\large $\mathbf{Z}_{\hat{n} \otimes \hat{m}}$};

\draw[] (5,1) node[rotate=0] { $t_1$};
\draw[] (13,-3) node[rotate=0] { $t'_1$};

\draw[] (21,1) node[rotate=0] { $t_2$};
\draw[] (21,-3) node[rotate=0] { $t'_2$};

\end{tikzpicture}
\caption{An example of a bipartite experiment, where the experimenter is given a tensor product $\mathbf{T}_{\hat{n}} \otimes \mathbf{S}_{\hat{m}}$, and can interact with each subsystem at non-identical sets of times $\hat{n} =\{ t_1,t_2 \}$ and $\hat{m}=\{ t'_1,t'_2 \}$. In this case $t_1 \neq t'_1$ but $t_2 = t'_2$. Without the experimenter's actions these two process tensors are independent, but in general superprocesses can create a correlated global process tensor.}  \label{fig:parallelcompositionexample}
\end{figure}
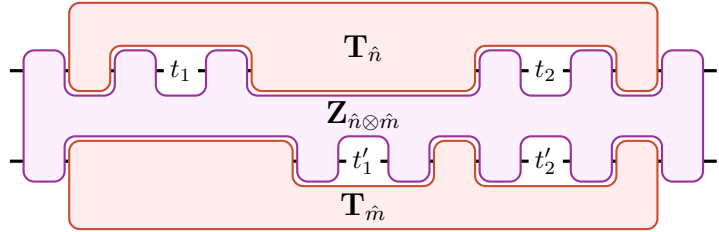

For this to be considered parallel composition as opposed to sequential composition, both process tensors must occur concurrently, i.e. in the window $(0,t)$. If one process is shorter than the other, sequential composition of a free process (Sec.~\ref{sec:sequentialcomp}) for that duration is required to make the two compatible. 

Having defined what is meant by parallel composition, we can now show that induces a well-defined tensor product structure that also respects monotones. Consider the impact of defining parallel composition with a free resource -- $\mathbf{T}_{\hat{n}} \mapsto \mathbf{T}_{\hat{n}} \otimes \mathbf{S}_{\hat{m}}$ for $\mathbf{S}_{\hat{m}} \in \mathsf{T}^{\text{F}}_{\hat{m}}$ -- as a class of free resource transformation. Observe that $\mathbf{S}_{\hat{m}}$ is a free resource in the sub-theory for fixed $\hat{m}$, which reduces to the free resources of the full theory if we specify $\hat{m}=\emptyset$. 

To verify that this addition respects the structure of resource theories of temporal resolution, we show that trace distance and relative entropy remain as appropriate monotones.
\begin{theorem} \label{thm:parallelcompositiontracedistance}
For any process tensor $\mathbf{T}_{\hat{n}} \in \mathsf{T}_{\hat{\mathbbm{W}}}$, trace distance to the nearest free process in the sub-theory for fixed ${\hat{n}}$, $\inf_{\mathbf{T}^{\text{F}}_{\hat{n}} \in \mathsf{T}^\text{F}_{\hat{n}}}D(\mathbf{T}_{\hat{n}},\mathbf{T}^{\text{F}}_{\hat{n}})$ is contractive under the parallel composition with free resources.
\end{theorem}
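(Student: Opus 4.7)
The plan is to reduce the claim to two standard facts: (i) trace distance is invariant under tensoring with a shared factor, and (ii) the nearest-free distance is bounded above by the distance to any specific free reference. First I would fix an arbitrary free process $\mathbf{S}_{\hat{m}} \in \mathsf{T}^{\text{F}}_{\hat{m}}$ and an arbitrary $\mathbf{T}^{\text{F}}_{\hat{n}} \in \mathsf{T}^{\text{F}}_{\hat{n}}$, and argue that their parallel composition $\mathbf{T}^{\text{F}}_{\hat{n}} \otimes \mathbf{S}_{\hat{m}}$ lies in the set of free resources of the combined (temporally enlarged) sub-theory indexed by $\hat{n} \cup \hat{m}$. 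This holds because declaring parallel composition with a free resource to be a free transformation forces the set $\mathsf{T}^{\text{F}}$ to be closed under such tensoring; operationally, a zero-capacity channel tensored with another zero-capacity channel is again zero-capacity, so this closure is consistent with the characterisation of free resources given earlier.

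Next I would invoke the identity $D(A \otimes C,\ B \otimes C) = D(A,B)$, which holds for the trace distance between any trace-class operators, in particular between the Choi representations of $\mathbf{T}_{\hat{n}} \otimes \mathbf{S}_{\hat{m}}$ and $\mathbf{T}^{\text{F}}_{\hat{n}} \otimes \mathbf{S}_{\hat{m}}$. This yields
\begin{equation}
D\!\left(\mathbf{T}_{\hat{n}} \otimes \mathbf{S}_{\hat{m}},\ \mathbf{T}^{\text{F}}_{\hat{n}} \otimes \mathbf{S}_{\hat{m}}\right) = D\!\left(\mathbf{T}_{\hat{n}},\ \mathbf{T}^{\text{F}}_{\hat{n}}\right).
\end{equation}
Since $\mathbf{T}^{\text{F}}_{\hat{n}} \otimes \mathbf{S}_{\hat{m}}$ is a specific free element of $\mathsf{T}^{\text{F}}_{\hat{n} \cup \hat{m}}$, the infimum over all free processes of the composed system can only be smaller, giving
\begin{equation}
\inf_{F \in \mathsf{T}^{\text{F}}_{\hat{n} \cup \hat{m}}} D\!\left(\mathbf{T}_{\hat{n}} \otimes \mathbf{S}_{\hat{m}},\ F\right) \;\leq\; D\!\left(\mathbf{T}_{\hat{n}},\ \mathbf{T}^{\text{F}}_{\hat{n}}\right).
\end{equation}
Taking the infimum over $\mathbf{T}^{\text{F}}_{\hat{n}}$ on the right produces the required contraction.

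I expect the main obstacle to be bookkeeping rather than inequality-wrangling: one must be careful that the ``nearest free process'' infimum on the right is taken in the sub-theory at resolution $\hat n$, while the infimum on the left is in the enlarged sub-theory at $\hat{n} \cup \hat{m}$, and that the Choi representations being compared live on matching Hilbert spaces (which may require padding one side via a trivial coarse-graining, justified by Lem.~\ref{lem:representation}). A minor subtlety is also verifying that the injective tensor structure on process tensors laid out in Eq.~\eqref{eq:parallelcomp} is compatible with the multiplicativity used in the identity $D(A \otimes C, B \otimes C) = D(A,B)$; this should follow directly from the definition of the Choi state as a trace over independent environments $e_1$ and $e_2$, so the two Choi states factorise across the $(s_1,e_1)$ and $(s_2,e_2)$ partitions in exactly the way the trace-distance identity requires. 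Once these two alignment points are dispatched, the argument is essentially a one-line application of the data-processing/invariance property of trace distance.
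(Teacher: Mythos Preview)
Your proposal is correct and follows essentially the same route as the paper: restrict the infimum on the composite side to product-form free references $\mathbf{T}^{\text{F}}_{\hat{n}}\otimes\mathbf{S}_{\hat{m}}$, then use the behaviour of trace distance under tensoring with a fixed factor to collapse back to $D(\mathbf{T}_{\hat{n}},\mathbf{T}^{\text{F}}_{\hat{n}})$. The only cosmetic differences are that the paper invokes subadditivity (yielding $\leq$) where you use the sharper equality $D(A\otimes C,B\otimes C)=D(A,B)$, and that the paper labels the joint index $\hat{n}\otimes\hat{m}$ rather than $\hat{n}\cup\hat{m}$.
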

\begin{proof}
Let $\mathbf{T}_{\hat{n}} \in \mathsf{T}_{\hat{n}}$ and $\mathbf{S}_{\hat{m}} \in \mathsf{T}^\text{F}_{\hat{m}}$. The trace distance of the parallel composition $\mathbf{T}_{\hat{n}} \otimes \mathbf{S}_{\hat{m}}$ to its nearest two-laboratory free process $\mathbf{R}^{\text{F}}_{\hat{n} \otimes \hat{m}}$ with $\hat{n}$ and $\hat{m}$ intermediate interventions is
\begin{equation}
\begin{aligned}
     \inf_{\mathbf{R}^{\text{F}}_{\hat{n} \otimes \hat{m}}   \in \mathsf{T}^{\text{F}}_{\hat{n} \otimes \hat{m}} }& D(\mathbf{T}_{\hat{n}} \otimes \mathbf{S}_{\hat{m}},\mathbf{R}^{\text{F}}_{\hat{n} \otimes \hat{m}} ) \\
    \leq  \inf_{\mathbf{T}^{\text{F}}_{\hat{n}} \in \mathsf{T}^\text{F}_{\hat{n}}}& D(\mathbf{T}_{\hat{n}} \otimes \mathbf{S}_{\hat{m}},\mathbf{T}^{\text{F}}_{\hat{n}} \otimes \mathbf{S}_{\hat{m}}) \\
    \leq  \inf_{\mathbf{T}^{\text{F}}_{\hat{n}} \in \mathsf{T}^\text{F}_{\hat{n}}} & D(\mathbf{T}_{\hat{n}},\mathbf{T}^{\text{F}}_{\hat{n}}) . 
\end{aligned}
\end{equation}
This parallel composition is equivalent jointly considering two separate experiments. The second line relaxes the infinum from being over both experiments jointly, to only the one corresponding to $\mathsf{T}_{\hat{n}}$. Hence, the second line is greater than the first. The final line follows from subadditivity of trace distance. 
\end{proof}
A similar argument can be used for relative entropy, but invoking additivity under independent subsystems, rather than subadditivity.
\begin{theorem} \label{parallelcompositionrelativeentropy}
For any process tensor $\mathbf{T}_{\hat{n}} \in \mathsf{T}_{\hat{\mathbbm{W}}}$, relative entropy to the nearest free process in the sub-theory for fixed ${\hat{n}}$, $\inf_{\mathbf{T}^{\text{F}}_{\hat{n}} \in \mathsf{T}^\text{F}_{\hat{n}}}S(\mathbf{T}_{\hat{n}}\|\mathbf{T}^{\text{F}}_{\hat{n}})$ is contractive under the parallel composition with free resources.
\end{theorem}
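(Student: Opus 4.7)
The plan is to mirror the proof of Thm.~\ref{thm:parallelcompositiontracedistance} almost verbatim, replacing the subadditivity of trace distance by the additivity of quantum relative entropy across tensor factors. The only substantive new ingredient beyond that earlier argument is $S(\mathbf{S}_{\hat{m}}\|\mathbf{S}_{\hat{m}})=0$, which will let the contribution of the appended free resource vanish identically rather than merely be bounded.

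Concretely, fix $\mathbf{T}_{\hat{n}}\in\mathsf{T}_{\hat{n}}$ and $\mathbf{S}_{\hat{m}}\in\mathsf{T}^{\text{F}}_{\hat{m}}$. First I would observe that the parallel composition of two free resources is again free in the joint theory -- in $\mathsf{Q}_{\hat{\mathbbm{W}}}$ the free resources are the zero-capacity channels $\mathbbm{1}\otimes\beta$, and this form is preserved under tensor products -- so $\mathbf{T}^{\text{F}}_{\hat{n}}\otimes\mathbf{S}_{\hat{m}}\in\mathsf{T}^{\text{F}}_{\hat{n}\otimes\hat{m}}$ whenever $\mathbf{T}^{\text{F}}_{\hat{n}}\in\mathsf{T}^{\text{F}}_{\hat{n}}$. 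Relaxing the joint infimum to this product subfamily then gives
\begin{equation}
\inf_{\mathbf{R}^{\text{F}}_{\hat{n}\otimes\hat{m}}\in\mathsf{T}^{\text{F}}_{\hat{n}\otimes\hat{m}}}S\bigl(\mathbf{T}_{\hat{n}}\otimes\mathbf{S}_{\hat{m}}\,\big\|\,\mathbf{R}^{\text{F}}_{\hat{n}\otimes\hat{m}}\bigr)\leq\inf_{\mathbf{T}^{\text{F}}_{\hat{n}}\in\mathsf{T}^{\text{F}}_{\hat{n}}}S\bigl(\mathbf{T}_{\hat{n}}\otimes\mathbf{S}_{\hat{m}}\,\big\|\,\mathbf{T}^{\text{F}}_{\hat{n}}\otimes\mathbf{S}_{\hat{m}}\bigr).
\end{equation}
Applying the additivity identity $S(A_1\otimes A_2\,\|\,B_1\otimes B_2)=S(A_1\|B_1)+S(A_2\|B_2)$ together with $S(\mathbf{S}_{\hat{m}}\|\mathbf{S}_{\hat{m}})=0$, the right-hand side collapses to $\inf_{\mathbf{T}^{\text{F}}_{\hat{n}}\in\mathsf{T}^{\text{F}}_{\hat{n}}}S(\mathbf{T}_{\hat{n}}\,\|\,\mathbf{T}^{\text{F}}_{\hat{n}})$, which is exactly the monotone evaluated on the original process. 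Contractivity follows.

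The hard part is not the entropy manipulation itself -- additivity on product states is textbook once the factorised form is in place -- but the closure statement $\mathsf{T}^{\text{F}}_{\hat{n}}\otimes\mathsf{T}^{\text{F}}_{\hat{m}}\subseteq\mathsf{T}^{\text{F}}_{\hat{n}\otimes\hat{m}}$ used to license the first step. This is immediate for $\mathsf{Q}_{\hat{\mathbbm{W}}}$ from the stated characterisation of free resources, but it would have to be re-verified for more restrictive sub-theories such as $\mathsf{D}_{\hat{\mathbbm{W}}}$ or $\mathsf{P}_{\hat{\mathbbm{W}}}$ before the same argument transfers. A related subtlety worth flagging is that the product ansatz only supplies an \emph{upper} bound on the joint-theory infimum; if the true joint minimiser happens to be genuinely correlated across the two laboratories, the slack is absorbed harmlessly into the inequality and need not be tracked.
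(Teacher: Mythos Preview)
Your proposal is correct and matches the paper's proof essentially line for line: relax the joint infimum to the product ansatz $\mathbf{T}^{\text{F}}_{\hat{n}}\otimes\mathbf{S}_{\hat{m}}$, then invoke additivity of relative entropy on tensor products to collapse the right-hand side. The paper is slightly terser---it does not spell out the closure $\mathsf{T}^{\text{F}}_{\hat{n}}\otimes\mathsf{T}^{\text{F}}_{\hat{m}}\subseteq\mathsf{T}^{\text{F}}_{\hat{n}\otimes\hat{m}}$ or the vanishing term $S(\mathbf{S}_{\hat{m}}\|\mathbf{S}_{\hat{m}})=0$ explicitly---but your added commentary on these points is accurate and, if anything, makes the argument cleaner.
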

\begin{proof}
Let $\mathbf{T}_{\hat{n}} \in \mathsf{T}_{\hat{n}}$ and $\mathbf{S}_{\hat{m}} \in \mathsf{T}^\text{F}_{\hat{m}}$. The relative entropy of the parallel composition $\mathbf{T}_{\hat{n}} \otimes \mathbf{S}_{\hat{m}}$ to its nearest two-laboratory free process $\mathbf{R}^{\text{F}}_{\hat{n} \otimes \hat{m}}$ with $\hat{n}$ and $\hat{m}$ intermediate interventions is
\begin{equation}
\begin{aligned}
     \inf_{\mathbf{R}^{\text{F}}_{\hat{n} \otimes \hat{m}}   \in \mathsf{T}^{\text{F}}_{\hat{n} \otimes \hat{m}} }& S(\mathbf{T}_{\hat{n}} \otimes \mathbf{S}_{\hat{m}}\|\mathbf{R}^{\text{F}}_{\hat{n} \otimes \hat{m}} ) \\
    \leq  \inf_{\mathbf{T}^{\text{F}}_{\hat{n}} \in \mathsf{T}^\text{F}_{\hat{n}}}& S(\mathbf{T}_{\hat{n}} \otimes \mathbf{S}_{\hat{m}}\|\mathbf{T}^{\text{F}}_{\hat{n}} \otimes \mathbf{S}_{\hat{m}}) \\
    =  \inf_{\mathbf{T}^{\text{F}}_{\hat{n}} \in \mathsf{T}^\text{F}_{\hat{n}}}& S(\mathbf{T}_{\hat{n}}\|\mathbf{T}^{\text{F}}_{\hat{n}}) . 
\end{aligned}
\end{equation}
The second line is an inequality in general but will be an equality for theories where the free processes have no correlations between steps. The equality between the second and third line is due to the additivity of entropy over independent subsystems. 
\end{proof}

In order to have a well-defined tensor product structure for process tensors, a notion of discarding process tensors is also required. Physically this would correspond to neglecting an experiment, and mathematically, it is taking the partial trace over a the process tensor which represents that experiment. Contractivity of relative entropy and trace distance under partial traces ensures that this still respects our monotones. The final requirement for a tensor product structure of process tensors is invariance under permutations of the tensor product, which trace distance and relative entropy also respect.

In most existing resource theories, the inclusion of a tensor product structure is only made meaningful by free operations which can cause subsystems to interact. Considering two totally isolated laboratories jointly is superfluous unless there is some kind of relation between them, such as sharing entanglement and allowing classical communication (LOCC). In our case, these relations between subsystems is not only spatial but also temporal. We have experimenters who can act on multiple subsystems at multiple times, jointly harnessing `spatiotemporal' correlations.

\subsubsection{Sequential Composition} \label{sec:sequentialcomp}

An experimenter may choose to extend the duration of their experiment, which corresponds to appending an additional process tensor before the beginning or after the conclusion of the original experiment. In doing this one must ensure that the post operation of the first experiment, and the pre-operation of the second are still possible, creating an intermediate intervention at the moment where the join occurs. This can be achieved by adjoining an identity operation to the end of the end of the process tensor which occurs first, and then composing the result together. By contrast the traditional notion of channel composition involves composing the output of the first channel with the input of the next, but without the possibility of an intermediate intervention at the moment of the join. However, the traditional notion of channel composition can still be recovered from the process tensor notion by coarse-graining away the intermediate intervention. Additionally, in the process tensor case like the channel case, no environment memory can transport information about the system between the two objects being composed.

Given two process tensors $\mathbf{T}_{\hat{n}}=\text{tr}_{e} \left\{ \mathcal{T}^{s e}_{t:t_{n}} \circ_{e} \dots \circ_{e} \mathcal{T}^{s e}_{t_{1}:t'} \circ_{e} \tau^{e}  \right\}$ and $\mathbf{S}_{{\hat{m}}}=\text{tr}_{e} \left\{ \mathcal{S}^{s e}_{t':t'_{m}} \circ_{e} \dots \circ_{e} \mathcal{S}^{s e}_{t_{1}:0} \circ_{e} \sigma^{e} \right\}$, sequential composition is defined as
\begin{equation} \label{eq:sequentialcomp}
\begin{aligned}
    \mathbf{T}_{\hat{n}} \circ \mathbf{S}_{\hat{m}}:&= \text{tr}_{e} \left\{ \mathcal{T}^{s e}_{t:t_{n}} \circ_{e} \dots \circ_{e} \mathcal{T}^{s e}_{t_1:t'} \circ_{e} \tau^{e} \right\} \circ  \text{tr}_{e} \left\{\mathcal{I}^{s e} \circ_{e} \mathcal{S}^{s e}_{t':t'_m} \circ_{e} \dots \circ_{e} \mathcal{S}^{s e}_{t'_1:0} \circ_{e} \sigma^{e} \right\}  \\
    &= \text{tr}_{e} \left\{ \mathcal{T}^{s e}_{t:t_{n}} \circ_{e} \dots \circ_{e} \mathcal{T}^{s e}_{t_1:t'} \circ_{e} \tau^{e} \right\} \otimes  \text{tr}_{e} \left\{ \mathcal{S}^{s e}_{t':t'_m} \circ_{e} \dots \circ_{e} \mathcal{S}^{s e}_{t'_1:0} \circ_{e} \sigma^{e} \right\},
    \end{aligned}
\end{equation}
where $\mathcal{I}^{se}$ is the identity channel on the system and environment, placed after the final channel within $\mathbf{S}_{\hat{m}}$ to make room for a post-operation to the $\mathbf{S}_{\hat{m}}$ (which is also a pre-operation for $\mathbf{T}_{\hat{n}}$). The first line is equal to the second line because the composition $\mathbf{T}_{\hat{n}} \circ \mathbf{S}_{\hat{m}}$ is Markovian about the join between $\mathbf{T}_{\hat{n}}$ and $\mathbf{S}_{\hat{m}}$. However, observe the difference between this second line and Eq.~\eqref{eq:parallelcomp}. In the parallel case, other than both occuring within the same window $(0,t)$, $\hat{n}$ and $\hat{n}$ are not related by any temporal order. In Eq.~\eqref{eq:sequentialcomp}, all times in $\hat{n}$ subsequent to all times in $\hat{m}$. An example of sequential composition is given in Fig.~\ref{fig:sequentialcompositionexample}.
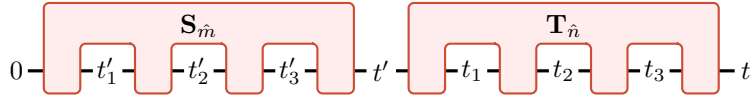
\begin{figure}[ht!]
\centering
\begin{tikzpicture}[scale=0.6]

\draw[black, very thick,solid] (-0.25,0.5) -- (15.25,0.5);

\draw[myred,fill=myredfill, thick,solid,rounded corners=2] (0+0.1,1+0.1) -- (0+0.1,0) -- (1-0.1,0) -- (1-0.1,1+0.1) -- (2-0.1,1+0.1) --
(0+0.1+2,1+0.1) -- (0+0.1+2,0) -- (1-0.1+2,0) -- (1-0.1+2,1+0.1) -- (2-0.1+2,1+0.1) --
(0+0.1+4,1+0.1) -- (0+0.1+4,0) -- (1-0.1+4,0) -- (1-0.1+4,1+0.1) -- (2-0.1+4,1+0.1) --
(0+0.1+6,1+0.1) -- (0+0.1+6,0) -- (1-0.1+6,0) -- (1-0.1+6,1+0.1) 
-- (1-0.1+6,2) -- (0+0.1,2) -- (0+0.1,1+0.1);

\draw[myred,fill=myredfill, thick,solid,rounded corners=2]
(0+0.1+8,1+0.1) -- (0+0.1+8,0) -- (1-0.1+8,0) -- (1-0.1+8,1+0.1) -- (2-0.1+8,1+0.1) --
(0+0.1+10,1+0.1) -- (0+0.1+10,0) -- (1-0.1+10,0) -- (1-0.1+10,1+0.1) -- (2-0.1+10,1+0.1) --
(0+0.1+12,1+0.1) -- (0+0.1+12,0) -- (1-0.1+12,0) -- (1-0.1+12,1+0.1) -- (2-0.1+12,1+0.1) --
(0+0.1+14,1+0.1) -- (0+0.1+14,0) -- (1-0.1+14,0) -- (1-0.1+14,1+0.1) 
-- (1-0.1+6+8,2) -- (0+0.1+8,2) -- (0+0.1+8,1+0.1);

\draw[white,fill=white,thick,solid,rounded corners=2] (1+0.2,0+0.2) rectangle (2-0.2,1-0.1);

\draw[white,fill=white,thick,solid,rounded corners=2] (1+0.2+2,0+0.2) rectangle (2-0.2+2,1-0.1);

\draw[white,fill=white,thick,solid,rounded corners=2] (1+0.2+4,0+0.2) rectangle (2-0.2+4,1-0.1);

\draw[white,fill=white,thick,solid,rounded corners=2] (1+0.2+6,0+0.2) rectangle (2-0.2+6,1-0.1);

\draw[white,fill=white,thick,solid,rounded corners=2] (1+0.2+8,0+0.2) rectangle (2-0.2+8,1-0.1);

\draw[white,fill=white,thick,solid,rounded corners=2] (1+0.2+10,0+0.2) rectangle (2-0.2+10,1-0.1);

\draw[white,fill=white,thick,solid,rounded corners=2] (1+0.2+12,0+0.2) rectangle (2-0.2+12,1-0.1);

\draw[] (-0.5,0.5) node[rotate=0] { $0$};
\draw[] (1.5,0.5) node[rotate=0] { $t'_1$};
\draw[] (3.5,0.5) node[rotate=0] { $t'_2$};
\draw[] (5.5,0.5) node[rotate=0] { $t'_3$};
\draw[] (7.5,0.5) node[rotate=0] { $t'$};

\draw[] (1.5+8,0.5) node[rotate=0] { $t_1$};
\draw[] (3.5+8,0.5) node[rotate=0] { $t_2$};
\draw[] (5.5+8,0.5) node[rotate=0] { $t_3$};
\draw[] (7.5+8,0.5) node[rotate=0] { $t$};

\draw[] (3.5,1.5) node[rotate=0] { $\mathbf{S}_{\hat{m}}$};
\draw[] (3.5+8,1.5) node[rotate=0] { $\mathbf{T}_{\hat{n}}$};

\end{tikzpicture}
\caption{An example of sequential composition $\mathbf{T}_{\hat{n}} \circ \mathbf{S}_{\hat{m}}$, with $\hat{n}=\{t_1,t_2,t_3 \}$ and $\hat{m}=\{t'_1,t'_2,t'_3 \}$. The input of $\mathbf{T}_{\hat{n}}$ overlaps with the output of $\mathbf{S}_{\hat{m}}$ at $t'$, where the corresponding pre/post operations become a new intermediate intervention. }  \label{fig:sequentialcompositionexample}
\end{figure}

We seek to formalise the idea that if an experimenter can always freely extend the duration of their experiment (either by starting earlier or finishing later), so long as the process that they are extending their experiment is a free one. 

Let $\mathbf{T}_{\hat{n}} \in \mathsf{T}_{\hat{n}}$ be some (potentially non-free) process the experimenter is given, and let $\mathbf{S}_{\hat{m}} \in \mathsf{T}^\text{F}_{\hat{m}}$ be a free one they seek to append. Define the transformations $\mathbf{T}_{\hat{n}} \mapsto  \mathbf{S}_{\hat{m}} \circ \mathbf{T}_{\hat{n}}$ and $\mathbf{T}_{\hat{n}} \mapsto  \mathbf{T}_{\hat{n}} \circ \mathbf{S}_{\hat{m}}$ both to be free operations. As before, it is not guaranteed that existing monotones will be respected. However, we show that relative entropy and trace distance both remain monotones after including these free operations.
\begin{theorem} \label{thm:sequentialtracedist}
For any process tensor $\mathbf{T}_{\hat{n}} \in \mathsf{T}_{\hat{\mathbbm{W}}}$, trace distance to the nearest free process in the sub-theory for fixed ${\hat{n}}$, $\inf_{\mathbf{T}^{\text{F}}_{\hat{n}} \in \mathsf{T}^\text{F}_{\hat{n}}}D(\mathbf{T}_{\hat{n}},\mathbf{T}^{\text{F}}_{\hat{n}})$ is contractive under the pre and post sequential composition with free resources.
\end{theorem}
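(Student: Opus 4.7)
The plan is to mirror the structure of the proof of Thm.~\ref{thm:parallelcompositiontracedistance}, exploiting the fact that the sequential composition defined in Eq.~\eqref{eq:sequentialcomp} reduces, in the Choi picture, to a tensor product across the join. The key observation is already embedded in Eq.~\eqref{eq:sequentialcomp}: because an identity channel $\mathcal{I}^{se}$ is inserted at the join, no environmental information is carried between $\mathbf{T}_{\hat n}$ and $\mathbf{S}_{\hat m}$, so the composed object is Markovian about $t'$ and factorises as $\mathbf{T}_{\hat n} \otimes \mathbf{S}_{\hat m}$ on the relevant Hilbert spaces (with a new intervention slot at $t'$). Once this factorisation is in hand, the trace-distance argument is essentially identical to the parallel case.

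Concretely, I would proceed in three steps. First, fix $\mathbf{T}_{\hat n}\in \mathsf{T}_{\hat n}$ and $\mathbf{S}_{\hat m}\in \mathsf{T}^{\text{F}}_{\hat m}$, and let $\hat{p}=\hat n \cup \{t'\} \cup \hat m$ be the set of intermediate times of the composed process. Relax the infimum by restricting to free processes of the factorised form $\mathbf{T}^{\text{F}}_{\hat n}\circ \mathbf{S}_{\hat m}$, which are themselves free in $\mathsf{T}^{\text{F}}_{\hat p}$ by the assumption that sequential composition with a free resource is a free transformation and that $\mathbf{T}^{\text{F}}_{\hat n}$ is free. Second, invoke the Choi-level tensor factorisation to write
\begin{equation*}
D\bigl(\mathbf{T}_{\hat n}\circ \mathbf{S}_{\hat m},\ \mathbf{T}^{\text{F}}_{\hat n}\circ \mathbf{S}_{\hat m}\bigr)
= D\bigl(\mathbf{T}_{\hat n}\otimes \mathbf{S}_{\hat m},\ \mathbf{T}^{\text{F}}_{\hat n}\otimes \mathbf{S}_{\hat m}\bigr).
\end{equation*}
Third, apply the standard identity $\|\rho\otimes\tau - \sigma\otimes\tau\|_1 = \|\tau\|_1\,\|\rho-\sigma\|_1$ (with $\mathbf{S}_{\hat m}$ playing the role of $\tau$, normalised to unit trace), to collapse this to $D(\mathbf{T}_{\hat n},\mathbf{T}^{\text{F}}_{\hat n})$. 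Taking the infimum over $\mathbf{T}^{\text{F}}_{\hat n}$ and chaining the three estimates yields
\begin{equation*}
\inf_{\mathbf{R}^{\text{F}}_{\hat p}\in\mathsf{T}^{\text{F}}_{\hat p}} D\bigl(\mathbf{T}_{\hat n}\circ \mathbf{S}_{\hat m},\ \mathbf{R}^{\text{F}}_{\hat p}\bigr)
\le \inf_{\mathbf{T}^{\text{F}}_{\hat n}\in\mathsf{T}^{\text{F}}_{\hat n}} D\bigl(\mathbf{T}_{\hat n},\ \mathbf{T}^{\text{F}}_{\hat n}\bigr).
\end{equation*}

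The pre-composition case $\mathbf{T}_{\hat n}\mapsto \mathbf{S}_{\hat m}\circ \mathbf{T}_{\hat n}$ is handled by exactly the same three steps with the roles of the two factors swapped, noting that the Choi-level factorisation is symmetric and that permutation of the tensor factors preserves trace distance; the join simply occurs at the start of $\mathbf{T}_{\hat n}$ rather than at its end.

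The main obstacle, and really the only non-mechanical part, is justifying the tensor factorisation step cleanly. One must verify that the identity channel inserted at the join in Eq.~\eqref{eq:sequentialcomp} genuinely produces an uncorrelated Choi representation on the two sub-combs (so that the post/pre operation at $t'$ is split into independent legs on either side), and that the infimum-restriction in the first step is legitimate — i.e., that $\mathbf{T}^{\text{F}}_{\hat n}\circ \mathbf{S}_{\hat m}\in\mathsf{T}^{\text{F}}_{\hat p}$. The former is guaranteed by Eq.~\eqref{eq:sequentialcomp} itself, and the latter follows from the stipulation made immediately before the theorem that pre- and post-composition with free resources is itself declared to be a free operation. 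Everything else is a straightforward application of the multiplicativity of the trace norm under tensor products with a normalised state.
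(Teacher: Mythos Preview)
Your proposal is correct and follows essentially the same route as the paper's proof: both reduce $\circ$ to $\otimes$ in the Choi picture via Eq.~\eqref{eq:sequentialcomp}, restrict the infimum to product free processes $\mathbf{T}^{\text{F}}_{\hat n}\otimes \mathbf{S}_{\hat m}$, and then invoke the behaviour of trace distance under tensoring with a fixed state. Your use of the exact multiplicativity $\|\rho\otimes\tau-\sigma\otimes\tau\|_1=\|\tau\|_1\|\rho-\sigma\|_1$ is slightly sharper wording than the paper's ``subadditivity'', and your explicit inclusion of the join time $t'$ in $\hat p$ is a careful touch, but the logical skeleton is identical.
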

\begin{proof}
The proof here closely resembles the parallel case, which is no accident. Since the Choi isomorphism treats different time steps as different spatial subsystems, and the sequential composition of two process tensors is Markovian about the partitioning induced by the join, in this context $\circ$ is indistinguishable from the regular $\otimes$ except for one key difference. In the parallel case, the infinum is taken over processes with totally independent sets of times for intermediate interventions $\hat{n}\otimes \hat{m}$. In the sequential case, these two sets of times must respect temporal ordering, i.e. all times in $\hat{n}$ before all times in $\hat{m}$. This changes the range of processes which can be considered in the infinum when finding the nearest free process. Let $\mathbf{T}_{\hat{n}} \in \mathsf{T}_{\hat{n}}$ and $\mathbf{S}_{\hat{m}} \in \mathsf{T}^\text{F}_{\hat{m}}$. The trace distance to the nearest free process $\mathbf{R}^{\text{F}}_{\hat{n} \cup {\hat{m}}}$ for fixed ${\hat{n} \cup {\hat{m}}}$ after sequential composition is
\begin{equation}
\begin{aligned}
      \inf_{\mathbf{R}^{\text{F}}_{\hat{n} \cup {\hat{m}}} \in \mathsf{T}^\text{F}_{\hat{n} \cup \hat{m}}}  &D(\mathbf{T}_{\hat{n}} \circ \mathbf{S}_{\hat{m}},\mathbf{R}^{\text{F}}_{\hat{n}\cup {\hat{m}}} ) \\
    =  \inf_{\mathbf{R}^{\text{F}}_{\hat{n} \cup {\hat{m}}} \in \mathsf{T}^\text{F}_{\hat{n} \cup \hat{m}}}  &D(\mathbf{T}_{\hat{n}} \otimes \mathbf{S}_{\hat{m}},\mathbf{R}^{\text{F}}_{\hat{n}\cup {\hat{m}}} ) \\
    \leq  \inf_{\mathbf{T}^{\text{F}}_{\hat{n}} \in \mathsf{T}^\text{F}_{\hat{n}}}  &D(\mathbf{T}_{\hat{n}} \otimes \mathbf{S}_{\hat{m}},\mathbf{T}^{\text{F}}_{\hat{n}} \otimes \mathbf{S}_{\hat{m}}) \\
    \leq  \inf_{\mathbf{T}^{\text{F}}_{\hat{n}} \in \mathsf{T}^\text{F}_{\hat{n}}}  &D(\mathbf{T}_{\hat{n}},\mathbf{T}^{\text{F}}_{\hat{n}}) . 
\end{aligned}
\end{equation}
The first and second line are equal, since sequential and parallel composition can both be represented with tensor products, but over different temporal orderings of $\hat{n}$ and $\hat{m}$. The third line is less optimal than the second because we are specifying the prior part of $\mathbf{R}^{\text{F}}_{\hat{n} \cup {\hat{m}}}$ to be $\mathbf{S}_{\hat{m}}$. Finally, as in Thm.~\ref{thm:parallelcompositiontracedistance}, the last line follows from subadditivity of trace distance. This argument proves pre-composition with free resources, but post-composition follows from an identical proof.
\end{proof}
Again, a similar agrument can be used for relative entropy.
\begin{theorem} \label{thm:sequentialcompositionrelativeentropy}
For any process tensor $\mathbf{T}_{\hat{n}} \in \mathsf{T}_{\hat{\mathbbm{W}}}$, relative entropy to the nearest free process in the sub-theory for fixed ${\hat{n}}$, $\inf_{\mathbf{T}^{\text{F}}_{\hat{n}} \in \mathsf{T}^\text{F}_{\hat{n}}}S(\mathbf{T}_{\hat{n}}\|\mathbf{T}^{\text{F}}_{\hat{n}})$ is contractive under the pre and post sequential composition with free resources.
\end{theorem}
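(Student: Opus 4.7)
The plan is to mirror the structure of Thm.~\ref{thm:sequentialtracedist}, using the same observation that sequential composition of process tensors is Markovian about the temporal join and can therefore be written as a tensor product under the Choi isomorphism, but replacing subadditivity of trace distance with additivity of relative entropy on independent subsystems (as used in Thm.~\ref{parallelcompositionrelativeentropy}).

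First I would let $\mathbf{T}_{\hat{n}} \in \mathsf{T}_{\hat{n}}$ and $\mathbf{S}_{\hat{m}} \in \mathsf{T}^{\text{F}}_{\hat{m}}$, and consider the relative entropy of $\mathbf{T}_{\hat{n}} \circ \mathbf{S}_{\hat{m}}$ to its nearest free process $\mathbf{R}^{\text{F}}_{\hat{n} \cup \hat{m}} \in \mathsf{T}^{\text{F}}_{\hat{n} \cup \hat{m}}$, where $\hat{n} \cup \hat{m}$ respects the temporal ordering that places all times of $\hat{m}$ before those of $\hat{n}$. Using Eq.~\eqref{eq:sequentialcomp}, the sequential composition equals the tensor product $\mathbf{T}_{\hat{n}} \otimes \mathbf{S}_{\hat{m}}$ at the level of Choi states, so the infimum can be transcribed into the same form that appears in Thm.~\ref{parallelcompositionrelativeentropy}. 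Next, I would restrict the search to nearest-free processes of the product form $\mathbf{T}^{\text{F}}_{\hat{n}} \otimes \mathbf{S}_{\hat{m}}$; this yields an upper bound on the original infimum.

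The chain then reads
\begin{equation}
\begin{aligned}
\inf_{\mathbf{R}^{\text{F}}_{\hat{n} \cup \hat{m}} \in \mathsf{T}^{\text{F}}_{\hat{n} \cup \hat{m}}} S(\mathbf{T}_{\hat{n}} \circ \mathbf{S}_{\hat{m}} \| \mathbf{R}^{\text{F}}_{\hat{n} \cup \hat{m}}) &= \inf_{\mathbf{R}^{\text{F}}_{\hat{n} \cup \hat{m}} \in \mathsf{T}^{\text{F}}_{\hat{n} \cup \hat{m}}} S(\mathbf{T}_{\hat{n}} \otimes \mathbf{S}_{\hat{m}} \| \mathbf{R}^{\text{F}}_{\hat{n} \cup \hat{m}}) \\
&\leq \inf_{\mathbf{T}^{\text{F}}_{\hat{n}} \in \mathsf{T}^{\text{F}}_{\hat{n}}} S(\mathbf{T}_{\hat{n}} \otimes \mathbf{S}_{\hat{m}} \| \mathbf{T}^{\text{F}}_{\hat{n}} \otimes \mathbf{S}_{\hat{m}}) \\
&= \inf_{\mathbf{T}^{\text{F}}_{\hat{n}} \in \mathsf{T}^{\text{F}}_{\hat{n}}} S(\mathbf{T}_{\hat{n}} \| \mathbf{T}^{\text{F}}_{\hat{n}}),
\end{aligned}
\end{equation}
where the final equality uses additivity of quantum relative entropy on independent tensor factors. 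The pre-composition case $\mathbf{S}_{\hat{m}} \circ \mathbf{T}_{\hat{n}}$ is handled identically by symmetry of the argument in the two factors.

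I expect the only subtle step to be the first line, where one has to confirm that the set $\mathsf{T}^{\text{F}}_{\hat{n} \cup \hat{m}}$ of free processes on the combined temporal support is rich enough to contain all product-form elements $\mathbf{T}^{\text{F}}_{\hat{n}} \otimes \mathbf{S}_{\hat{m}}$ --- this is exactly the content of Lem.~\ref{lem:representation} combined with the assumption that $\mathbf{S}_{\hat{m}}$ is free, but it is worth spelling out carefully so that the temporal-ordering constraint on $\hat{n} \cup \hat{m}$ (which distinguishes this proof from the parallel case of Thm.~\ref{parallelcompositionrelativeentropy}) does not silently exclude relevant product free processes. Beyond that, every other ingredient is either a direct invocation of additivity of relative entropy or an already-proven fact from earlier in the paper.
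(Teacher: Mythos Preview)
Your proposal is correct and follows essentially the same approach as the paper: the paper's proof is exactly the same four-line chain you wrote, justified by the Markovianity of the sequential join (so $\circ$ becomes $\otimes$ at the Choi level), restriction of the infimum to product-form free processes, and additivity of relative entropy on independent factors. Your extra caution about invoking Lem.~\ref{lem:representation} for the second line is slightly misplaced---what is actually needed there is simply that $\mathbf{T}^{\text{F}}_{\hat{n}} \otimes \mathbf{S}_{\hat{m}} \in \mathsf{T}^{\text{F}}_{\hat{n} \cup \hat{m}}$, i.e., that a product of free processes is free, not anything about the representation of free transformations---but this does not affect the validity of the argument.
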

\begin{proof}
Let $\mathbf{T}_{\hat{n}} \in \mathsf{T}_{\hat{n}}$ and $\mathbf{S}_{\hat{m}} \in \mathsf{T}^\text{F}_{\hat{m}}$. Relative entropy in the pre-composition case can be written as
\begin{equation}
\begin{aligned}
      \inf_{\mathbf{R}^{\text{F}}_{\hat{n} \cup {\hat{m}}} \in \mathsf{T}^\text{F}_{\hat{n} \cup \hat{m}}}  &S(\mathbf{T}_{\hat{n}} \circ \mathbf{S}_{\hat{m}} \| \mathbf{R}^{\text{F}}_{\hat{n}\cup {\hat{m}}} ) \\
    =  \inf_{\mathbf{R}^{\text{F}}_{\hat{n} \cup {\hat{m}}} \in \mathsf{T}^\text{F}_{\hat{n} \cup \hat{m}}}  &S(\mathbf{T}_{\hat{n}} \otimes \mathbf{S}_{\hat{m}} \| \mathbf{R}^{\text{F}}_{\hat{n}\cup {\hat{m}}} ) \\
    \leq  \inf_{\mathbf{T}^{\text{F}}_{\hat{n}} \in \mathsf{T}^\text{F}_{\hat{n}}}  &S(\mathbf{T}_{\hat{n}} \otimes \mathbf{S}_{\hat{m}} \| \mathbf{T}^{\text{F}}_{\hat{n}} \otimes \mathbf{S}_{\hat{m}}) \\
    =  \inf_{\mathbf{T}^{\text{F}}_{\hat{n}} \in \mathsf{T}^\text{F}_{\hat{n}}}  &S(\mathbf{T}_{\hat{n}} \| \mathbf{T}^{\text{F}}_{\hat{n}}) . 
\end{aligned}
\end{equation}
The first and second line are equal for the same reason given in the proof of Thm.~\ref{thm:sequentialtracedist}. The remainder follows the same argument as in Thm.~\ref{parallelcompositionrelativeentropy}
\end{proof}
It should be stressed that Thm.~\ref{thm:sequentialtracedist} and Thm.~\ref{thm:sequentialcompositionrelativeentropy} only prove that adding \emph{free} processes is a resource non-increasing operation. In a physical laboratory, when the process the experimenter has access to is non-free, waiting longer might turn out to be equivalent to acquiring an additional valuable resource. Increasing resource value by waiting is harnessing the a resource flux from the environment in an analogous manner to how solar panels or wind turbines collect energy.

Having shown that resource theories of temporal resolution have notions of parallel and sequential composition which respect relevant monotones, the door is open to study concepts like asymptotic conversion, catalytic conversion, experiments with non-fixed durations, and much more. Since the set of process tensors subsumes the set of channels, and channel resource theories have already seen significant success~\cite{reversibleframework, fundementallimitationsondistillation}, we expect there are many novel results for these structures waiting to be uncovered.

\subsection{Markovianisation for Idealised DD} \label{sup:markovianisation}

In general, one cannot decide if noise is decouplable by looking at its master equation~\cite{canquantummarkovevolutions}, which is our motivation for considering process tensors, and their multitime correlations. However, in the case where the Hamiltonian can be decoupled, we show that the residual dynamics after DD are fully Markovian, given frequent enough interventions. This is evidence to support the claim that DD is expending non-Markovianity as a resource. This argument is based on an a similar existing result~\cite{canquantummarkovevolutions}.

Consider the process tensor $\mathbf{T}_{\hat{n}}$ generated by the Lindblad evolution on both $s$ and $e$, with sufficiently short gaps between evenly spaced interventions $\tau=t/(|\hat{n}|+1)$ such that it can be expressed as
\begin{equation} \label{eq:lindbladevo}
    \Phi^{se}(\rho^{se}):=e^{\tau\mathcal{L}^{se}} \rho^{se} e^{-\tau\mathcal{L}^{se}}=\rho^{se}-i\tau[H^{se},\rho^{se}]+ \tau \sum_k \gamma_k \left( L^{se}_k \rho^{se} L^{se\dag}_k - \frac{1}{2}\left\{ L^{se\dag}_k L^{se}_k, \rho^{se} \right\} \right).
\end{equation}
We ask that $H^{se}$ is decouplable, but make no restrictions on $L^{se}_k$. After applying an element $\mathcal{V}^{s}_l \in V$ from the DD group~\cite{dynamicaldecouplingofopenquantumsystems} for the system Hilbert space $s$, the evolution becomes
\begin{equation}
\begin{aligned}
     \Phi^{se}_l (\rho^{se}) =&  \frac{1}{|V|}\mathcal{V}^{s\dag}_l \circ \Phi^{se}  \circ \mathcal{V}^{s}_l (\rho^{se}) \\ 
    = &  \mathcal{V}^{s\dag}_l \circ \mathcal{V}^{s}_l (\rho^{se}) \\ -& i\tau \mathcal{V}^{s\dag}_l \left( [H^{se}, \mathcal{V}^{s}_l(\rho^{se})] \right) \\
    + & \tau \sum_k \gamma_k \mathcal{V}^{s\dag}_l\left( L^{se}_k \mathcal{V}^{s}_l(\rho^{se}) L^{se\dag}_k - \frac{1}{2}\left\{ L^{se\dag}_k L^{se}_k, \mathcal{V}^{s}_l(\rho^{se}) \right\} \right).
\end{aligned}
\end{equation}
Expanding out $\mathcal{V}^{s}_l( \ \cdot \ )=v^{s}_l ( \ \cdot \ ) v^{s\dag}_l$ resolves the expression to
\begin{equation}
\begin{aligned}
     \Phi^{se}_l (\rho^{se}) =&  \rho^{se} \\
    -& i\tau  \left( \left[ (v^{s\dag}_l H^{se} v^{s}_l), \rho^{se}\right] \right) \\
    + & \tau \sum_k \gamma_k \Big( (v^{s\dag}_l L^{se}_k v^{s}_l)\rho^{se}(v^{s\dag}_l L^{se\dag}_k v^{s}_l)  \\
    - & \frac{1}{2}\left( (v^{s\dag}_l L^{se\dag}_k v^{s}_l) (v^{s\dag}_l L^{se}_k v^{s}_l)\rho^{se}(v^{s\dag}_l v^{s}_l) + (v^{s\dag}_l v^{s}_l )\rho^{se}(v^{s\dag}_l L^{se\dag}_k v^{s}_l) (v^{s\dag}_l L^{se}_k v^{s}_l)  \Big) \right) ,
\end{aligned}
\end{equation}
which is merely a change of basis on the operators. Define $v^{s\dag}_l H^{se} v^{s}_l:=H^{se}_l$ and $v^{s\dag}_l L^{se}_k v^{s}_l:= L^{se}_{kl}$ to re-express the evolution as
\begin{equation}
\begin{aligned}
     \Phi^{se}_l (\rho^{se}) =  \rho^{se} 
    -i\tau[H^{se}_l,\rho^{se}] 
    + \tau \sum_k \gamma_k \left( L^{se}_{kl} \rho^{se} L^{se\dag}_{kl} - \frac{1}{2}\left\{ L^{se\dag}_{kl} L^{se}_{kl}, \rho^{se} \right\} \right).
\end{aligned}
\end{equation}

For a full DD sequence and coarse-graining $\llbracket \mathbf{T}_{\hat{n}} | \mathbf{Z}_{\hat{n}} | \mathbf{I}_{\hat{n} \setminus \hat{m}} \rrbracket$ in the small $\tau$ limit where a 
Trotter approximation is valid, it is possible to write
\begin{equation}
    \bigcirc_{l=1}^{|V|}\Phi_l (\rho^{se}) \approx \exp \left(\sum_{k=1}^{|V|}i\tau\mathcal{L}_l^{se} \right) (\rho^{se}).
\end{equation}
The `$\approx$' is equating the decoupled process tensor with multiple distinct evolutions, to one evolving under a single average Hamiltonian. While $\mathbf{Z}_{\hat{n}}$ preserves all of $I$, $M$, and $N$, the application of DD causes the coarse-graining to have a drastically different effect. The dynamics becomes
\begin{equation}
\begin{aligned}
    \exp \left(\sum_{l=1}^{|V|}i\tau\mathcal{L}_l^{se} \right) (\rho^{se}) 
    = &\rho^{se}  -i[\sum_{l=1}^{|V|} \tau H^{se}_l,\rho^{se}] \\
    +& \sum_{l=1}^{|V|} \tau \sum_k \gamma_k \left( L^{se}_{kl} \rho^{se} L^{se\dag}_{kl} - \frac{1}{2}\left\{ L^{se\dag}_{kl} L^{se}_{kl}, \rho^{se} \right\} \right).
\end{aligned}
\end{equation}
The definition of the decoupling group $V$, $\sum_{l=1}^{|V|} v^{s}_l X^{se}_l v^{s\dag}_l=I^{s}\otimes B^e$ for some operator $B^e$ on the environment state, and any operator $X^{se}$, implies that the actions of the Hamiltonian terms on $s$ will cancel exactly in this scenario where $\tau$ is allowed to get arbitrarily small. However, the same rule cannot be applied to the dissipator term in the Lindblad superoperator, suggesting that cancellation is not guaranteed. The remaining dynamics from the perspective of $s$ alone only contains contributions from the dissipator $L^{se\dag}_{kl}$
\begin{equation}
\text{tr}_{e} \left\{ \bigcirc_{l=1}^{|V|}\Phi_l (\rho^{se}) \right\}
    = \rho^{s} - \frac{1}{2}\text{tr}_{e} \left\{ \sum_{l=1}^{|V|} \tau \sum_k \gamma_k \ \left\{ L^{se\dag}_{kl} L^{se}_{kl}, \rho^{se} \right\}  \right\},
\end{equation}
implying that the generator of the dynamics on $s$ is purely Markovian, meaning that $\mathbf{T}_{\hat{m}}$ will be too.

\subsection{Sub-theories of $\mathsf{Q}_{\hat{\mathbbm{W}}}$} \label{sup:subtheories}

$\mathsf{Q}_{\hat{\mathbbm{W}}}$ is capable of describing the actions of an experimenter performing DD, and how those actions are perceived after coarse-graining as preserving information at the system-level via the consumption of non-Markovianity. However, DD does not use the full capabilities of experimenters operating within $\mathsf{Q}_{\hat{\mathbbm{W}}}$. For example, QEC uses free transformations within $\mathsf{Q}_{\hat{\mathbbm{W}}}$ that DD does not call on. In QEC, information is redundantly encoded in a multipartite state such that when errors occur, syndrome measurements, followed by corresponding unitaries can be performed to correct those errors. The measurements required for QEC are allowed in $\mathsf{Q}_{\hat{\mathbbm{W}}}$, but not required by DD. As such we underline the lesser requirements to perform DD, from the full resource theory $\mathsf{Q}_{\hat{\mathbbm{W}}}$, using the restriction $\mathsf{D}_{\hat{\mathbbm{W}}} \subset \mathsf{Q}_{\hat{\mathbbm{W}}}$, constraining the free superprocesses to be sequences of unitaries. Consequently $I$, $M$, and $N$ will be invariant under the free superprocesses of this $\mathsf{D}_{\hat{\mathbbm{W}}}$.

Still, $\mathsf{Q}_{\hat{\mathbbm{W}}}$ encompasses many other noise suppression techniques. We illustrate two other restricted sub-theories of $\mathsf{Q}_{\hat{\mathbbm{W}}}$ with insufficient sets of free superprocesses to permit DD or QEC, but still allow other techniques: the inducement of decoherence free subspaces (DFS), and the quantum Zeno effect (QZE).

\subsubsection{Resource Theory for DFS Inducement} 

 For the case of DD, we saw in Sup.~\ref{sup:markovianisation} that an appropriately chosen sequence of unitary pulses performed by a fine-grained experimenter operating within $\mathsf{Q}_{\hat{\mathbbm{W}}}$ has the potential to eliminate unitary evolution at the system level for many (although not all) Hamiltonians, leaving only a residual Markovian evolution in tact. Here, we show that an experimenter under stricter constraints can potentially do the converse: apply an appropriately chosen sequence of pulses to eliminate dissipative evolution, while preserving the unitary evolution.
 
A DFS $P(\mathcal{L}^{se})$ can be defined~\cite{algebraicconditionsforconvergence} (at the $s$-$e$ level) as a set of states $\rho^{se}$ such that Lindblad evolution is non-dissipative 
\begin{equation}
    \Phi^{se}(\rho^{se})=e^{t\mathcal{L}^{se}}\rho^{se} e^{-t\mathcal{L}^{se}}=e^{tH^{se}}\rho^{se} e^{-tH^{se}},
\end{equation}
approximating brief duration $t$. For a decoherence free subspace to be useful, we also require it to at least be of dimension two, i.e. contains more than a single valid quantum state. It has been shown~\cite{algebraicconditionsforconvergence} that this definition is equivalent to the condition 
\begin{equation} \label{eq:DFScondition}
    P(\mathcal{L}^{se})= \left\{ \rho^{se} \ : \ \big[ \text{ad}^{l}_{H^{se}} (L^{se}_k)  ,\rho^{se} \big]=0, \ \big[ \text{ad}^{l}_{H^{se}} (L^{se \dag}_k)  ,\rho^{se} \big]=0, \     \ : \ k\geq 1, \  l\geq 0 \right\},
\end{equation}
where $\text{ad}_{H^{se}}$ is a map acting as $\text{ad}^0_{H^{se}}(a)=a^{se}$ and $\text{ad}^1_{H^{se}}(a^{se})=[H^{se},a^{se}]$. Consider the fine-grained picture, where a process tensor $\mathbf{T}_{\hat{n}} \in \mathsf{T}_{\hat{\mathbbm{W}}}$ is derived from a general $se$ Lindblad evolution. For adequately small $\tau$ the evolution $\Phi^{se} (\rho^{se})$ is the same as for Eq.~\eqref{eq:lindbladevo}. 

In order to perform DD, the experimenter needs to keep a clock-like memory of which operation to perform at which time. However, this is not required to induce a DFS, which is akin to a change of basis transformation. In order to delineate the abilities required to perform DD and to induce a DFS, we create a sub-theory $\mathsf{P}_{\hat{\mathbbm{W}}} \subset \mathsf{D}_{\hat{\mathbbm{W}}} \subset \mathsf{Q}_{\hat{\mathbbm{W}}}$ where the free superprocesses cannot use a clock-like memory resource of which operation to perform at which time. Hence, at any given moment, the operations performed $\mathcal{V}^{s}$ and $\mathcal{W}^{s}$ are identical,
\begin{equation}
    \llbracket \mathbf{T}_n | \mathbf{Z}_n = \mathcal{V}^{s} \circ \mathcal{W}^{s} \circ \mathcal{T}^{se}_{n:n-1} \circ_{e} \mathcal{V}^{s} \circ_{e} \dots \circ_{e} \mathcal{W}^{s} \circ \mathcal{T}^{se}_{1:0} \circ \mathcal{V}^{s} \circ \mathcal{W}^{s}.
\end{equation}
Physically, $\mathcal{V}^{s}$ and $\mathcal{W}^{s}$ are part of the same operation, since they occur at the same instant in time. The free process tensors for sub-theories of fixed $\hat{n}$ are stationary processes. However, once coarse-graining is also considered, the free resources are still zero capacity channels.

Clearly, not all processes will have the necessary symmetries to induce a DFS. However, it is not hard to see what is going on when this technique does work. 
\begin{example}
As a prototypical example where a DFS can be induced, take $\mathcal{L}$ for an qubit under going simultaneous rotation $H= \sigma_{x}$, and dephasing $L= \sqrt{\gamma}\sigma_z$. A evolution of small duration $\tau$ can be written as 
\begin{equation} \label{eq:rotateanddephase}
\begin{aligned}
       \Phi (\rho)   =  \rho 
    -i\tau [\sigma_{x},\rho] 
     + \tau \gamma (  \sigma_z  \rho \sigma_z   -  \rho).
\end{aligned}
\end{equation}
Acting jointly, the rotation and dephasing destroys all mutual information. However, if the experimenter applies identical pulses $\mathcal{V}(\rho)= \sigma_z \rho \sigma_z$ spaced $\tau$ apart, an evolution of $2\tau$ becomes
\begin{equation} 
\begin{aligned}
       \mathcal{V} \circ \Phi  \circ \mathcal{V} \circ \Phi  \circ \rho  =  \rho 
    + 2\tau \gamma (    \sigma_z  \rho \sigma_z    -  \rho^{se} ),
\end{aligned}
\end{equation}
which corresponds to dephasing alone. This is known to have a 1D DFS along the $z$-axis of the Bloch sphere. Hence, we have created a decoherence free subspace. Here we see a DD-like sequence has resulted in an increase in mutual information. However, the `resource' in the noise process was a symmetry in the Lindblad jump operator, as opposed to non-Markovianity.
\end{example}

This technique still relies on rapid sequences of operations akin to DD. However, unlike was seen in Sup~\ref{sup:markovianisation}, the benefit does not derive from the elimination of non-Markovian evolution. The pre-requisite for a DFS is symmetries in the underlying dynamics. However, we do not preclude that symmetries might also be helpful for DD. Furthermore, since $\mathsf{P}_{\hat{\mathbbm{W}}} \subset \mathsf{D}_{\hat{\mathbbm{W}}}$, there is no reason why an experimenter operating within $\mathsf{D}_{\hat{\mathbbm{W}}}$ cannot simultaneously harness both resources. This may also partly explain the significant improved performance of our pulse optimisation methods over traditional DD.

\subsubsection{Resource Theory for QZE}

One might want to know what an experimenter is capable given even more stringent constraints on the types of operations allowed, for example that all actions of the experimenter are destructive to quantum information. The allowed superprocesses in the sub-theory $\mathsf{C}_{\hat{\mathbbm{W}}} \subset \mathsf{Q}_{\hat{\mathbbm{W}}}$ act as
\begin{equation} \label{eq:0bprocesses}
    \llbracket \mathbf{T}_n | \mathbf{Z}_n =\sum_{\{i_k,i_l\}} \mathcal{M}^{s}_{n_k} \circ \mathcal{M}^{s}_{n_l} \circ \mathcal{T}^{se}_{n:n-1} \circ \mathcal{M}^{s}_{{n-1}_k} \circ_e \dots \circ_e \mathcal{M}^{s}_{1_l} \circ \mathcal{T}^{se}_{1:0} \circ \mathcal{M}^{s}_{0_k} \circ \mathcal{M}^{s}_{0_l},
\end{equation}
where $\mathcal{M}^{s}_{i_k}=\mathcal{M}^{s}_{j_k}$ for all $i,j$, and $\mathcal{M}^{s}_{i_l}=\mathcal{M}^{s}_{j_l}$ for all $i,j$  $\mathcal{M}^{s}_{k}$. All $\mathcal{M}^{s}_{a}$ are entanglement breaking channels $\mathcal{M}^{s}_{a}(\rho^{se}) = \pi_a \circ \text{tr}_a \{ \Pi_a \rho^{se} \}$, with $\Pi_a$ as a POVM measurement $\pi_a$ as a conditioned re-preparation pair. Utilising the quantum Zeno effect simply requires setting all $\mathcal{M}^{s}_{a}$ to be identical, and ensuring that the operators $\Pi_k$ are projectors onto re-prepared states $\pi_k$.

In $\mathsf{C}_{\hat{\mathbbm{W}}}$, choosing to act may be more detrimental than doing nothing for the preservation of quantum information, but a classical measurement outcome can be preserved via the QZE. As with $\mathsf{P}_{\hat{\mathbbm{W}}}$ set of free resources is still zero capacity channels.

\subsection{Notation summary} \label{sup:notationsummary}

In this section is a reference for the notation used throughout this work. The script used for upper case letters has a meaning: calligraphic letters refer to traditional quantum maps (superoperators), sans-serif letters indicate sets, boldface letters correspond to higher order maps/quantum combs, and regular math text is used for functions to real numbers. Lower case letters are typically used for indices, and hats are used to denote that the index is a set rather than a number. 
\begin{table*}[htbp] \label{tabnotationsummary}
\centering
\begin{tabular}{|l|l|}
\hline
\textbf{Object} & \textbf{Meaning}                                                              \\ \hline \hline
${\hat{n}}$    & Times for intermediate interventions                                                              \\ \hline
$\mathbf{T}_{\hat{n}}$    & Process tensor                                                              \\ \hline
$\mathbf{A}_{\hat{n}}$   & Control sequence                                                             \\ \hline
$\mathbf{Z}_{\hat{n}}$   & Superprocess                                                     \\ \hline
$\mathbf{I}_{\hat{n} \setminus \hat{m}}$  & Temporal coarse-graining                                                    \\ \hline
$\mathsf{S}_{\hat{\mathbbm{W}}}$   & Resource theory of temporal resolution                                          \\ \hline
$\mathsf{T}_{\hat{\mathbbm{W}}}$   & Process tensors from $\mathsf{S}_{\hat{\mathbbm{W}}}$                                             \\ \hline
$\mathsf{Z}_{\hat{\mathbbm{W}}}$   & Superprocesses from $\mathsf{S}_{\hat{\mathbbm{W}}}$                                             \\ \hline
$\mathsf{I}_{\hat{\mathbbm{W}}}$   & Coarse-grainings from $\mathsf{S}_{\hat{\mathbbm{W}}}$                                     \\ \hline
$\mathsf{S}_{\hat{n}}$   & Sub-theory for fixed $\hat{n}$/RTQP                                          \\ \hline
$\mathsf{T}_{\hat{n}}$   & Process tensors from $\mathsf{S}_{\hat{n}}$                                             \\ \hline
$\mathsf{Z}_{\hat{n}}$   & Superprocesses from $\mathsf{S}_{\hat{n}}$                                             \\ \hline

$\mathsf{Q}_{\hat{\mathbbm{W}}}$   & Theory for information preservation                                          \\ \hline
$\mathsf{D}_{\hat{\mathbbm{W}}}$   & Theory for DD                                          \\ \hline
$\mathsf{P}_{\hat{\mathbbm{W}}}$   & Theory for DFS inducement                                          \\ \hline
$\mathsf{C}_{\hat{\mathbbm{W}}}$   & Theory for QZE                                          \\ \hline

$\mathcal{V}_{t_i}$   & Pre-operation from superprocess at $t_i$                                            \\ \hline
$\mathcal{W}_{t_i}$   & Post-operation from superprocess at $t_i$                                              \\ \hline
$I$   & Total mutual information                                              \\ \hline
$M$   & Markov information                                              \\ \hline
$N$   & Multitime non-Markovianity                                             \\ \hline
$I_{\emptyset}$   & Channel-level mutual information                                              \\ \hline
\end{tabular}
\vspace{0.1cm}
\caption{Summary of notation for the most common objects featured in this work.
}
\end{table*}

\end{document}